\documentclass[11pt,envcountsect]{llncs}
\usepackage[margin=1in]{geometry}
\usepackage{amsmath,amssymb,xcolor,times}
\usepackage[final]{graphicx}
\usepackage{algorithm}
\usepackage[noend]{algpseudocode}
\floatname{algorithm}{Mechanism}




\def\eps{\varepsilon}
\def\union{\cup}

\def\cut{\cap}

\def\floor#1{\mathop{\left\lfloor#1\right\rfloor}}

\def\Prob{\mathbb{P}\mathrm{r}}
\def\Exp{\mathbb{E}}

\newcommand{\reals}{\mathbb{R}}

\def\abs#1{\left|#1\right|}
\def\OPT{\mbox{\sc opt}}
\def\Ver{\mathrm{Ver}}
\def\ExpVer{\Exp\mathrm{Ver}}
\def\cost{\mathrm{cost}}
\def\e{\epsilon}
\def\norm#1{\|#1\|}
\def\Pow{\mathrm{Pow}}
\def\PPow{\mathrm{PartPow}}
\def\Expo{\mathrm{Expo}}

\def\NP{\mathbf{NP}}

\newcommand{\matr}[1]{\vec{#1}}
\newcommand{\ver}{\mathrm{ver}}
\def\exp{\mathrm{exp}}
\spnewtheorem*{proofsketch}{Proof sketch}{\itshape}{\rmfamily}


\title{Who to Trust for Truthfully Maximizing Welfare?}
%

\author{Dimitris Fotakis\inst{1} \and Christos Tzamos\inst{2}
\and Emmanouil Zampetakis\inst{2}}

\institute{%
%
National Technical University of Athens, 157 80 Athens, Greece.\\
\and
%
%
Massachusetts Institute of Technology, Cambridge, MA 02139\\
\email{fotakis@cs.ntua.gr}, \email{tzamos@mit.edu}, \email{mzampet@mit.edu}}

\begin{document}

\maketitle

\begin{abstract}
We introduce a general approach based on \emph{selective verification} and obtain approximate mechanisms without money for maximizing the social welfare in the general domain of utilitarian voting. Having a good allocation in mind, a mechanism with verification selects few critical agents and detects, using a verification oracle, whether they have reported truthfully. If yes, the mechanism produces the desired allocation. Otherwise, the mechanism ignores any misreports and proceeds with the remaining agents. We obtain randomized truthful (or almost truthful) mechanisms without money that verify only $O(\ln m/\eps)$ agents, where $m$ is the number of outcomes, independently of the total number of agents, and are $(1-\eps)$-approximate for the social welfare. We also show that any truthful mechanism with a constant approximation ratio needs to verify $\Omega(\log m)$ agents. A remarkable property of our mechanisms is \emph{robustness}, namely that their outcome depends only on the reports of the truthful agents.
\end{abstract}

\thispagestyle{empty}%
\setcounter{footnote}{0}%
\pagestyle{plain}
\pagenumbering{arabic}
\setcounter{page}{0}%


\section{Introduction}
\label{s:intro}

Let us start with a simple mechanism design setting where we place a facility on the line based on the preferred locations of $n$ strategic agents. Each agent aims to minimize the distance of her preferred location to the facility and may misreport her location, if it finds it profitable. Our objective is to minimize the maximum distance of any agent to the facility and we insist that the facility allocation should be \emph{truthful}, i.e., no agent can improve her distance by misreporting her location.
The optimal solution is to place the facility at the average of the two extreme locations. However, if we cannot incentivize truthfulness through monetary transfers (e.g., due to ethical or practical reasons, see also \cite{PT09}), the optimal solution is not truthful. E.g., the leftmost agent has an incentive to declare a location further on the left so that the facility moves closer to her preferred location. In fact, for the infinite real line, the optimal solution leads to no equilibrium declarations for the leftmost and the rightmost agent.
The fact that in this simple setting, the optimal solution is not truthful was part of the motivation for the research agenda of \emph{approximate mechanism design without money}, introduced by Procaccia and Tennenholtz \cite{PT09}. They proved that the best deterministic (resp. randomized) truthful mechanism achieves an approximation ratio of $2$ (resp. $3/2$) for this problem.

Our work is motivated by the simple observation that the optimal facility allocation can be implemented truthfully if we inspect the declared locations of the two extreme agents and verify that they coincide with their preferred locations (e.g., for their home address, we may mail something there, visit them or ask for a certificate). Inspection of the two extreme locations takes place before we place the facility. If both agents are truthful, we place the facility at their average. Otherwise, we ignore any false declarations and recurse on the remaining agents. This simple modification of the optimal solution is truthful, because non-extreme agents do not affect the facility allocation, while the two extreme agents cannot change the facility location in their favor, due to the verification step.
Interestingly, the Greedy algorithm for $k$-Facility Location (see e.g., \cite[Sec.~2.2]{WS10}) also becomes truthful if we verify the $k$ agents allocated a facility and ignore any liars among them (see Section~\ref{s:facility}). Greedy is $2$-approximate for minimizing the maximum agent-facility distance, in any metric space, while \cite{FT12} shows that there are no deterministic truthful mechanisms (without verification) that place $k \geq 2$ facilities in tree metrics and achieve a bounded (in terms of $n$ and $k$) approximation ratio.

\smallskip\noindent{\bf Selective Verification: Motivation and Justification.}
Verifying the declarations of most (or all) agents and imposing large penalties on liars should suffice for the truthful implementation of socially efficient solutions (see e.g., \cite{CESY12}). But in the facility location examples above, we truthfully implement the optimal (or an almost optimal) solution by verifying a very small number of agents (independent of $n$) and by using a mild and reasonable penalty. Apparently, verification is successful in these examples because it is \emph{selective}, in the sense that we verify only the critical agents for the facility allocation and fully trust the remaining agents.

Motivated by this observation, we investigate the power of \emph{selective verification} in approximate mechanism design without money in general domains. We consider the general setting of \emph{utilitarian voting} with $m$ outcomes and $n$ strategic agents, where each agent has a nonnegative utility for each outcome. We aim at truthful mechanisms that verify few critical agents and approximate the maximum \emph{social welfare}, i.e., the total utility of the agents for the selected outcome. Our goal is to determine the best approximation guarantee achievable by such mechanisms with limited selective verification, so that we obtain a better understanding of the power of limited verification in mechanism design without money. Our main result is a smooth and essentially best possible tradeoff between the approximation ratio and the number of agents verified by randomized truthful (or almost truthful) mechanisms with selective verification.

Our general approach is to start from a (non-truthful) allocation rule $f$ with a good approximation guarantee for the social welfare and to devise a mechanism $F$ without money that incentivizes truthful reporting by selective verification. The mechanism $F$ first selects an outcome $o$ and an (ideally small) verification set of agents according to $f$ (e.g., for facility location on the line, the allocation rule $f$ is to take the average of the two extreme locations, the selected outcome $o$ is the average for the particular instance and the verification set consists of the two extreme agents). Next, $F$ detects, through the use of a \emph{verification oracle}, whether the selected agents are truthful. If yes, the mechanism outputs $o$. Otherwise, $F$ excludes any misreporting agents and continues with the remaining agents.
We note that $F$ asks the verification oracle for a single bit of information about each agent verified: whether she has reported truthfully or not. $F$ excludes misreporting agents from the allocation rule, so it does not need to know anything else about their true utilities.

Instead of imposing some explicit (i.e., monetary) penalty to the agents caught lying by verification, the mechanism $F$ just ignores their reports, a reasonable reaction to their revealed attempt of manipulating the mechanism. We underline that liars still get utility from the selected outcome. It just happens that their preferences are not taken into account in the allocation. For these reasons, the penalty of exclusion from the mechanism is mild and compatible with the spirit of mechanisms without money.

Selective verification allows for an explicit quantification of the amount of verification and is applicable to essentially any domain. From a theoretical viewpoint, we believe that it can lead to a deep and delicate understanding of the power of limited verification in approximate mechanism design without money. From a practical viewpoint, the extent to which selective verification and the penalty of ignoring false declarations are natural very much depends on the particular domain / application. E.g., for applications of facility location, where utility is usually determined by the home address of each agent, public authorities have simple ways of verifying it. E.g., registration to a public service usually requires a certificate of address. Failure to provide such a certificate usually implies that the application is ignored, with no penalties attached.

\smallskip\noindent{\bf Technical Approach and Results.}
A (randomized) mechanism with selective verification is \emph{truthful} (in expectation) if no matter the reports of the other agents and whether they are truthful or not, truthful reporting maximizes the (expected) utility of each agent from the mechanism. Two nice features of our allocation rules (and mechanisms) is that they are \emph{strongly anonymous} and \emph{scale invariant}. The former means that the allocation only depends on the total agents' utility for each outcome (and not on each agent's contribution) and the latter means that multiplying all valuations by a positive factor does not change the allocation.

For mechanisms with selective verification, truthfulness is an immediate consequence of two natural (and desirable) properties: robustness and voluntary participation.
Robustness is a strong property made possible by selective verification. A mechanism with verification $F$ is \emph{robust} if $F$ completely ignores any misreports and the resulting probability distribution is determined by the reports of truthful agents only. So, if $F$ is robust, no misreporting agent can change the resulting allocation whatsoever.
We achieve robustness through obliviousness of $F$ to the declarations of misreporting agents not verified (see also \cite[Sec.~5]{FT10}). Specifically, a randomized mechanism $F$ is \emph{oblivious} if the probability distribution of $F$ over the outcomes, conditional on the event that no misreporting agents are included in the verification set, is identical to the probability distribution of $F$ if all misreporting agents are excluded from the mechanism. 
By induction on the number of agents, we show that obliviousness is a sufficient condition for robustness (Lemma~\ref{l:oblivious}).
To the best of our knowledge, this is the first time that robustness (or a similar) property is considered in mechanism design. We defer the discussion about robustness and its comparison to truthfulness to Section~\ref{s:app:concl}.
%

Robustness leaves each agent with essentially two strategies: either she reports truthfully and participates in the mechanism or she lies and is excluded from the mechanism. An allocation rule satisfies \emph{voluntary participation} (or simply, \emph{participation}\,) if each agent's utility when she is truthful is no less than her utility when she is excluded from the mechanism.
%
%
Robustness and participation immediately imply truthfulness%
\footnote{The reader is invited to verify that the average mechanism for facility location on the line is scale invariant, not strongly anonymous, oblivious (and thus, robust) and satisfies participation. Robustness and participation imply that the mechanism is truthful.}
(Lemma~\ref{l:robustness+participation}).
We prove that strongly anonymous randomized allocation rules that satisfy participation are closely related to \emph{maximal in distributional range} rules (see e.g., \cite{DD13,LS11}), i.e., allocation rules that maximize the expected social welfare over a (not necessarily proper) subset of probability distributions over outcomes.
Specifically, we show that maximizing the social welfare is sufficient for participation (Lemma~\ref{l:participation}), while for scale invariant and continuous allocation rules, it is also necessary (Lemma~\ref{l:chara_weak_parti}).
%

As a proof of concept, we apply selective verification to $k$-Facility Location problems (Section~\ref{s:facility}), which have served as benchmarks in approximate mechanism design without money (see e.g., \cite{PT09,AFPT09,LSWZ10,FT13} and the references therein). We show that Greedy (\cite[Section~2.2]{WS10}) and Proportional \cite{LSWZ10} satisfy participation and are robust and truthful, if we verify the $k$ agents allocated the facilities (Theorems~\ref{th:k-center}~and~\ref{th:k-facility-location}). 

For the general setting of utilitarian voting, we aim at strongly anonymous randomized allocation rules that are maximal in distributional range, so that they satisfy participation, and oblivious, so that they achieve robustness. 
In Section~\ref{s:power}, we present the \emph{Power mechanism}, which selects each outcome $o$ with probability proportional to the $\ell$-th power of the total utility for $o$, where $\ell \geq 0$ is a parameter. Intuitively, Power provides a smooth transition from the (robust and truthful) uniform allocation, where each outcome is selected with probability $1/m$, for $\ell = 0$, to the optimal solution, for $\ell \rightarrow \infty$. Power approximately maximizes the social welfare and approximately satisfies participation. It is also scale invariant and, due to the proportional nature of its probability distribution, is oblivious and robust. Power can be implemented with selective verification of at most $\ell$ agents. Using $\ell =\ln m/\eps$, we obtain that for any $\eps > 0$, Power with selective verification of $\ln m/\eps$ agents, is robust, $\eps$-truthful and $(1-\eps)$-approximate for the social welfare (Theorem~\ref{th:power}).

To quantify the improvement, we show that without verification, in the general setting of utilitarian voting, the best possible approximation ratio of any randomized truthful mechanism is $1/m$ (see Section~\ref{s:app:gen_lower_bound}). In a slightly more restricted setting with injective valuations \cite{RM14}, the best known randomized truthful mechanism has an approximation ratio of $\Theta(m^{-3/4})$ and the best possible approximation ratio is $O(m^{-2/3})$. Moreover, the amount of verification is essentially best possible, since we prove that any truthful mechanism with constant approximation ratio needs to verify $\Omega(\log m)$ agents  (Theorem~\ref{th:lower_bound}). We essentially match this lower bound, that applies to all mechanisms, by strongly anonymous and scale invariant mechanisms.

In Section~\ref{s:impossibility}, we characterize the class of scale invariant and strongly anonymous truthful mechanisms that verify $o(n)$ agents and achieve \emph{full allocation}, i.e., they result in some outcome with probability $1$. We prove that any such mechanism must employ a constant allocation rule, i.e., a probability distribution that does not depend on the agent declarations. Therefore, such mechanisms cannot achieve nontrivial approximation guarantees. 
Our characterization reveals an interesting and deep connection between continuity (which is necessary for low verification), full allocation, and maximal in distributional range mechanisms.

Relaxing some of the properties in the characterization, we can obtain (fully) truthful mechanisms with low verification. Relaxing full allocation, we obtain the \emph{Partial Power mechanism} (Section~\ref{s:partial}), and relaxing scale invariance, we obtain the \emph{Exponential mechanism} (Section~\ref{s:exponential}). Both are truthful, robust. For any $\eps > 0$, they verify $O(\ln m/\eps^2)$ agents in the worst-case and $\ln m/\eps$ agents in expectation, respectively. Partial Power is $(1-\eps)$-approximate, while Exponential has an additive error of $\eps n$. For Exponential, we can have an approximation ratio of $1-\eps$, given a constant factor estimation of the value maximum social welfare. All the mechanisms can be implemented in polynomial or expected polynomial time in $n$ and $m$.

\begin{figure}[t]
{\fontsize{10pt}{11.5pt}\selectfont\centerline{\begin{tabular}{ || c | c | c || }%
    \hline\hline
    \textbf{Power} & \textbf{Partial Power} & \textbf{Exponential}\\
    \hline\hline
    \textbf{$\eps$-truthful} & truthful & truthful\\
    full allocation & \textbf{partial allocation} & full allocation \\
    scale invariant & scale invariant & \textbf{not scale invariant}\\
    robust & robust & robust\\
\ \ \ $(1-\eps)$-approximation\ \ \  &\ \ \ $(1-\eps)$-approximation\ \ \ &
\ \ \ additive error $\eps n$\ \ \ \\
    verification $\ln m/\eps$ &
    \ \ \ verification $O(\ln m/\eps^2)$\ \ \ &
    \ \ \ expected verification $\ln m/\eps$\ \ \ \\
    \hline\hline
\end{tabular}}}
\vspace*{-0.2cm}\caption{\label{fig:comparison}The main properties of our mechanisms. Partial allocation means that the mechanism may result in an artificial outcome of valuation $0$ for all agents (e.g., we may refuse to allocate anything, for private goods, or to setup the service, for public goods). We depict in bold the property whose relaxation allows the mechanism to escape the characterization of Theorem~\ref{th:impossibility}.}
\vspace*{-0.6cm}\end{figure}

The properties of our mechanisms are summarized in Fig.~\ref{fig:comparison}. In all cases, we achieve a smooth tradeoff between the number of agents verified and the quality of approximation. Rather surprisingly, the verification depends on $m$, the number of outcomes, but not on $n$, the number $n$ of agents. Also, we discuss (Section~\ref{s:app:applications}) an application to the Combinatorial Public Project problem (see e.g., \cite{SS08,PSS08}).

\smallskip\noindent{\bf Related Work.}
Due to lack of space, we restrict our attention to the most relevant previous work (see also Section~\ref{s:app:previous}).
Previous work \cite{AK08,CESY12,FZ13} demonstrated that partial verification is essentially useless in the design of truthful mechanisms. Therefore, verification should be exact, i.e., it should forbid even negligible deviations from the truth, at least for some types of misreports. Thus, recent research has focused on the power of exact verification schemes that use either limited or costly verification and mild penalties.

In this direction, \cite{CESY12} introduces \emph{probabilistic verification} as a general framework for the use of verification in mechanism design.
They show that almost any allocation rule can be implemented whit a truthful mechanism with money and probabilistic verification, provided that (i) the detection probability is positive for all agents and for negligible deviations from the truth; and that (ii) each liar incurs a sufficiently large penalty. Here, we instead use selective verification and the reasonable penalty of ignoring misreports and we verify only a small subset of agents instead of almost all of them.


Our approach of selective verification is conceptually similar to the setting of \cite{BDL14}, which considers truthful allocation of an indivisible good without money and with \emph{costly selective verification} and seeks to maximize the social welfare minus the verification cost. Nevertheless, our setting and our mechanisms are much more general, we resort to approximate mechanisms (rather than exact ones) and treat the verification cost as a different efficiency criterion (instead of incorporating it in the social objective). Moreover, selective verification bears some resemblance to \cite{HR08}, which considers truthful mechanisms with money for single-unit and multi-unit auctions and aims at a good tradeoff between the social welfare and the payments charged.

There is a significant amount of work on mechanism design with verification where either the structure of the optimal mechanism is characterized (see e.g., \cite{SV14}), or mechanisms with money and verification are shown to achieve better approximation guarantees than mechanisms without verification (see e.g., \cite{ADPP09,KV10}). To the best of our knowledge, our work is the first where truthful mechanisms with limited selective verification (instead of partial or ``one-sided'' verification applied to all agents with positive utility) are shown to achieve best possible approximation guarantees for the general domain of utilitarian voting.

From a technical viewpoint, the idea of partial allocation in approximate mechanism design without money has been employed with remarkable success in \cite{CGG13}. However, this technique can achieve very restricted results in maximizing social welfare without verification in a very general setting as utilitarian voting (see Section~\ref{s:app:gen_lower_bound}). Moreover, our motivation for using the exponential mechanism with selective verification came from \cite{NST10,HK12}, due their tradeoffs between the approximation guarantee and the probability of the gap mechanism (resp. amount of payments) required for truthfulness.

\clearpage

\section{Notation and Preliminaries}
\label{s:prelim}

For any integer $m \geq 1$, we let $[m] \equiv \{ 1, \ldots, m \}$. For an event $E$, $\Prob[E]$ denotes the probability of $E$. For a random variable $X$, $\Exp[X]$ denotes the expectation of $X$. For a finite set $S$, $\Delta(S)$ is the unit simplex over $S$, which includes all probability distributions over $S$.
For a vector $\vec{x} = (x_1, \ldots, x_m)$ and some $j \in [m]$, $\vec{x}_{-j}$ is $\vec{x}$ without $x_j$. For a nonempty $S \subseteq [m]$, $\vec{x}_S = (x_j)_{j\in S}$ is the projection of $\vec{x}$ to $S$.
For vectors $\vec{x}$ and $\vec{y}$, $\vec{x}+\vec{y} = (x_1+y_1, \ldots, x_m+y_m)$ denotes their coordinate-wise sum.
%
%
For a vector $\vec{x}$ and an $\ell \geq 0$, $\vec{x}^{\ell} = (x_1^\ell, \ldots, x_m^\ell)$ is the coordinate-wise power of $\vec{x}$ and
$\norm{\vec{x}}_\ell = ( \sum_{j=1}^m x_j^\ell)^{1/\ell}$ is the $\ell$-norm of $\vec{x}$. For convenience, we let $\norm{\vec{x}}_1 = | \vec{x} |$. Moreover, $\norm{\vec{x}}_\infty = \max_{j\in[m]}\{ x_j \}$ is the infinity norm of $\vec{x}$.

\smallskip\noindent{\bf Agent Valuations.}
We consider a set $N$ of $n$ strategic agents with private preferences over a set $O$ of outcomes. We focus on combinatorial problems, assume that $O$ is finite and let $m \equiv |O|$ be the number of different outcomes.
The preferences of each agent $i$ are given by a \emph{valuation function} or \emph{type} $\vec{x}_i : O \to \reals_{\geq 0}$ that $i$ seeks to maximize. The set of possible valuations is the \emph{domain} $D = \reals_{\geq 0}^m$. We usually regard each valuation as a vector $\vec{x}_i = (x_i(j))_{j\in[m]}$, where $x_i(j)$ is $i$'s valuation for outcome $j$.
A valuation profile is a tuple $\matr{x} = (\vec{x}_1, \ldots, \vec{x}_n)$ consisting of the agents' valuations.
Given a valuation profile $\matr{x}$, $\vec{w}(\matr{x}) = \vec{x}_1+\cdots+\vec{x}_n$ is the vector of the total valuation, or simply, of the \emph{weight}, for each outcome. We usually write $\vec{w}$, instead of $\vec{w}(\matr{x})$, when $\matr{x}$ is clear from the context.

\smallskip\noindent{\bf Allocation Rules.}
A (randomized) \emph{allocation rule} $f : D^n \to \Delta(O)$ maps each valuation profile to a probability distribution over $O$. To allow for exclusion of some agents from $f$, we always assume that $f$ is well defined for any number of agents $n'$, $0 \leq n' \leq n$.
%
%
We regard the probability distribution of $f$ on input $\matr{x}$ as a vector $f(\matr{x}) = (f_{j}(\matr{x}))_{j\in[m]}$, where $f_j(\matr{x})$ is the probability of outcome $j$.
Then, the expected utility of agent $i$ from $f(\matr{x})$ is equal to the dot product $\vec{x}_i \cdot f(\matr{x})$.
An allocation rule is \emph{constant} if for all valuation profiles $\vec{x}$ and $\vec{y}$, $f(\vec{x}) = f(\vec{y})$, i.e., the probability distribution of $f$ in independent of the valuation profile. E.g., the uniform allocation rule, that selects each outcome with probability $1/m$, is constant.

A rule $f$ achieves \emph{full allocation} if for all $\matr{x}$, $|f(\matr{x})| = 1$, and \emph{partial allocation} if $|f(\matr{x})| < 1$, for some $\matr{x}$. A full allocation rule always outputs an outcome $o \in O$, while a partial allocation rule may also output an artificial (or \emph{null}) outcome not in $O$. We assume that all agents have valuation $0$ for the null outcome.

 Two nice properties of our allocation rules is that they are strongly anonymous and (most of them) scale invariant.
An allocation rule $f$ is \emph{scale invariant} if for any valuation profile $\matr{x}$ and any $\alpha \in \reals_{> 0}$, $f(\alpha\matr{x}) = f(\matr{x})$, i.e., scaling all valuations in $\matr{x}$ by $\alpha$ does not change the allocation.
An allocation rule $f$ is \emph{strongly anonymous} if $f(\matr{x})$ depends only on the vector $\vec{w}(\matr{x})$ with outcome weights. Formally, for all valuation profiles $\matr{x}$ and $\matr{y}$ (possibly with a different number of agents) with $\vec{w}(\matr{x}) = \vec{w}(\matr{y})$, $f(\matr{x}) = f(\matr{y})$.
Hence, a strongly anonymous rule can be regarded as a one-agent allocation rule $f : D \to \Delta(O)$.
%
%

\smallskip\noindent{\bf Approximation Guarantee.}
The social efficiency of an allocation rule $f$ is evaluated by a \emph{social objective function} $g : D^n \times O \to \reals_{\geq 0}$.
We mostly consider the objective of \emph{social welfare}, where we seek to maximize $\sum_{i=1}^n \vec{x}_i \cdot f(\matr{x})$.
The optimal social welfare of a valuation profile $\matr{x}$ is $\| \sum_{i=1}^n \vec{x}_i \|_\infty$\,.
An allocation rule $f$ has \emph{approximation ratio} $\rho \in (0, 1]$ (resp. \emph{additive error} $\delta > 0$) if for all valuation profiles $\matr{x}$,
\( \sum_{i=1}^n \vec{x}_i \cdot f(\matr{x}) \geq \rho \left\| \sum_{i=1}^n \vec{x}_i \right\|_\infty \)
(resp.
\( \sum_{i=1}^n \vec{x}_i \cdot f(\matr{x}) \geq \left\| \sum_{i=1}^n \vec{x}_i \right\|_\infty - \delta \)
).

\smallskip\noindent{\bf Voluntary Participation and MIDR.}
An allocation rule $f$ satisfies \emph{voluntary participation} (or simply, \emph{participation}) if for any agent $i$ and any valuation profile $\matr{x}$,
\( \vec{x}_i \cdot f(\matr{x}) \geq \vec{x}_i \cdot f(\matr{x}_{-i})
\),
i.e., $i$'s utility does not decrease if she participates in the mechanism.
For some $\e \in (0, 1]$, $f$ satisfies \emph{$\e$-participation} if for all agents $i$ and valuation profiles $\matr{x}$,
\( \vec{x}_i \cdot f(\matr{x}) \geq \e\,\vec{x}_i \cdot f(\matr{x}_{-i}) \).
An allocation rule $f$ is \emph{maximal in distributional range} (MIDR) if there exist a range $Z$ of (possibly partial) allocations and a function $h : Z \to \reals$ such that for all valuation profiles $\vec{x}$,
$f(\matr{x}) = \arg\max_{\vec{z} \in Z} \sum_{i=1}^n \vec{x}_i \cdot \vec{z} + h(\vec{z})$ (see e.g., \cite{DD13,LS11}).
The following show that MIDR is a sufficient condition for participation and that for scale invariant and strongly anonymous continuous allocations, MIDR is also necessary (the proofs can be found in Section~\ref{s:app:participation} and Section~\ref{s:app:chara_weak_parti}).

\begin{lemma}\label{l:participation}
Let $f$ be any MIDR allocation rule. Then, $f$ satisfies participation.
\end{lemma}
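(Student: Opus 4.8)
The plan is to run the standard ``swap'' argument for affine-maximizer-type rules. Fix an agent $i$ and a valuation profile $\matr{x}$, and abbreviate $\vec{z} = f(\matr{x})$ and $\vec{z}' = f(\matr{x}_{-i})$; note $\vec{z}, \vec{z}' \in Z$. We want to play off the optimality of $\vec{z}$ with respect to the $n$-agent profile $\matr{x}$ against the optimality of $\vec{z}'$ with respect to the $(n-1)$-agent profile $\matr{x}_{-i}$, and use that the function $h$ is the same in both optimizations. This cancellation of $h$ is the whole point: it is what makes the manipulable term disappear.

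\textbf{Key steps.} First, since $\vec{z} = \arg\max_{\vec{w} \in Z} \sum_{k=1}^n \vec{x}_k \cdot \vec{w} + h(\vec{w})$, applying this with the competitor $\vec{z}'$ gives
\[
  \sum_{k=1}^n \vec{x}_k \cdot \vec{z} + h(\vec{z}) \;\geq\; \sum_{k=1}^n \vec{x}_k \cdot \vec{z}' + h(\vec{z}').
\]
Second, since $\vec{z}' = \arg\max_{\vec{w} \in Z} \sum_{k\neq i} \vec{x}_k \cdot \vec{w} + h(\vec{w})$, applying this with the competitor $\vec{z}$ gives
\[
  \sum_{k\neq i} \vec{x}_k \cdot \vec{z}' + h(\vec{z}') \;\geq\; \sum_{k\neq i} \vec{x}_k \cdot \vec{z} + h(\vec{z}).
\]
Adding the two inequalities, the terms $h(\vec{z})$ and $h(\vec{z}')$ cancel from both sides, and after subtracting the common sum $\sum_{k\neq i}\vec{x}_k\cdot\vec{z}' + \sum_{k\neq i}\vec{x}_k\cdot\vec{z}$ from both sides we are left with
\[
  \vec{x}_i \cdot \vec{z} \;\geq\; \vec{x}_i \cdot \vec{z}',
\]
which is exactly $\vec{x}_i \cdot f(\matr{x}) \geq \vec{x}_i \cdot f(\matr{x}_{-i})$, i.e., participation.

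\textbf{Main obstacle.} Honestly, there is no real obstacle here; the argument is a two-line inequality chase once the $h$-cancellation is spotted. The only point that deserves a word of care is that the definition of MIDR must be read so that the \emph{same} range $Z$ and the \emph{same} penalty function $h$ are used for every admissible number of agents $n' \le n$ (consistent with the standing convention in the preliminaries that $f$ is well defined for all $0 \le n' \le n$); without this, the cancellation step would fail. If $\arg\max$ is not unique, one simply reads $\vec{z}$ and $\vec{z}'$ as the particular maximizers that $f$ returns, and both inequalities still hold verbatim. I would also remark that the same computation shows MIDR implies the stronger statement $\vec{x}_i\cdot f(\matr{x}) + h(f(\matr{x})) \ge \vec{x}_i\cdot f(\matr{x}_{-i}) + h(f(\matr{x}_{-i}))$, but participation as stated only needs the weaker inequality above.
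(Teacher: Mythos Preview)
Your proof is correct and essentially identical to the paper's: both apply the MIDR optimality condition once at $\matr{x}$ and once at $\matr{x}_{-i}$, then add the two inequalities so that the $h$-terms and the shared sums cancel, leaving $\vec{x}_i\cdot f(\matr{x})\ge \vec{x}_i\cdot f(\matr{x}_{-i})$. One small caveat on your closing remark: the ``stronger'' inequality $\vec{x}_i\cdot f(\matr{x}) + h(f(\matr{x})) \ge \vec{x}_i\cdot f(\matr{x}_{-i}) + h(f(\matr{x}_{-i}))$ does \emph{not} follow from this computation, since after adding the two inequalities the $h$-terms cancel entirely and only the bare participation inequality remains.
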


\begin{lemma}\label{l:chara_weak_parti}
For any scale invariant and strongly anonymous continuous allocation rule $f$ that satisfies participation, there is a range $Z$ of (possibly partial) allocations such that $f(\vec{x}) = \arg \max_{\vec{z} \in Z} \vec{x} \cdot \vec{z}$.
\end{lemma}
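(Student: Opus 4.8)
The plan is to characterize the allocation rule $f$ as the argmax of a linear functional over a suitable range $Z$, building that range out of the values $f$ actually takes. Since $f$ is strongly anonymous, I may regard it as a one-agent rule $f : D \to \Delta(O)$ with $D = \reals_{\geq 0}^m$, and since it is scale invariant, $f(\vec{x})$ depends only on the direction of $\vec{x}$, so it suffices to understand $f$ on the part of the simplex $\{\vec{x} \in D : |\vec{x}| = 1\}$ (together with $\vec{x} = \vec 0$, handled separately or by continuity). The natural candidate is $Z = \overline{\{ f(\vec{x}) : \vec{x} \in D \}}$, the closure of the range of $f$; one then wants to show $f(\vec{x}) = \arg\max_{\vec{z} \in Z} \vec{x} \cdot \vec{z}$ for every $\vec{x}$.

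The key steps, in order. First, invoke Lemma~\ref{l:participation}'s converse direction is not available, so instead I would use the participation inequality directly: for a single agent, participation says $\vec{x} \cdot f(\vec{x}) \geq \vec{x} \cdot f(\text{empty profile})$; more usefully, strong anonymity lets me compare $f$ at different weight vectors. The crucial observation is a self-consistency / monotonicity property: if I take two agents with valuations $\vec{x}$ and $\vec{y}$, strong anonymity gives $f(\vec{x}, \vec{y}) = f(\vec{x}+\vec{y})$ (viewing the two-agent profile through its weight vector), and participation applied to the first agent yields $\vec{x} \cdot f(\vec{x}+\vec{y}) \geq \vec{x} \cdot f(\vec{y})$. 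Since $\vec{y}$ ranges over all of $D$ and $f$ is scale invariant, $f(\vec{y})$ ranges over (a dense subset of) the range $Z$, and $\vec{x}+\vec{y}$ can be made to have direction arbitrarily close to that of $\vec{x}$ by taking $\vec{y}$ small; by continuity of $f$, this gives $\vec{x} \cdot f(\vec{x}) \geq \vec{x} \cdot \vec{z}$ for all $\vec{z} \in Z$. That is exactly the statement that $f(\vec{x})$ maximizes $\vec{x} \cdot \vec{z}$ over $\vec{z} \in Z$. Second, I must verify $f(\vec{x}) \in Z$, which is immediate since $Z$ contains the range of $f$. Third, one should check the argmax is attained at $f(\vec{x})$ as claimed (not merely that $f(\vec{x})$ is \emph{a} maximizer) — the lemma only asserts existence of such a $Z$, so this is fine, but if uniqueness of the maximizer is needed anywhere, scale invariance and continuity ensure $Z$ is compact so the max is attained. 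Finally, handle the degenerate directions (coordinates of $\vec{x}$ equal to zero, or $\vec{x} = \vec 0$) by a limiting argument using continuity of $f$ on the closed domain.

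The main obstacle, I expect, is the step where I let $\vec{y}$ range to fill out $Z$ while keeping the direction of $\vec{x}+\vec{y}$ close to that of $\vec{x}$: scaling down $\vec{y}$ changes $\vec{x}+\vec{y}$, but $f(\vec{y})$ is unchanged by scale invariance, so the inequality $\vec{x} \cdot f(\vec{x}+\vec{y}) \geq \vec{x} \cdot f(\vec{y})$ survives the scaling, and continuity of $f$ at $\vec{x}$ lets me pass $\vec{x}+t\vec{y} \to \vec{x}$ as $t \to 0^+$ to conclude $\vec{x}\cdot f(\vec{x}) \geq \vec{x} \cdot f(\vec{y})$ for \emph{every} $\vec{y}$, hence for every point of $Z$ by density and continuity. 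One has to be a little careful that "participation for one agent among a profile of others" is being applied correctly after the strong-anonymity reduction — the two-agent profile $(\vec{x}, t\vec{y})$ really does have weight vector $\vec{x} + t\vec{y}$, and removing agent $1$ leaves weight $t\vec{y}$, whose image under $f$ equals $f(\vec{y})$ by scale invariance — so the chain of equalities and the participation inequality compose exactly as needed. Beyond that, the argument is a routine compactness-plus-continuity wrap-up, and defining $h \equiv 0$ (or any constant) suffices, so $f$ is in fact MIDR with a constant $h$.
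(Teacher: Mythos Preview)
Your proposal is correct and is essentially the paper's own argument: apply participation to the agent with valuation $\vec{x}$ in the two-agent profile $(\vec{x}, t\vec{y})$, use scale invariance to replace $f(t\vec{y})$ by $f(\vec{y})$, and send $t \to 0^+$ using continuity to obtain $\vec{x}\cdot f(\vec{x}) \geq \vec{x}\cdot f(\vec{y})$ for all $\vec{y}$, then take $Z$ to be the image of $f$. The paper parametrizes the same limit as $a\vec{v}+(1-a)\vec{w}$ with $a \to 0$, but the content is identical.
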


\section{Mechanisms with Selective Verification and Basic Properties}
\label{s:model}

A \emph{mechanism with selective verification} $F$ takes as input a reported valuation profile $\matr{y}$ and has \emph{oracle access} to a binary \emph{verification vector} $\vec{s} \in \{ 0, 1 \}^n$, with $s_i = 1$ if agent $i$ has truthfully reported $\vec{y}_i = \vec{x}_i$, and $s_i = 0$ otherwise. In fact, we assume that $F$ verifies an agent $i$ through a \emph{verification oracle} $\ver$ that on input $i$, returns $\ver(i) = s_i$. So, we regard a mechanism with verification as a function $F : D^n \times \{ 0, 1 \}^n \to \Delta(O)$. We highlight that although the entire vector $\vec{s}$ appears as a parameter of $F$, for notational convenience, the outcome of $F$ actually depends on few selected coordinates of $\vec{s}$.
We denote $V(\matr{y}) \subseteq N$, or simply $V$, the set of agents verified by $F$ on input $\matr{y}$.
As for allocation rules, we treat the probability distribution of $F$ over outcomes as an $m$-dimensional vector and assume that $F$ is well defined for any number of agents $n' \leq n$.

Our approach is to start from an allocation rule $f$ and to devise a mechanism $F$ that motivates truthful reporting by selective verification. We say that a mechanism $F$ with selective verification is \emph{recursive} if there is an allocation rule $f$ such that $F$ operates as follows: on a valuation profile $\matr{y}$, $F$ selects an outcome $o$, with probability $f_o(\vec{y})$, and a verification set $V(\matr{y})$, and computes the set $L = \{ i \in V(\matr{y}) : \ver(i) = 0 \}$ of misreporting agents in $V(\matr{y})$. If $L = \emptyset$, $F$ returns $o$. Otherwise, $F$ recurses on $\matr{y}_{-L}$. Our mechanisms are recursive, except for Partial Power (Section~\ref{s:partial}), which adopts a slightly different reaction to $L \neq \emptyset$.

Given an allocation rule $f$, we say that a mechanism with verification $F$ is an \emph{extension} of $f$ is for all valuation profiles $\matr{x}$, $F(\matr{x}, \vec{1}) = f(\matr{x})$. Namely, $F$ behaves exactly as $f$ given that all agents report truthfully. For the converse, given a mechanism $F$, we say that $F$ \emph{induces} an allocation rule $f$ if for all $\matr{x}$, $f(\matr{x}) = F(\matr{x}, \vec{1})$. For clarity, we refer to mechanisms with selective verification simply as \emph{mechanisms}, and denote them by uppercase letters, and to allocation rules simply as \emph{rules} or \emph{algorithms}, and denote them by lowercase letters. A mechanism $F$ has a property of an allocation rule (e.g., scale invariance, partial or full allocation, participation, approximation ratio) iff the induced rule $f$ has this property.

A mechanism $F$ is \emph{$\e$-truthful}, for some $\e \in (0, 1]$, if for any agent $i$, for any valuation pair $\vec{x}_i$ and $\vec{y}_i$ and for all reported valuations $\matr{y}_{-i}$ and verification vectors $\vec{s}_{-i}$,
\[ \vec{x}_i \cdot F((\matr{y}_{-i}, \vec{x}_i), (\vec{s}_{-i}, 1)) \geq
   \e\, \vec{x}_i \cdot F((\matr{y}_{-i}, \vec{y}_i), (\vec{s}_{-i}, 0))
\]
A mechanism $F$ is \emph{truthful} if it is $1$-truthful. Namely, no matter the reported valuations of the other agents and whether they report truthfully or not, the expected utility of agent $i$ is maximized if she reports truthfully.

\smallskip\noindent{\bf Robustness and Obliviousness.}
A remarkable property of our mechanisms is \emph{robustness}, namely that they ignore the valuations of misreporting agents and let their outcome be determined by the valuations of truthful agents only.
Formally, a mechanism $F$ is \emph{robust} if for all reported valuations $\matr{y}$ and verification vectors $\vec{s}$,
$F(\matr{y}, \vec{s}) = F(\matr{y}_{T(\vec{s})}, (1, \ldots 1))$, with the equality referring to the probability distribution of $F$,
where $T(\vec{s}) = \{ i \in N: s_i = 1 \}$ is the set of truthful agents. Next, we simply use $T$, instead of $T(\vec{s})$.

A mechanism with selective verification $F$ is \emph{oblivious} (to the declarations of misreporting agents not verified) if for all valuation profiles $\matr{y}$ and verification vectors $\vec{s}$, with $L = N \setminus T(\vec{s})$, and any outcome $o$,
\begin{equation}\label{eq:oblivious}
   \Prob[ F(\matr{y}, \vec{s}) = o\,|\,V(\matr{y}) \cut L = \emptyset] =
   \Prob[ F(\matr{y}_{-L}, \vec{1}) = o]
\end{equation}
I.e., if the misreporting agents are not caught, they do not affect the probability distribution of $F$ (see also \cite{FT10}). By induction on the number of misreports, we show that obliviousness is sufficient for robustness.

\begin{lemma}\label{l:oblivious}
Let $F$ be any oblivious recursive mechanism with selective verification. Then, $F$ is robust.
\end{lemma}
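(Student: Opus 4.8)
The plan is to induct on $|L| = |N \setminus T(\vec{s})|$, the number of misreporting agents, keeping the set $T$ of truthful agents (and their reports $\matr{y}_T = \matr{x}_T$) fixed throughout. The base case $|L| = 0$ is immediate: then $\vec{s} = \vec{1}$, $T = N$, and $F(\matr{y}, \vec{1}) = F(\matr{y}_T, \vec{1})$ trivially. For the inductive step, assume the claim holds whenever there are fewer than $|L|$ misreporting agents, and consider a profile $\matr{y}$ with misreporting set $L \neq \emptyset$. Since $F$ is recursive, on input $\matr{y}$ it selects an outcome $o$ with probability $f_o(\matr{y})$, a verification set $V(\matr{y})$, and then computes $L' = \{ i \in V(\matr{y}) : \ver(i) = 0 \} = V(\matr{y}) \cap L$. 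We condition on whether $L' = \emptyset$:

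\smallskip\noindent\emph{Case $L' = \emptyset$, i.e. $V(\matr{y}) \cap L = \emptyset$.} Here $F$ returns the selected outcome $o$ directly, so the conditional distribution of $F(\matr{y},\vec{s})$ equals the conditional distribution of $f(\matr{y})$ given $V(\matr{y}) \cap L = \emptyset$. By obliviousness \eqref{eq:oblivious}, this is exactly the distribution of $F(\matr{y}_{-L}, \vec{1})$. Now $\matr{y}_{-L}$ is a profile on the truthful agents only, so $F(\matr{y}_{-L},\vec{1}) = f(\matr{y}_{-L}) = f(\matr{y}_T)$, which by definition is $F(\matr{y}_T, \vec{1})$ — the desired target distribution, independent of $\matr{y}$.

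\smallskip\noindent\emph{Case $L' \neq \emptyset$.} Then $F$ recurses on $\matr{y}_{-L'}$. The recursive call is again to an oblivious recursive mechanism (the same $F$, on $n - |L'|$ agents), and on input $\matr{y}_{-L'}$ the set of misreporting agents is $L \setminus L'$, which has size $|L| - |L'| < |L|$ since $L' \subseteq L$ and $L' \neq \emptyset$. Crucially, the set of truthful agents is still $T$ with reports still $\matr{y}_T = \matr{x}_T$. By the induction hypothesis, the (unconditional, given the verification outcomes that produced $L'$) distribution of the recursive call equals $F(\matr{y}_T, \vec{1})$. Since this holds for every realization of $V(\matr{y})$ and every nonempty $L' = V(\matr{y}) \cap L$ that can arise, averaging over these realizations (conditioned on $L' \neq \emptyset$) still gives $F(\matr{y}_T, \vec{1})$.

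\smallskip Combining the two cases: conditioned on $V(\matr{y}) \cap L = \emptyset$ the output is distributed as $F(\matr{y}_T,\vec{1})$, and conditioned on $V(\matr{y}) \cap L \neq \emptyset$ the output is also distributed as $F(\matr{y}_T,\vec{1})$; hence the unconditional distribution of $F(\matr{y},\vec{s})$ is $F(\matr{y}_T,\vec{1}) = F(\matr{y}_{T(\vec{s})}, (1,\ldots,1))$, which is robustness.

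\smallskip The step I expect to require the most care is the recursive case: one must check that the object reached after peeling off $L'$ is genuinely an instance of the same inductive statement — same (oblivious, recursive) mechanism, strictly fewer liars, \emph{unchanged} truthful set and truthful reports — and that the law of total probability over the random choice of $V(\matr{y})$ is applied correctly, so that a conclusion that holds for each fixed value of $L'$ survives the averaging. A minor subtlety worth stating explicitly is that the equality $F(\matr{y}_{-L},\vec{1}) = F(\matr{y}_T,\vec{1})$ uses only that $\matr{y}_{-L}$ and $\matr{y}_T$ are the same profile (the truthful reports), not strong anonymity; so the lemma needs no anonymity or scale-invariance hypotheses, only obliviousness and recursiveness, consistent with the statement.
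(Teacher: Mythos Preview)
Your proposal is correct and follows essentially the same approach as the paper's proof: condition on the set $L' = V(\matr{y}) \cap L$ of liars caught by verification, handle $L' = \emptyset$ via obliviousness \eqref{eq:oblivious}, handle $L' \neq \emptyset$ via recursiveness plus the induction hypothesis, and combine by the law of total probability. The only cosmetic difference is the induction variable: the paper inducts on the total number of agents $|N|$, whereas you induct on the number of liars $|L|$; both decrease by $|L'| \geq 1$ in the recursive step, so either scheme works, and your choice has the mild expository advantage of making explicit that the truthful set $T$ and its reports remain fixed throughout.
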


\begin{proof}
We fix a valuation profile $\matr{y}$ and a verification vector $\vec{s}$, and prove that for any outcome $o \in O$, $\Prob[F(\matr{y}, \vec{s}) = o] = \Prob[F(\matr{y}_{T}, \vec{1}) = o]$, where $T \subseteq N$ is the set of truthful agents in $\matr{y}$. Clearly, this implies that $F$ is robust. The proof is by induction on the number of agents $N$.

If $N = \emptyset$, the statement is obvious. So, we assume inductively that the statement holds for every proper subset of $N$. Let $L = N \setminus T$ be the set of misreporting agents in $\matr{y}$ and let $V$ be the verification set of $F$ on input $\matr{y}$. Then,
\begin{equation}\label{eq:robust}
 \Prob[F(\matr{y}, \vec{s}) = o] =
 \sum_{L' \subseteq L} \Prob[F(\matr{y}, \vec{s}) = o\,|\,V \cut L = L']
 \,\Prob[V \cut L = L']
\end{equation}
We have that
\(
   \Prob[F(\matr{y}, \vec{s}) = o\,|\,V \cut L = \emptyset] =
   \Prob[F(\matr{y}_{T}, \vec{1}) = o]
\),
by (\ref{eq:oblivious}), since $F$ is oblivious.
If $V$ includes a non-empty set $L' = V \cut L$, since $F$ is recursive, it ignores their declarations and recurses on $\matr{y}_{-L'}$. Therefore, for all $\emptyset \neq L' \subseteq L$,
\(
   \Prob[F(\matr{y}, \vec{s}) = o\,|\,V \cut L = L'] =
   \Prob[F(\matr{y}_{-L'}, \vec{s}_{-L'}) = o] =
   \Prob[F(\matr{y}_{T}, \vec{1}) = o]
\)\,,
where the last equality follows from the induction hypothesis, because the agents in $\matr{y}_{-L'}$ are a proper subset of agents in $\matr{y}$.
Therefore, using that $\Prob[F(\matr{y}, \vec{s}) = o\,|\,V \cut L = L'] = \Prob[F(\matr{y}_{T}, \vec{1}) = o]$, for all $L' \subseteq L$, in (\ref{eq:robust}), we obtain that  $\Prob[F(\matr{y}, \vec{s}) = o] = \Prob[F(\matr{y}_{T}, \vec{1}) = o]$, i.e., that $F$ is robust.
\qed\end{proof}

\noindent{\bf Robustness, Participation and Truthfulness.}
In the Appendix, Section~\ref{s:app:robustness+participation}, we show that robustness and participation imply truthfulness (note that the converse may not be true, since a truthful mechanism with verification does not need to be robust). Then, by Lemma~\ref{l:participation} and Lemma~\ref{l:oblivious}, we can focus on MIDR allocation rules for which the outcome and the verification set can be selected in an oblivious way.

\begin{lemma}\label{l:robustness+participation}
For any $\e \in (0, 1]$, if a mechanism with selective verification $F$ is robust and satisfies $\e$-participation, then $F$ is $\e$-truthful.
\end{lemma}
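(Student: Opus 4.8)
The plan is to derive $\e$-truthfulness directly from the two hypotheses, using robustness to collapse the mechanism's behavior to an ordinary allocation rule on the truthful agents, and then using $\e$-participation of that rule to compare the agent's two strategic options. Fix an agent $i$, a pair of valuations $\vec{x}_i$ (true) and $\vec{y}_i$ (a possible misreport), and arbitrary $\matr{y}_{-i}$ and $\vec{s}_{-i}$. I want to show
\[ \vec{x}_i \cdot F((\matr{y}_{-i}, \vec{x}_i), (\vec{s}_{-i}, 1)) \geq \e\, \vec{x}_i \cdot F((\matr{y}_{-i}, \vec{y}_i), (\vec{s}_{-i}, 0)). \]

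First I would analyze the left-hand side. When agent $i$ reports truthfully, the verification bit for $i$ is $1$, so $i$ belongs to the truthful set $T$ of the profile $(\matr{y}_{-i}, \vec{x}_i)$. Let $T' = \{ j \neq i : s_j = 1\}$ be the truthful agents among the others, so $T = T' \cup \{i\}$. By robustness, $F((\matr{y}_{-i}, \vec{x}_i), (\vec{s}_{-i}, 1)) = F((\matr{y}_{T'}, \vec{x}_i), \vec{1}) = f(\matr{y}_{T'}, \vec{x}_i)$, where $f$ is the allocation rule induced by $F$ (here I use that on an all-ones verification vector $F$ agrees with $f$, which is how the induced rule is defined). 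Next I would analyze the right-hand side. When $i$ misreports, its verification bit is $0$, so $i \notin T$; the truthful set of $(\matr{y}_{-i}, \vec{y}_i)$ is exactly $T'$. By robustness again, $F((\matr{y}_{-i}, \vec{y}_i), (\vec{s}_{-i}, 0)) = F(\matr{y}_{T'}, \vec{1}) = f(\matr{y}_{T'})$, which is independent of $\vec{y}_i$ — this is precisely the content of robustness that makes misreporting pointless.

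Now the inequality to prove becomes $\vec{x}_i \cdot f(\matr{y}_{T'}, \vec{x}_i) \geq \e\, \vec{x}_i \cdot f(\matr{y}_{T'})$. But this is exactly the $\e$-participation condition for the rule $f$, applied to agent $i$ with valuation $\vec{x}_i$ in the profile $(\matr{y}_{T'}, \vec{x}_i)$: removing $i$ from that profile yields $\matr{y}_{T'}$, and $\e$-participation says $i$'s utility from the profile including her is at least $\e$ times her utility from the profile excluding her. Since $F$ (equivalently its induced rule $f$) satisfies $\e$-participation by hypothesis, the inequality holds, and since $i$, $\vec{x}_i$, $\vec{y}_i$, $\matr{y}_{-i}$, $\vec{s}_{-i}$ were arbitrary, $F$ is $\e$-truthful.

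The argument is short; the only point requiring care is the bookkeeping of which agents count as ``truthful'' in each of the two profiles — in particular that $i$ is truthful exactly when she reports $\vec{x}_i$, and that the set $T'$ of other truthful agents is the same in both scenarios because $\vec{s}_{-i}$ is held fixed. I expect the main (minor) obstacle to be making sure the reduction from $F$ to its induced rule $f$ via robustness is applied with the correct restricted profile, and that $\e$-participation as defined for rules transfers to $F$ through the ``$F$ has a property iff its induced rule does'' convention stated in the text. No nontrivial estimate or induction is needed beyond what robustness (Lemma~\ref{l:oblivious}'s conclusion) already provides.
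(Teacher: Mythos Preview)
Your proof is correct and essentially identical to the paper's own argument: both apply robustness twice to reduce $F$ on each side to the induced rule $f$ on the truthful sub-profile (your $\matr{y}_{T'}$ is the paper's $(\matr{y}_T)_{-i}$), and then invoke $\e$-participation of $f$ to finish. Nothing is missing.
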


\noindent{\bf Quantifying Verification.}
Focusing on truthful mechanisms with verification, where the agents do not have any incentive to misreport, we bound the amount of verification when the agents are truthful (similarly to the definition of the approximation ratio of $F$ as the approximation ratio of the induced allocation rule $f$). For a truthful mechanism $F$, this is exactly the amount of verification required so that $F$ motivates truthfulness.

Given a mechanism with verification $F$, its \emph{worst-case verification} is
$\Ver(F) \equiv \max_{\matr{x} \in D^n} |V(\vec{x})|$, i.e., the maximum number of agents verified by $F$ in any truthful valuation profile. 
If $F$ is randomized, its \emph{expected verification} is $\ExpVer(F) \equiv \max_{\matr{x} \in D^n} \Exp[|V(\matr{x})|]$, where expectation is over all random strings used.

\section{Motivating Example: Facility Location Mechanisms with Selective Verification}
\label{s:facility}

As a proof of concept, we apply mechanisms with verification to $k$-Facility Location. In such problems, we have a metric space $(M, d)$, where $M$ is a finite set of points and $d$ is a metric distance function.
The outcomes are all subsets of $k$ locations in $M$. Each agent $i$ has a preferred location $t_i \in M$ and her ``valuation'' for outcome $C$ is $\vec{x}_i(C) = -d(t_i, C)$, i.e., minus the distance of her preferred location to the nearest facility in $C$. 
So, each agent $i$ aims at minimizing $d(t_i, C)$.
The mechanism $F$ gets a profile $\vec{z} = (z_1, \ldots, z_n)$ of reported locations. Using access to a verification oracle, $F$ maps $\vec{z}$ to a set $C$ of $k$ facility locations.

\smallskip\noindent{\bf Maximum Cost.}
To minimize $\max_{i \in N} \{ d(t_i, F(\vec{t}, \vec{1})) \}$, i.e, the maximum agent-facility distance, we use the $2$-approximate Greedy algorithm for $k$-Center (see e.g., \cite[Sec.~2.2]{WS10}). On input $\vec{z}$, Greedy first allocates a facility to an arbitrary agent. As long as $|C| < k$, the next facility is allocated to the agent $i$ maximizing $d(z_i, C)$. We extend Greedy to a mechanism with selective verification by inspecting the reported location $z_i$ of every agent $i$ allocated a facility. If all of them are truthful, we place the $k$ facilities at $C$. Otherwise, we exclude any liars in $C$ and recurse on the remaining agents. In the Appendix, Section~\ref{s:app:k-center}, we establish the properties of Greedy with verification. To quantify the improvement due to the use of verification, we highlight that there are no deterministic truthful mechanisms (without verification) that place $k \geq 2$ facilities in tree metrics and achieve a bounded (in terms of $n$ and $k$) approximation ratio (see \cite{FT12}).

\begin{theorem}\label{th:k-center}
The Greedy mechanism with verification for $k$-Facility Location is truthful and robust, is $2$-approximate for the maximum cost and verifies $k$ agents.
\end{theorem}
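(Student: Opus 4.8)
The plan is to verify the four claimed properties of the Greedy mechanism with verification — truthfulness, robustness, $2$-approximation, and verification of exactly $k$ agents — by reducing the first two to the general machinery of Section~\ref{s:model} and checking the last two more or less directly. The mechanism is recursive by construction (it selects $C$ via Greedy, verifies $V = \{i : i \text{ is allocated a facility}\}$, and recurses on $\matr{z}_{-L}$ if $L \neq \emptyset$), so by Lemma~\ref{l:oblivious} it suffices to establish obliviousness to obtain robustness, and by Lemma~\ref{l:robustness+participation} it suffices to additionally establish participation to obtain truthfulness.

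\textbf{Obliviousness and robustness.} Here we must check Equation~(\ref{eq:oblivious}): conditioned on none of the misreporting agents being allocated a facility, the distribution over outcomes equals the distribution produced by Greedy on $\matr{z}_{-L}$. The subtle point is that Greedy as stated is not deterministic — it picks an \emph{arbitrary} agent for the first facility — so we should fix the randomization (e.g., the first facility goes to a uniformly random agent, the rest deterministically via farthest-point) and argue that the execution path of Greedy, when it happens to select no agent of $L$, is pathwise identical to an execution of Greedy on the instance with $L$ removed. Concretely, if at no step is any agent in $L$ chosen, then every farthest-point selection $\arg\max_i d(z_i, C)$ is unaffected by the presence of the $L$-agents (they are never the chosen argmax on this conditioned event, and they do not appear in $C$), so the two executions coincide step by step, conditioned on the choice of the first center landing outside $L$. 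This gives (\ref{eq:oblivious}), hence obliviousness, hence robustness by Lemma~\ref{l:oblivious}.

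\textbf{Participation, hence truthfulness.} For truthfulness we need $\e$-participation with $\e = 1$, i.e., $\vec{x}_i \cdot f(\matr{x}) \geq \vec{x}_i \cdot f(\matr{x}_{-i})$, which in facility-location terms says agent $i$'s expected distance to the facility set does not increase when she is included. This is the crux and the main obstacle: one must show that adding agent $i$ to the instance can only (weakly) help $i$ under Greedy. The key observation is that in Greedy's execution on $\matr{z}$ (with $i$ present), once $|C| = k$ we have $d(z_i, C) \le d(z_j, C)$ fails only in the direction we want — more precisely, the final set $C$ satisfies $d(z_i, C) \le r$ where $r = \max_\ell d(z_\ell, C)$, and in the execution without $i$ the farthest-point invariant forces the produced set $C'$ to have $\max_\ell d(z_\ell, C') \ge$ (something comparable), so $i$'s distance under $C$ is at most her distance under $C'$. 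One clean way: couple the two executions sharing the random first center (when it is not $i$); then at every step the set $C$ built with $i$ present dominates, and since $i$ is a candidate for selection, either $i$ gets a facility (distance $0$, trivially best) or $i$'s distance to the current $C$ is at most the distance of whoever \emph{was} selected, which one then propagates to the final sets. I expect this monotonicity argument — making the coupling precise and handling the case where the random first center \emph{is} $i$ in one execution but not the other — to require the most care; it may be cleaner to invoke a direct case analysis on whether $i \in C$ in the final Greedy run.

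\textbf{Approximation and verification count.} These are immediate. Greedy is $2$-approximate for $k$-Center in any metric (\cite[Sec.~2.2]{WS10}), and the induced rule $f$ is exactly Greedy, so the mechanism inherits the $2$-approximation by the convention that a mechanism has the approximation ratio of its induced rule. For verification: on a truthful profile, $L = \emptyset$ at the first iteration, so $F$ verifies precisely the $|C| = k$ agents allocated facilities and then halts; hence $\Ver(F) = k$. (If fewer than $k$ distinct agents exist, Greedy allocates to all of them and $\Ver(F) < k$, consistent with the stated bound of ``$k$ agents.'')
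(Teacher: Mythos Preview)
Your overall decomposition --- robustness via Lemma~\ref{l:oblivious} from obliviousness, then truthfulness via Lemma~\ref{l:robustness+participation} from robustness plus participation, with approximation and verification handled directly --- is exactly the paper's route. But two of the ingredients are overcomplicated, and in one place your proposed fix introduces a genuine error.

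\textbf{Obliviousness.} Greedy as defined in Mechanism~\ref{alg:k-center} is deterministic: ``pick the first agent $i\in\vec{z}$'' means the first in a fixed ordering, not a random one. Your suggestion to make the first center uniformly random actually breaks obliviousness. The point is that conditioning on $V\cap L=\emptyset$ is \emph{not} the same as conditioning on ``first center $\notin L$'': a first center $c\notin L$ can still lead, via the deterministic farthest-point steps, to a set $C$ containing an agent of $L$. Such first centers are excluded by the conditioning $V\cap L=\emptyset$ on $\vec{z}$ but remain available on $\vec{z}_{-L}$, so the two first-center distributions differ and (\ref{eq:oblivious}) fails. (Concretely: four agents on the line at $0,5,10,20$, $k=2$, $L=\{\text{the agent at }20\}$, lower-index tie-breaking; then only first center $=10$ survives the conditioning, whereas on $\vec{z}_{-L}$ the first center is uniform over three agents and can produce $\{0,5\}$.) With the deterministic version the paper's argument is one line: if the output $C$ on $\vec{z}$ satisfies $C\cap L=\emptyset$, then the first agent (who lies in $C$) is not in $L$, and every subsequent $\arg\max$ is already attained in $N\setminus L$, so the run on $\vec{z}_{-L}$ is identical.

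\textbf{Participation.} The coupling you sketch is unnecessary, and the case split you mention at the end is the whole argument (and is what the paper does): if $i\in C$ then $i$'s cost is $0$; if $i\notin C$ then $i$ is neither the first agent nor ever the $\arg\max$, so removing $i$ leaves the entire Greedy execution --- and hence $i$'s cost --- unchanged.
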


\noindent{\bf Social Cost.}
To minimize $\sum_{i=1}^n d(t_i, F(\vec{t}, \vec{1}))$, i.e, the total cost of the agents, we use the Proportional mechanism \cite{LSWZ10}, which is $\Theta(\ln k)$-approximate \cite{AV07}. Proportional first allocates a facility to an agent chosen uniformly at random. As long as $|C| < k$, agent $i$ is allocated the next facility with probability proportional to $d(z_i, C)$. Verifying the reported location of every agent allocated a facility, we obtain that (see Section~\ref{s:app:k-facility-location}):

\begin{theorem}\label{th:k-facility-location}
The Proportional mechanism with verification for $k$-Facility Location is truthful and robust, is $\Theta(\ln k)$-approximate for the social cost and verifies $k$ agents.
\end{theorem}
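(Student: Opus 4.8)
The plan is to establish the three claimed properties (truthfulness + robustness, the $\Theta(\ln k)$ approximation, and the verification bound) largely by reducing to the general machinery already developed for recursive mechanisms with selective verification, mirroring what is (presumably) done for Greedy in Theorem~\ref{th:k-center}. The verification bound is immediate: Proportional allocates exactly $k$ facilities, and the mechanism with verification inspects only the (at most $k$) agents that receive a facility, so $\Ver(F) \le k$ in every truthful profile, with equality when $n \ge k$. The $\Theta(\ln k)$ approximation for social cost when all agents are truthful is exactly the guarantee of the original Proportional mechanism of \cite{LSWZ10,AV07}, since $F(\vec t,\vec 1)$ coincides with Proportional run on $\vec t$; thus the induced rule inherits the approximation ratio by definition.

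The substantive part is truthfulness, and for that I would invoke Lemma~\ref{l:robustness+participation}: it suffices to show (i) the mechanism is robust and (ii) it satisfies participation. For robustness I would appeal to Lemma~\ref{l:oblivious}, so it is enough to check that the mechanism is recursive and oblivious. Recursiveness is by construction: on reported profile $\vec z$ the mechanism runs Proportional, selecting a facility set $C$ and taking $V(\vec z)$ to be the set of agents in $C$; if any verified agent lied it discards them and recurses on the remaining agents. Obliviousness requires that, conditioned on the event that no misreporting agent is selected into the verification set $V$ (i.e., none of the liars is allocated a facility), the distribution over facility sets equals the distribution of Proportional run on the truthful agents only. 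This is the delicate step: one must argue that the sequential proportional sampling, conditioned on never picking a liar at any of the $k$ steps, is distributed exactly as proportional sampling among the truthful agents alone. I would prove this step by step, noting that at each stage the conditional probability of picking a truthful agent $i$ given the set $C$ chosen so far is $d(z_i,C)/\sum_{j \in T} d(z_j,C) = d(t_i,C)/\sum_{j \in T} d(t_j,C)$ (the liars' weights drop out of the conditioning, and $z_i = t_i$ for $i \in T$), which is precisely the transition probability of Proportional restricted to $T$; a short induction on the number of facilities placed then gives~(\ref{eq:oblivious}).

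For participation I would check directly that each agent weakly prefers being in the mechanism to being excluded, i.e., $\vec x_i \cdot f(\matr x) \ge \vec x_i \cdot f(\matr x_{-i})$, which here reads $-\Exp[d(t_i, C(\vec t))] \ge -\Exp[d(t_i, C(\vec t_{-i}))]$. This should follow from a coupling argument: removing agent $i$ can only make the sequence of proportional choices ``less attentive'' to locations near $t_i$, so in expectation the nearest facility to $t_i$ is no farther when $i$ participates; alternatively, since Proportional is essentially the sequential-sampling implementation of a natural distribution, one can try to show the induced rule is MIDR and appeal to Lemma~\ref{l:participation}. The main obstacle is exactly this participation (equivalently obliviousness+MIDR) argument for the sequential proportional process: unlike Greedy, Proportional is randomized and its distribution over outcomes is defined procedurally rather than as an explicit optimizer, so care is needed to set up the right coupling (or the right distributional range $Z$) and to verify monotonicity of agent $i$'s expected distance under removal; once that is in place, truthfulness and robustness follow mechanically from Lemmas~\ref{l:oblivious} and~\ref{l:robustness+participation}.
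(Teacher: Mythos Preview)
Your overall plan matches the paper's proof exactly: verification is trivially $k$, the approximation ratio is cited from \cite{AV07}, and truthfulness is obtained via Lemma~\ref{l:robustness+participation} by establishing robustness (through obliviousness and Lemma~\ref{l:oblivious}) and participation separately. Your step-by-step conditioning argument for obliviousness is precisely what the paper does.

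Where you diverge is participation, which you correctly flag as the obstacle. Neither of your two suggested routes is what the paper uses. The MIDR route is unlikely to work cleanly: Proportional is defined procedurally and there is no evident distributional range over which it maximizes welfare (the paper does not attempt this). The coupling idea is in the right spirit but the paper's argument is more direct and does not construct a joint distribution. Instead, the paper sets up the conditional expected cost of agent $i$ at the end of the mechanism given the facility set $C_\ell$ after round $\ell$, both when $i$ participates and when $i$ is excluded, and proves by backward induction on $\ell$ that
\[
\cost[t_i, P(\vec{t})\mid C_\ell] \le \cost[t_i, P(\vec{t}_{-i})\mid C_\ell]
\]
for every $\ell$ and every $C_\ell$. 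The base case $\ell=k$ is an equality, and the inductive step compares the two one-step recursions: when $i$ participates, the denominator of the proportional weights gains the extra nonnegative term $d(t_i,C_\ell)$ (and the corresponding numerator term contributes cost $0$), which can only decrease the weighted average of the inductively bounded conditional costs. The round-$0$ step is handled analogously using $n$ versus $n-1$ in the uniform choice. This backward-induction-on-conditional-costs is the missing concrete mechanism you should supply in place of the vague coupling or MIDR suggestions.
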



\section{The Power Mechanism with Selective Verification}
\label{s:power}

\begin{algorithm}[t]
\caption{\label{alg:power}The Power Mechanism $\Pow^\ell(\matr{x}, \vec{s})$}
\begin{algorithmic}\normalsize
    \State let $N$ be the set of the remaining agents and let $L \leftarrow \emptyset$
    \State pick an outcome $j \in O$ and a tuple $\vec{t} \in N^{\ell}$ \\
\ \ \ \ \ \ with probability proportional to the value of the
            term $x_{t_1}(j) x_{t_2}(j) \cdots x_{t_\ell}(j)$
    \For{\textbf{each} agent $i \in \vec{t}$}
        \If{$\ver(i) \neq 1$} $L \leftarrow L \union \{ i \}$ \EndIf
    \EndFor
    \If{$L \neq \emptyset$} \Return $\Pow^\ell(\matr{x}_{-L}, \vec{s}_{-L})$
    \Else\ \Return outcome $j$
    \EndIf
\end{algorithmic}\end{algorithm}

In this section, we present the Power mechanism, a recursive mechanism with verification that approximates the social welfare in the general domain of utilitarian voting. Power with parameter $\ell \geq 0$ (or $\Pow^\ell$, for brevity, see also 
Mechanism~\ref{alg:power}) is based on a strongly anonymous and scale invariant allocation that assigns probability proportional to the weight of each outcome raised to $\ell$. Hence, for each valuation profile $\matr{x}$, the outcome of $\Pow^\ell$ depends on the weight vector $\vec{w} = \sum_{i=1}^n \vec{x}_i$. If all agents are truthful, $\Pow^\ell$ results in each outcome $j$ with probability
$w_j^{\ell} / \sum_{q=1}^m w_q^{\ell}$, i.e., proportional to $w_j^{\ell}$
(note that for $\ell = 0$, we get the uniform allocation, while for $\ell = \infty$, the outcome of maximum weight gets probability $1)$.
To implement this allocation with low verification, we observe that each term $w_j^{\ell}$ can be expanded in $n^{\ell}$ terms as follows%
\footnote{For example, let $n = 3$, $\ell = 2$ and $w_j = x_1 +x_2+x_3$ (we omit $j$ from $x$'s for clarity). In (\ref{eq:terms}), we expand $w_j^2$ in $3^2 = 9$ terms as follows $w_j^2 = (x_1 +x_2+x_3)^2 = x_1x_1+x_1x_2+x_1x_3+x_2x_1+x_2x_2+x_2x_2+x_3x_1+x_3x_2+x_3x_3$. Hence, in this example, $\vec{t} \in \{1,2,3\}\times\{1,2,3\}$. Given that outcome $j$ is chosen, each of these terms (and the corresponding tuple $\vec{t}$) is selected with probability proportional to its value. E.g., $x_1x_2$ and $\vec{t} = (1, 2)$ are selected with probability $x_1x_2 / w^2_j$.}:
\begin{equation}\label{eq:terms}
 w_j^{\ell} =
 \left( \sum_{i \in N} x_i(j) \right)^{\!\!\ell} =
 \sum_{\vec{t} \in N^{\ell}}
 x_{t_1}(j) x_{t_2}(j) \cdots x_{t_\ell}(j)
\end{equation}
Hence, choosing an outcome $j$ and a tuple $\vec{t} \in N^{\ell}$ with probability proportional%
\footnote{\label{foot:sampling}To sample from (\ref{eq:terms}) in $O(m + n \ell)$ steps, we select outcome $j$ with probability $w^\ell_j / |\vec{w}^{\ell}|$, and then, conditional on $j$, we select agent $i$ in each position of $\vec{t}$ independently with probability $x_i(j) / w_j$. Each tuple $\vec{t}$ is picked with probability $x_{t_1}(j) \cdots x_{t_\ell}(j)/|\vec{w}^{\ell}|$.}
to $x_{t_1}(j) x_{t_2}(j) \cdots x_{t_\ell}(j)$, we end up with each outcome $j$ with probability proportional to $w_j^{\ell}$.
The verification set of $\Pow^\ell$ consists of the agents in $\vec{t}$. Since at most $\ell$ agents contribute to each term $x_{t_1}(j) x_{t_2}(j) \cdots x_{t_\ell}(j)$, we can make $\Pow^\ell$ robust and almost truthful by verifying at most $\ell$ agents.
In Section~\ref{s:app:power}, we show that $\Pow^\ell$ is oblivious, due to its proportional nature, and thus robust, and satisfies $m^{-1/(\ell+1)}$-participation. Thus, we obtain that:

\begin{theorem}\label{th:power}
For any $\eps > 0$, $\Pow^\ell$ with $\ell = \ln m / \eps$ is robust and $(1-\eps)$-truthful, has worst-case verification $\ln m/\eps$, and achieves an approximation ratio of $(1-\eps)$ for the social welfare. 
\end{theorem}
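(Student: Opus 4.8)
The plan is to instantiate the general machinery developed in the preceding sections with the specific choice $\ell = \ln m/\eps$ and verify each of the four claims in turn. Robustness is essentially free: the excerpt already states (to be proven in Section~\ref{s:app:power}) that $\Pow^\ell$ is oblivious because the probability of each term $x_{t_1}(j)\cdots x_{t_\ell}(j)$ factors through the weight vector in a proportional way, and Lemma~\ref{l:oblivious} then gives robustness. The worst-case verification bound is also immediate from the description of Mechanism~\ref{alg:power}: the verification set $V(\matr{x})$ consists exactly of the (at most $\ell$ distinct) agents appearing in the sampled tuple $\vec{t}\in N^\ell$, so $\Ver(\Pow^\ell)\le \ell = \ln m/\eps$ in every truthful profile.

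The truthfulness claim is where the parameter choice does its work. By Lemma~\ref{l:robustness+participation}, since $\Pow^\ell$ is robust, it suffices to show it satisfies $(1-\eps)$-participation, and the excerpt states that $\Pow^\ell$ satisfies $m^{-1/(\ell+1)}$-participation. So I would just check that $m^{-1/(\ell+1)} \ge 1-\eps$ for $\ell = \ln m/\eps$. Indeed $m^{-1/(\ell+1)} = e^{-\ln m/(\ell+1)} \ge e^{-\ln m/\ell} = e^{-\eps} \ge 1-\eps$, where the last step is the standard inequality $e^{-x}\ge 1-x$. Plugging this into Lemma~\ref{l:robustness+participation} with $\e = 1-\eps$ yields that $\Pow^{\ln m/\eps}$ is $(1-\eps)$-truthful.

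For the approximation ratio, I would bound the expected social welfare of the induced allocation rule $f(\matr{x})$, which assigns outcome $j$ probability $w_j^\ell/\|\vec{w}\|_\ell^\ell$, against the optimum $\|\vec{w}\|_\infty$. The expected welfare is $\sum_j w_j\cdot w_j^\ell/\|\vec{w}\|_\ell^\ell = \|\vec{w}\|_{\ell+1}^{\ell+1}/\|\vec{w}\|_\ell^\ell$. I need this to be at least $(1-\eps)\|\vec{w}\|_\infty$. Using the monotonicity of norms, $\|\vec{w}\|_{\ell+1}\ge \|\vec{w}\|_\infty$ and $\|\vec{w}\|_\ell \le m^{1/\ell}\|\vec{w}\|_\infty$, so the ratio is at least $\|\vec{w}\|_\infty^{\ell+1}/(m^{1/\ell}\|\vec{w}\|_\infty)^\ell = \|\vec{w}\|_\infty\cdot m^{-1} \cdot m^{1/\ell}$... that gives only $1/m$, which is too weak. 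The right move is instead $\|\vec{w}\|_{\ell+1}^{\ell+1}/\|\vec{w}\|_\ell^\ell \ge \|\vec{w}\|_{\ell+1}^{\ell+1}/(m^{1/\ell}\|\vec{w}\|_{\ell+1})^\ell$, wait — better: use $\|\vec{w}\|_\ell \le m^{1/\ell - 1/(\ell+1)}\|\vec{w}\|_{\ell+1}$ by the power-mean inequality, so the ratio is at least $\|\vec{w}\|_{\ell+1}\cdot m^{-\ell(1/\ell-1/(\ell+1))} = \|\vec{w}\|_{\ell+1}\cdot m^{-1/(\ell+1)} \ge m^{-1/(\ell+1)}\|\vec{w}\|_\infty$. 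Since we already showed $m^{-1/(\ell+1)}\ge 1-\eps$ for $\ell=\ln m/\eps$, this gives the $(1-\eps)$-approximation. The main obstacle — and the one subtlety worth being careful about — is getting this norm-comparison step right: picking the loose bound $\|\vec{w}\|_\ell\le m^{1/\ell}\|\vec{w}\|_\infty$ loses a factor of $m$, whereas comparing $\|\vec{w}\|_\ell$ to $\|\vec{w}\|_{\ell+1}$ via the power-mean inequality loses only the $m^{-1/(\ell+1)}$ factor that the parameter choice is designed to absorb. Everything else is bookkeeping on top of the lemmas already in place.
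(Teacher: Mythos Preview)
Your proposal is correct and follows essentially the same approach as the paper's proof. You invoke obliviousness and Lemma~\ref{l:oblivious} for robustness, bound verification by the tuple length $\ell$, combine robustness with $m^{-1/(\ell+1)}$-participation via Lemma~\ref{l:robustness+participation} for $(1-\eps)$-truthfulness, and for the approximation ratio you use the very same norm comparison $\|\vec{w}\|_\ell \le m^{1/\ell - 1/(\ell+1)}\|\vec{w}\|_{\ell+1}$ that the paper uses in Lemma~\ref{l:power-approx}; the false start with the looser bound $\|\vec{w}\|_\ell \le m^{1/\ell}\|\vec{w}\|_\infty$ is just exposition, not a different method.
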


\section{Logarithmic Verification is Best Possible}
\label{s:lower_bound}

Next, we describe a random family of instances where truthfulness requires a logarithmic expected verification.
Below, we only sketch the main idea of the proof. The full proof can be found in Section~\ref{s:app:lower_bound}.

\begin{theorem}\label{th:lower_bound}
Let $F$ be randomized truthful mechanism that achieves a constant approximation ratio for any number of agents $n$ and any number of outcomes $F$. Then, $F$ needs expected verification $\Omega(\log m)$.
\end{theorem}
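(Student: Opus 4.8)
\textbf{Overall strategy.} The plan is to exhibit, for each $m$, a distribution over valuation profiles on $m$ outcomes such that any truthful mechanism with constant approximation ratio $\rho$ must, in expectation over both the instance and its own randomness, verify $\Omega(\log m)$ agents. I would use Yao-style reasoning: design a single randomized family of instances, argue that on a ``typical'' instance the mechanism is forced to place substantial probability mass on an a priori unknown ``good'' outcome, and then show that identifying (and committing to) that outcome without verifying enough agents contradicts either truthfulness or the approximation guarantee. Concretely, I would take $n$ large and, for a hidden outcome $o^\star$ chosen uniformly at random from $[m]$, let a random subset of the agents have high valuation (say $1$) for $o^\star$ and valuation $0$ (or a tiny $\delta$) elsewhere, while the remaining agents have a flat valuation profile that makes every outcome look equally attractive. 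The optimal welfare is carried almost entirely by $o^\star$, so constant approximation forces $\sum_i \vec{x}_i\cdot F \ge \rho \|\vec{w}\|_\infty$, which pins down $\Omega(1)$ probability on $o^\star$ on average.

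\textbf{Key steps, in order.} (1) Fix the hard distribution as above and compute the optimal social welfare, which is $\Theta(n)$ and concentrated on $o^\star$. (2) Show that constant approximation ratio $\rho$ implies $\Exp[F_{o^\star}(\matr{x})] \ge c$ for some constant $c>0$ depending only on $\rho$, where the expectation is over the instance and the mechanism's coins: if $o^\star$ received probability bounded away from $c$ on average, the achieved welfare would be a $o(1)$ fraction of $\|\vec{w}\|_\infty$ on a constant fraction of instances. (3) Argue an information-theoretic bottleneck: before any verification, the mechanism's view of the reported profile determines $o^\star$ only up to the responses $\vec{s}$ it obtains from $\ver$; if it verifies fewer than $t$ agents it learns at most $t$ bits about which agents are the ``high'' agents, hence essentially no information distinguishing $o^\star$ among the $m$ candidates — so its unconditional output distribution cannot single out $o^\star$. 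Here I would invoke truthfulness/robustness machinery: because truthfulness must hold against \emph{all} misreports of the other agents (Lemma~\ref{l:robustness+participation} direction is not needed, but the definition of $\e$-truthful with arbitrary $\matr{y}_{-i},\vec{s}_{-i}$ is), the mechanism cannot ``guess'' $o^\star$ by trusting unverified reports — a liar could fabricate a fake consensus. (4) Combine: to get $\Exp[F_{o^\star}] = \Omega(1)$ the mechanism must verify enough agents to actually learn $o^\star$, and a counting/entropy argument shows that pinning down one of $m$ equally likely outcomes requires $\Omega(\log m)$ verification queries in expectation; a Markov-type averaging then yields the worst-case statement $\ExpVer(F) = \Omega(\log m)$.

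\textbf{Where the subtlety lies.} The delicate point is step (3)–(4): making precise why limited verification genuinely prevents the mechanism from identifying $o^\star$. The naive obstacle is that the reported profile $\matr{y}$ already ``contains'' $o^\star$ in plain sight (the high agents report valuation $1$ on $o^\star$), so one must use strategic considerations, not just information: a second family of instances where a small coalition of \emph{liars} manufactures a false high-valuation outcome $o'$ must be indistinguishable, from the mechanism's pre-verification view, from the honest instance with $o^\star = o'$. Truthfulness then forces the mechanism to treat reported high-valuation outcomes with suspicion unless it verifies them, and since in the hard instance there are many candidate ``high'' outcomes (created by adversarially reported coalitions of various sizes/identities), the mechanism must run enough verification rounds to rule out all but the genuine one — and each round eliminates only a bounded number of candidates, giving the $\Omega(\log m)$ bound. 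I expect assembling this indistinguishability-plus-approximation argument cleanly, and handling the recursion in the mechanism's verification process, to be the main technical obstacle; the rest is routine concentration and averaging.
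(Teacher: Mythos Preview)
Your proposal has a genuine gap: the information-theoretic framing in steps (3)--(4) does not apply here and would not yield the bound. The mechanism sees the \emph{entire} reported profile $\matr{y}$ before doing anything; in your hard instance it reads off $o^\star$ immediately, so there is nothing to ``learn'' via verification queries. Verification in this model is not a channel for acquiring information about the instance --- it is a single bit per agent saying whether that agent's already-visible report is truthful. Your attempted fix (a coalition of liars fabricating a fake $o'$) runs into the fact that truthfulness constrains only \emph{unilateral} deviations, and your construction with a single hidden outcome gives $\OPT = \Theta(n)$ with no $\log m$ anywhere; the entropy/binary-search intuition (``pinning down one of $m$ outcomes needs $\log m$ bits'') has no hook in this setting.

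The paper's argument is purely incentive-based and uses a different construction. It takes $m$ disjoint groups of agents, one per outcome; within group $j$ the number of agents with valuation $1$ for $j$ is geometric, $\Prob[k]=2^{-(k+1)}$, so the maximum over the $m$ groups --- hence $\OPT$ --- is $\Theta(\log m)$. Fixing the other groups, let $p_k$ be the probability the mechanism assigns to outcome $j$ when $k$ agents in group $j$ declare $1$. The key step is a single-agent deviation: an agent with true valuation $\delta$ who misreports $1$ moves the count from $k-1$ to $k$; truthfulness forces each of the $k$ declaring agents to be verified with probability at least $p_k - p_{k-1}$. Thus the expected verification within group $j$ is at least $\sum_k 2^{-(k+1)} k(p_k - p_{k-1})$, and a short telescoping against the geometric weights shows this is at least $\tfrac{1}{2}\Exp[\text{welfare from }j] - \tfrac{1}{2}\Prob[\text{outcome }j]$. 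Summing over all $m$ groups gives $\ExpVer(F) \ge \tfrac{1}{2}\Exp[\text{welfare}] - \tfrac{1}{2} = \Omega(\log m)$. The two ingredients you are missing are (i) the direct link ``marginal increase in allocation probability $\Rightarrow$ lower bound on per-agent verification probability'' and (ii) the geometric-tail construction that makes $\OPT$ itself of order $\log m$ and makes the telescoping sum collapse to the welfare.
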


\begin{proofsketch}
We consider $m$ outcomes and $m$ disjoint groups of agents. Each group has a large number $\nu$ of agents. An agent in group $j$ has valuation $0$ for any other outcome and valuation either $1$ or $\delta$ for outcome $j$, where $\delta > 0$ is tiny (e.g., $\delta = 1/\nu^{10}$). In each group $j$, the probability that $k$ agents, $0 \leq k \leq \nu$, have valuation $1$ for outcome $j$ is $2^{-(k+1)}$. The expected maximum social welfare of such instances is $\Theta(\log m)$.

We next focus on a group $j$ of agents and fix $\vec{x}_{-j}$, i.e., the declarations of the agents in all other groups. Using a simple argument, we can assume wlog. that the probability of outcome $j$ depends only on the number of agents in group $j$ that declare $1$ for $j$. Thus, the mechanism induces a sequence of probabilities $p_0, p_1, \ldots, p_k, \cdots$, where $p_k$ is the probability of outcome $j$, given that the number of agents that declare $1$ for $j$ is $k$.
Since the mechanism is truthful, if $k$ agents declare $1$ for outcome $j$, we need to verify each of them with probability at least $p_k - p_{k-1}$. Otherwise, an agent with valuation $\delta$ can declare $1$ and improve her expected utility.
Therefore, for any fixed $\vec{x}_{-j}$, when $k$ agents declare $1$ for outcome $j$, we need an expected verification of at least $k (p_k - p_{k-1})$ for agents in group $j$.

Assuming truthful reporting and taking the expectation over the number of agents in group $j$ with valuation $1$, we find that expected verification for agents in group $j$ is at least half the expected social welfare of the mechanism from group $j$, conditional on $\vec{x}_{-j}$, minus half the probability of outcome $j$, conditional on $\vec{x}_{-j}$. Removing the conditioning on $\vec{x}_{-j}$ and summing up over all groups $j$, we find that expected verification is at least half the expected welfare of the mechanism minus $1/2$. Since the mechanism has a constant approximation ratio, there are instances where the expected verification is $\Omega(\log m)$.
\qed\end{proofsketch}

\section{Characterization of Strongly Anonymous Mechanisms}
\label{s:impossibility}

Next, we characterize the class of scale invariant and strongly anonymous truthful mechanisms that verify $o(n)$ agents. The characterization is technically involved and consists of four main steps. We first prove that these rules are continuous (for full proof see Section~\ref{s:app:discotruth}).

\begin{lemma}\label{l:discotruth}
Let $f$ be any scale invariant and strongly anonymous allocation rule. If $f$ is discontinuous, every truthful extension $F$ of $f$ needs to verify $\Omega(n)$ agents in expectation, for arbitrarily large $n$.
\end{lemma}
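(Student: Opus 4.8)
The plan is to exploit the tension between a discontinuity of $f$ and the truthfulness requirement on any extension $F$, using scale invariance to ``zoom in'' near the discontinuity and create a family of instances where a single agent can, by a negligibly small misreport, cross the discontinuity and obtain a discrete jump in her utility — unless $F$ verifies her with probability bounded away from $0$. Replicating this by adding many near-identical agents will force $\Omega(n)$ expected verification.

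First I would use strong anonymity to regard $f$ as a map $f:\reals_{\geq 0}^m \to \Delta(O)$ from weight vectors to distributions, and locate a point $\vec{w}^\ast$ of discontinuity: there is a sequence $\vec{w}^{(k)} \to \vec{w}^\ast$ with $f(\vec{w}^{(k)})$ not converging to $f(\vec{w}^\ast)$, so along a subsequence there is an outcome $o$ and a gap $\gamma > 0$ with $|f_o(\vec{w}^{(k)}) - f_o(\vec{w}^\ast)| \geq \gamma$ for all $k$. By scale invariance, $f$ is constant along rays, so the discontinuity is really a discontinuity ``in direction'': we may normalize $|\vec{w}^\ast| = 1$ and take the $\vec{w}^{(k)}$ normalized as well, so that $\vec{w}^{(k)}$ and $\vec{w}^\ast$ are arbitrarily close as directions but $f_o$ differs by at least $\gamma$. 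The key structural observation is that for any $k$ we can write $\vec{w}^{(k)} = \vec{w}^\ast + \vec{d}^{(k)}$ with $\|\vec{d}^{(k)}\|$ small, and after a further positive rescaling (which does not change $f$) we can make the ``perturbation'' $\vec{d}^{(k)}$ itself be a legitimate nonnegative valuation vector of an agent, or a difference of two such; handling the possibility that $\vec{d}^{(k)}$ has negative coordinates — by splitting it and moving mass between $\vec{w}^\ast$ and $\vec{w}^{(k)}$ using a common reference profile — is a bookkeeping point I would check carefully.

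Next I would set up the manipulation. Consider a population realizing weight vector (proportional to) $\vec{w}^\ast$, and single out one agent whose true valuation $\vec{x}_i$ is tiny — specifically $\vec{x}_i = \varepsilon(\vec{w}^{(k)} - \vec{w}^\ast)$ for a small scalar $\varepsilon$, when this is nonnegative, or the appropriate split otherwise — so that truthful reporting yields weight near $\vec{w}^\ast$ and a particular misreport $\vec{y}_i$ shifts the aggregate weight to (a positive multiple of) $\vec{w}^{(k)}$. Because $f$ is scale invariant, the outcome distribution in the two cases is exactly $f(\vec{w}^\ast)$ versus $f(\vec{w}^{(k)})$, regardless of how small $\varepsilon$ is; so on the outcome $o$ the probabilities differ by at least $\gamma$, a quantity independent of $\varepsilon$. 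If agent $i$ values outcome $o$ strictly more than the null/other outcomes in a way that profits from this shift (I would choose the direction of the gap so that the misreport is the profitable one; if the gap points the wrong way, swap the roles of $\vec{w}^\ast$ and $\vec{w}^{(k)}$, or pick a different outcome where the gap is favorable — at least one such choice exists up to replacing $o$), then truthfulness of $F$ forces the verification probability of agent $i$ on the misreported profile to be at least some fixed constant $c = c(\gamma, \vec{x}_i) > 0$. Here I must be a little careful: $\vec{x}_i$ is scaled by $\varepsilon$, so the \emph{gain} from the jump scales with $\varepsilon$ too, and I need the loss-upon-being-caught (utility $\vec{x}_i\cdot f(\matr{x}_{-i})$, which $F$ zeroes out by robustness-style exclusion, but in general just $F$'s reaction) to also scale with $\varepsilon$, so that the critical verification probability $p_k - p_{k-1}$ type bound is a constant independent of $\varepsilon$ — this is exactly the scale-invariant analogue of the argument in the proof sketch of Theorem~\ref{th:lower_bound}, and getting this ratio to be $\Theta(1)$ is the crux.

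Finally, the amplification: replace the single agent $i$ by $r$ agents each with valuation $\tfrac{\varepsilon}{r}(\vec{w}^{(k)} - \vec{w}^\ast)$ (suitably split into nonnegative parts). When all $r$ of them misreport, each contributes the same marginal shift, each faces the same incentive to deviate, and by the truthfulness argument applied agent-by-agent $F$ must verify each with probability $\Omega(1)$; hence expected verification is $\Omega(r)$. Taking $r = \Theta(n)$ (padding the rest of the population with dummy agents of valuation $0$, or with copies that only reinforce $\vec{w}^\ast$) gives a truthful valuation profile — we bound verification \emph{on the truthful profile}, so I would additionally argue, via a continuity/averaging argument over the random string exactly as in the $\Omega(\log m)$ lower bound, that the verification probability demanded on the deviating profiles transfers to an $\Omega(n)$ expected verification on the honest profile (the mechanism cannot know in advance which agents will deviate, so it must be prepared to verify all of them). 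The main obstacle, as flagged, is making the per-agent critical verification probability a genuine constant independent of the shrinking scale $\varepsilon$ and of $n$; once that is in hand, the counting is routine and $n \to \infty$ gives the ``arbitrarily large $n$'' clause.
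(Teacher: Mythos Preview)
Your overall strategy---locate a discontinuity, engineer an agent whose unilateral misreport crosses it and thereby forces a constant verification probability, then multiply---matches the paper's approach. However, your amplification step contains a genuine gap.

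You propose to replace the single manipulating agent by $r$ agents each with valuation $\tfrac{\varepsilon}{r}(\vec{w}^{(k)}-\vec{w}^\ast)$, and then apply the truthfulness argument ``agent-by-agent''. But truthfulness fixes the reports of the other agents; with the other $r-1$ misreports held fixed, a \emph{single} agent's switch between truth and lie moves the aggregate weight by only $\tfrac{\varepsilon}{r}$ times the perturbation. For large $r$ this does not cross the discontinuity at all, so that agent has no incentive to deviate and $F$ need not verify her. Your replication therefore destroys exactly the leverage you need. The paper avoids this by a different construction: it partitions $\vec{w}$ into $k=\Theta(1/\delta)$ identical small pieces $\vec{z}_\delta$, and for each such agent it exhibits an alternative true type (e.g.\ $\delta\vec{e}_j$) whose difference from the reported $\vec{z}_\delta$ equals the \emph{full} gap $\vec{w}-\vec{v}_\delta$. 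Thus every one of the $k$ agents can, unilaterally, move the aggregate across the discontinuity; each must be verified with constant probability; and $k$ can be made arbitrarily large by taking $\delta\to 0$.

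Two smaller points. First, the nonnegativity of the perturbation is not just bookkeeping: the paper uses scale invariance in a specific way (shrinking $\vec{v}_\delta$ by a factor $1/(1+\delta'/C)$) to force $\vec{v}_\delta\le\vec{w}$ coordinatewise, so that $\vec{w}-\vec{v}_\delta$ is a legitimate valuation. Second, your final ``transfer'' worry is unnecessary: the profile $p$ on which you demand high verification \emph{is} a truthful profile (nothing prevents the $k$ agents from genuinely having valuation $\vec{z}_\delta$), and the mechanism cannot distinguish this from the case where some of them are liars with true type $\delta\vec{e}_j$. So $\ExpVer(F)\ge k\zeta$ is bounded directly on $p$; no averaging over deviating profiles is needed.
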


\begin{proofsketch}
First, we prove that if $f$ has a discontinuity, there are $\Omega(n)$ agents that have a very small valuation $\delta > 0$ and can change the allocation by a constant factor, independent of $n$ and $\delta$. Next, we focus on any truthful extension $F$ of $f$ and show that for every agent $i$ that has the ability to change the allocation by a constant factor, the probability that $F$ verifies $i$ should be at least a constant, say $\zeta$, due to truthfulness. Therefore, the expected verification of $F$ is at least $\zeta \times \Omega(n) = \Omega(n)$.
\qed\end{proofsketch}

Therefore, if a truthful mechanism $F$ verifies $o(n)$ agents and induces a scale invariant and strongly anonymous allocation rule $f$, then $f$ needs to be continuous. In Section~\ref{s:app:cont-parti}, we prove that such an allocation rule $f$ satisfies participation. Then, by Lemma~\ref{l:chara_weak_parti}, we obtain the characterization that such an allocation rule $f$ is MIDR. Finally, in Section~\ref{s:app:discont}, we show that any full allocation and MIDR rule $f$ is either constant, i.e., its probability distribution does not depend on the valuation profile $\vec{x}$, or has a discontinuity at $\vec{1}$. Thus, we obtain the following characterization:

\begin{theorem}\label{th:impossibility}
Let $F$ be any truthful mechanism that verifies $o(n)$ agents, is scale invariant and strongly anonymous and achieves full allocation. Then, $F$ induces a constant allocation rule.
\end{theorem}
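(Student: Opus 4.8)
The plan is to assemble Theorem~\ref{th:impossibility} from the three ingredients the paper has already set up, in the order in which their hypotheses become available. Let $F$ be a truthful mechanism that verifies $o(n)$ agents, is scale invariant and strongly anonymous, and achieves full allocation, and let $f$ be the induced allocation rule. First I would invoke Lemma~\ref{l:discotruth}: since $F$ is a truthful extension of $f$ and $F$ verifies $o(n)$ agents (hence does \emph{not} need $\Omega(n)$ verification for arbitrarily large $n$), the contrapositive of Lemma~\ref{l:discotruth} forces $f$ to be continuous. So now $f$ is a scale invariant, strongly anonymous, continuous allocation rule.

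Next I would use the result announced for Section~\ref{s:app:cont-parti}, namely that such a rule (scale invariant, strongly anonymous, continuous, induced by a truthful mechanism) satisfies participation. With participation in hand, Lemma~\ref{l:chara_weak_parti} applies verbatim: there is a range $Z$ of (possibly partial) allocations with $f(\vec{x}) = \arg\max_{\vec{z}\in Z}\vec{x}\cdot\vec{z}$, i.e., $f$ is MIDR (with trivial $h\equiv 0$). At this point we have reduced the statement to a clean claim about MIDR rules: a full-allocation, continuous, scale invariant, strongly anonymous MIDR rule must be constant.

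The final step is the one cited from Section~\ref{s:app:discont}: any full-allocation MIDR rule $f$ is either constant or has a discontinuity at $\vec{1}$. Since we have already established that $f$ is continuous (in particular at $\vec{1}$), the second alternative is excluded, so $f$ is constant. This is exactly the conclusion of the theorem. The logical skeleton is therefore: (continuity from low verification) $\Rightarrow$ (participation from continuity + truthfulness) $\Rightarrow$ (MIDR from Lemma~\ref{l:chara_weak_parti}) $\Rightarrow$ (constant from the full-allocation MIDR dichotomy plus continuity).

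The main obstacle is not in the chaining, which is routine once the lemmas are granted, but in making sure the hypotheses line up at each hand-off: in particular that ``$F$ verifies $o(n)$ agents'' is genuinely the negation of the ``$\Omega(n)$ expected verification for arbitrarily large $n$'' conclusion of Lemma~\ref{l:discotruth} (one must read the quantifiers carefully — the lemma produces a sequence of bad instances, and $o(n)$ worst-case or expected verification rules this out), and that the Section~\ref{s:app:cont-parti} participation argument indeed only needs continuity, scale invariance, strong anonymity, and truthfulness of $F$ — all of which we have. The genuinely substantive content lives inside the three cited lemmas (especially the discontinuity-forces-$\Omega(n)$-verification argument and the full-allocation MIDR dichotomy), so here the proof is essentially a one-paragraph deduction that records how they combine.
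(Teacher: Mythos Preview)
Your proposal is correct and follows essentially the same four-step chain as the paper: Lemma~\ref{l:discotruth} gives continuity, the result of Section~\ref{s:app:cont-parti} (Lemma~\ref{l:cont-parti}) gives participation, Lemma~\ref{l:chara_weak_parti} gives MIDR, and the dichotomy of Section~\ref{s:app:discont} (Lemma~\ref{l:discont}) then forces $f$ to be constant. One small point: Lemma~\ref{l:cont-parti} also uses the full-allocation hypothesis (not just continuity, scale invariance, strong anonymity, and truthfulness), so be sure to list it when checking that the hypotheses line up at that hand-off.
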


\begin{algorithm}[t]
\caption{\label{alg:partial_power}The Partial Power Mechanism $\PPow^{\ell, r}(\matr{x}, \vec{s})$}
  \begin{algorithmic}[1]\normalsize
      \State pick $r$ tuples $\vec{t}^{(1)},...,\vec{t}^{(r)} \in N^{\ell + 1}$ \\
      \ \ \ \ \ \ with probability proportional to the value of the term
      $\sum_{j \in O} x_{t_1}(j) x_{t_2}(j) \cdots x_{t_{\ell + 1}}(j)$

      \For{\textbf{each} $k \in \{1,...,r\}$ and agent $i \in \vec{t}^{(k)}$}
        \If{$\ver(i) \neq 1$ } {\Return $\bot$}  \EndIf
	\EndFor\smallskip
  \State {with} probability $1 - \sum_j f^{(\ell, r)}_j(\vec w)$ \Return {null}
      \State pick an outcome $j \in O$ and a tuple $\vec{t} \in N^{\ell}$ \\
      \ \ \ \ with probability proportional to the value of the term $x_{t_1}(j) x_{t_2}(j) \cdots x_{t_\ell}(j)$
      \For{\textbf{each} agent $i \in \vec{t}$}
        \If{$\ver(i) \neq 1$ } {\Return $\bot$}  \EndIf
      \EndFor
     \State  \Return outcome $j$
  \end{algorithmic}
\end{algorithm}

\section{The Partial Power Mechanism with Selective Verification}
\label{s:partial}

The Power mechanism, in Section~\ref{s:power}, escapes the characterization of Theorem~\ref{th:impossibility} by relaxing participation (and thus, truthfulness). In this section, we present Partial Power which escapes the characterization by relaxing full allocation. Thus, Partial Power results in some outcome in $O$ with probability less than $1$, and with the remaining probability, it results in an artificial \emph{null} outcome for which all agents have valuation $0$.

Lemma~\ref{l:chara_weak_parti} implies that social welfare maximization is essentially necessary for participation. The proof of Theorem~\ref{th:impossibility} implies that maximizing the social welfare over $\Delta(O)$ results in discontinuous mechanisms that need $\Omega(n)$ verification (e.g., let $m = 2$ and consider welfare maximization for weights $(1, 1+\e)$ and $(1, 1-\e)$, see also Lemma~\ref{l:discont}). Hence, we optimize over a smooth surface that is close to $\Delta(O)$, but slightly curved towards the corners, so that the resulting welfare maximizers are continuous.
Precisely,
%
%
%
we consider welfare maximization over the family of sets
$Z_{\ell, r} = \left\{\vec{z} \in \reals^m_{\geq 0} : \norm{\vec{z}}_{1+1/\ell} \leq (1 - 1/r) m^{-1/(\ell+1)} \right\}$
for all integers $\ell , r \geq 1$.
Welfare maximization over $Z_{\ell, r}$ results in
$f^{(\ell,r)}(\vec{w}) =  (1 - 1/r) \vec{w}^{\ell}/ \left( {m^{1/(\ell+1)} \norm{\vec{w}^{\ell}}_{1+1/\ell}}  \right)$ (Lemma~\ref{l:optimization}), a continuous allocation that is MIDR and satisfies participation.
Lemma~\ref{l:appro_ppower} shows that for any $\ell \geq 1$, the partial allocation $f^{(\ell,r)}$ has approximation ratio $(1-1/r) m^{-1 / (\ell + 1)}$ for the social welfare.

We next show that there exists a robust extension $\PPow^{\ell,r}$ of the allocation $f^{(\ell,r)}$ with reasonable verification. Thus, we establish that $\PPow^{\ell,r}$ is truthful. To this end, we introduce Mechanism~\ref{alg:partial_power}. Since $f^{(\ell,r)}$ is strongly anonymous, we consider below the weights $\vec{w}\equiv \vec{w}(\matr{x})$ instead of the valuations $\matr{x}$. If all agents are truthful, $\PPow^{\ell,r}$ samples exactly from $f^{(\ell,r)}(\vec{w})$. In particular, steps~1-4 never result in $\bot$, step~5 outputs null with probability $1-|f^{(\ell,r)}(\vec{w})|$, and steps~6-10 work identically to $\Pow^\ell$, since given that the null outcome is not selected, each outcome $j$ is chosen with probability proportional to $w_j^\ell$.

The most interesting case is when some agents misreport their valuations. To achieve robustness, we need to ensure that the probability distribution is identical to the case where misreporting agents are excluded from the mechanism. Similarly to $\Pow^\ell$, misreporting agents cannot affect the relative probabilities of each outcome. In $\PPow^{\ell,r}$ however, they may affect the probability of the null outcome. Thus, $\PPow^{\ell,r}$ is not oblivious and we cannot establish robustness through Lemma~\ref{l:oblivious} or some variant of it.

Robustness of $\PPow^{\ell,r}$ is obtained through the special action $\bot$, triggered when verification reveals some misreporting agents. Then, $\PPow^{\ell,r}$ needs to allocate appropriate probabilities to each outcome $j$ and to the null outcome so that the unconditional probability distribution of $\PPow^{\ell,r}$ is identical to $f^{(\ell,r)}(\vec{w}_T)$, where $T$ is the set of truthful agents. Therefore, whenever $\PPow^{\ell,r}$ returns $\bot$, we verify all agents, compute the weight vector $\vec{w}_T$ for the truthful agents, and return each outcome $j$ with
probability:
\[
 p_j = \frac{f^{(\ell,r)}_j(\vec{w}_T) - \Prob[\PPow^{\ell,r}(\vec{x}, \vec{s}) = j\,|\,\PPow^{\ell,r}(\vec{x}, \vec{s}) \neq \bot] \, \Prob[\PPow^{\ell,r}(\vec{x}, \vec{s}) \neq \bot]}{\Prob[\PPow^{\ell,r}(\vec{x}, \vec{s}) = \bot]}
\]

The null outcome is return with probability $1-\sum_j p_j$. We emphasize that these probabilities are chosen so that we cancel the effect of misreporting agents in the unconditional probability distribution of $\PPow^{\ell,r}$ and achieve exactly the probability distribution $f^{(\ell,r)}(\vec w_T)$. Moreover, if the mechanism returns $\bot$, we verify all agents. So, it is always possible to compute there probabilities correctly.

The crucial and most technical part of the analysis is to show that $p_j$'s are always non-negative and their sum is at most $1$. To this end, we employ steps~1-4. These steps implement additional verification and ensure that $\Prob[\PPow^{\ell,r}(\vec{x}, \vec{s}) = \bot]$ is large enough for this property to hold (see Section~\ref{s:app:ppower} for the details).

\begin{theorem}\label{th:partial-power}
For every $\eps > 0$, there exist $\ell, r \geq 1$, such that Partial Power is truthful, robust, and $(1 - \eps)$-approximate for the social welfare, and verifies at most $O(\ln m/\eps^2)$ agents in the worst case.
\end{theorem}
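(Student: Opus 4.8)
The plan is to combine the building blocks already assembled: the allocation $f^{(\ell,r)}$ is continuous, MIDR, and satisfies participation (Lemma~\ref{l:optimization}, Lemma~\ref{l:participation}), and it has approximation ratio $(1-1/r)m^{-1/(\ell+1)}$ (Lemma~\ref{l:appro_ppower}). What remains is (i) to verify that $\PPow^{\ell,r}$ is a genuine, well-defined extension of $f^{(\ell,r)}$ — i.e.\ that the recovery probabilities $p_j$ and $1-\sum_j p_j$ it uses on the $\bot$ event form a legitimate distribution — (ii) to deduce robustness and hence truthfulness, and (iii) to choose $\ell,r$ so that the approximation and verification bounds come out as stated. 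I would carry these out in that order.

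For (i), fix $\matr{x},\vec{s}$ with truthful set $T$. First observe that if all agents are truthful, steps~1--4 never output $\bot$, step~5 outputs null with probability $1-|f^{(\ell,r)}(\vec{w})|$, and steps~6--10 reproduce the $\Pow^\ell$ sampling conditioned on a non-null outcome; hence $\PPow^{\ell,r}(\matr{x},\vec{1})=f^{(\ell,r)}(\vec{w})$, so it is an extension. For general $\vec{s}$, the key identity is that the misreporting agents in $L=N\setminus T$ cannot alter the \emph{relative} weights among the real outcomes (exactly the obliviousness argument used for $\Pow^\ell$ in Section~\ref{s:app:power}), so $\Prob[\PPow^{\ell,r}=j\mid \PPow^{\ell,r}\neq\bot]$ equals the corresponding conditional for the truthful-only profile. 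Plugging this into the definition of $p_j$ and summing the unconditional law of $\PPow^{\ell,r}$ over the events $\{\neq\bot\}$ and $\{=\bot\}$ shows, by construction, that the total probability assigned to outcome $j$ is exactly $f^{(\ell,r)}_j(\vec{w}_T)$, and the null probability is $1-|f^{(\ell,r)}(\vec{w}_T)|$. This gives $\PPow^{\ell,r}(\matr{x},\vec{s})=f^{(\ell,r)}(\vec{w}_T)=\PPow^{\ell,r}(\matr{x}_T,\vec{1})$, i.e.\ robustness directly from the definition (we do not route through Lemma~\ref{l:oblivious}, which does not apply since the null probability does change). Since $f^{(\ell,r)}$ is MIDR, Lemma~\ref{l:participation} gives participation, and robustness plus participation give truthfulness by Lemma~\ref{l:robustness+participation}.

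The main obstacle is the positivity-and-feasibility check for the $p_j$'s: one must show $p_j\ge 0$ for every $j$ and $\sum_j p_j\le 1$, which is exactly where steps~1--4 are needed. The point is that $\Prob[\PPow^{\ell,r}\neq\bot]\le |f^{(\ell,r)}(\vec{w})|$ may be too small relative to the shortfall $f^{(\ell,r)}_j(\vec{w}_T)-f^{(\ell,r)}_j(\vec{w})$ we have to make up on the $\bot$ event, so steps~1--4 inject extra independent verification rounds, each of which fails (produces $\bot$) with a probability that is bounded below by a quantity controlled by $r$; raising the per-round success probability to the power $r$ makes $\Prob[\PPow^{\ell,r}=\bot]$ large enough, uniformly in $\vec{w}$ and in the identity of $L$, to guarantee $0\le p_j$ and $\sum_j p_j\le 1$. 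Pinning down the precise inequality — i.e.\ that $r$ rounds of the step-1 sampling suffice — is the technical heart and I would defer the computation to Section~\ref{s:app:ppower}, citing it here.

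Finally, for the parameters: take $\ell=\Theta(\ln m/\eps)$ so that $m^{-1/(\ell+1)}\ge 1-\eps/2$, and $r=\Theta(1/\eps)$ so that $(1-1/r)\ge 1-\eps/2$; then the approximation ratio $(1-1/r)m^{-1/(\ell+1)}\ge 1-\eps$. Each of the $r$ tuples in step~1 and the single tuple in step~6 involves at most $\ell+1$ distinct agents, so the worst-case verification under truthful play is at most $(r+1)(\ell+1)=O(\tfrac{1}{\eps}\cdot\tfrac{\ln m}{\eps})=O(\ln m/\eps^2)$ (when $\bot$ is triggered we verify all $n$ agents, but that never happens on a truthful profile, which is the regime in which $\Ver(F)$ is measured). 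Combining with (i)--(ii) yields the theorem. \qed
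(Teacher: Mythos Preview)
Your proposal is correct and follows the paper's approach exactly: MIDR $\Rightarrow$ participation via Lemmas~\ref{l:optimization} and~\ref{l:participation}, approximation via Lemma~\ref{l:appro_ppower}, robustness argued directly from the algebraic definition of the $p_j$'s (correctly bypassing Lemma~\ref{l:oblivious}, which does not apply here), the feasibility check $p_j\ge 0$ and $\sum_j p_j\le 1$ deferred to Section~\ref{s:app:ppower}, truthfulness via Lemma~\ref{l:robustness+participation}, and the parameter choice $\ell=\Theta(\ln m/\eps)$, $r=\Theta(1/\eps)$ giving worst-case verification $r(\ell+1)+\ell=O(\ln m/\eps^2)$. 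One minor inaccuracy that does not affect the argument: your claim that $\Prob[\PPow^{\ell,r}=j\mid\PPow^{\ell,r}\neq\bot]$ equals the corresponding truthful-only conditional is not quite right, since the null probability conditional on $\neq\bot$ still depends on $\vec{w}$ rather than only on $\vec{w}_T$ (only the ratios among the \emph{real} outcomes are preserved); but you never actually use this claim, because robustness follows immediately from the very definition of $p_j$ once feasibility is established.
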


\section{The Exponential Mechanism with Selective Verification}
\label{s:exponential}

Next, we consider the well known Exponential mechanism and show that it escapes the characterization of Section~\ref{s:impossibility} by relaxing scale invariance.
The Exponential mechanism (or $\Expo$, for brevity) is strongly anonymous and assigns a probability proportional to the exponential of the weight of each outcome.
For each valuation profile $\matr{x}$, the outcome of $\Expo$ depends on $\vec{w} \equiv \sum_{i=1}^n \vec{x}_i$. If all agents are truthful, $\Expo^\alpha(\vec{w})$ results in outcome $j$ with probability $e^{w_j / \alpha} / \sum_{q=1}^m e^{w_q / \alpha}$, i.e., proportional to $e^{w_j / \alpha}$, where $\alpha > 0$ is a parameter.
As in Section~\ref{s:power}, we expand every term $e^{w_j / \alpha}$ and verify only the agents in the tuple $\vec{t}$ corresponding to each term in the expansion below (the sampling can be implemented as in footnote~\ref{foot:sampling}):
\begin{equation}\label{eq:exponential-verification}
 e^{w_j / \alpha} = \sum_{\ell=0}^\infty \frac{(w_j/\alpha)^\ell}{\ell!} = \sum_{\ell=0}^\infty \frac{\alpha^{-\ell}}{\ell!}\sum_{\vec{t} \in N^{\ell}}
 x_{t_1}(j) x_{t_2}(j) \cdots x_{t_\ell}(j)
\end{equation}

The detailed description of $\Expo^\alpha$ is similar to Mechanism~\ref{alg:power}, with the only difference that, in the second step, we pick an outcome $j \in O$, an integer $\ell \geq 0$ and a tuple $\vec{t} \in N^{\ell}$ with probability proportional to the value of the term $x_{t_1}(j) x_{t_2}(j) \cdots x_{t_\ell}(j) / (\alpha^\ell \ell!)$ (see also Mechanism~\ref{alg:exponential} in the Appendix). The following summarizes the properties of $\Expo$.

\begin{theorem}\label{th:exponential}
For any $\alpha > 0$, $\Expo^\alpha(\vec{w})$ is robust and truthful, achieves an additive error of $\alpha\ln m$ wrt. the maximum social welfare and has expected verification $\norm{\vec{w}}_\infty/\alpha$.
\end{theorem}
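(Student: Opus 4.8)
The plan is to verify the four assertions—robustness, truthfulness, the $\alpha\ln m$ additive error, and the expected verification $\norm{\vec{w}}_\infty/\alpha$—one at a time, exploiting that, apart from its recursive reaction to caught liars, $\Expo^\alpha$ realizes the softmax (Gibbs) allocation $z^\star_j \propto e^{w_j/\alpha}$, structurally in the same way that $\Pow^\ell$ realizes $z_j \propto w_j^\ell$.

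\emph{Robustness.} First I would show $\Expo^\alpha$ is oblivious and invoke Lemma~\ref{l:oblivious}. Fix a profile $\matr{x}$, a verification vector $\vec{s}$ with liar set $L$ and truthful set $T=N\setminus L$, and an outcome $o$. Conditioned on $V(\matr{x})\cut L=\emptyset$, the sampled triple $(j,\ell,\vec{t})$ ranges exactly over those triples with $\vec{t}\in T^\ell$, and within this event $(j,\ell,\vec{t})$ is still chosen with probability proportional to $x_{t_1}(j)\cdots x_{t_\ell}(j)/(\alpha^\ell\ell!)$. Summing (\ref{eq:exponential-verification}) only over $\vec{t}\in T^\ell$ and $\ell\geq 0$ produces precisely $e^{(w_T)_o/\alpha}$ with $\vec{w}_T=\sum_{i\in T}\vec{x}_i$, so the conditional probability of $o$ equals $e^{(w_T)_o/\alpha}/\sum_q e^{(w_T)_q/\alpha}=\Prob[\Expo^\alpha(\matr{x}_{-L},\vec{1})=o]$, which is exactly (\ref{eq:oblivious}). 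Since $\Expo^\alpha$ is recursive (it recurses on $\matr{x}_{-L}$ when $L\neq\emptyset$, as in Mechanism~\ref{alg:power}), Lemma~\ref{l:oblivious} gives robustness. I would also remark that the sampling of footnote~\ref{foot:sampling} terminates with probability $1$, so the conditional distributions above are well defined.

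\emph{Truthfulness.} I would establish that the induced allocation $\Expo^\alpha$ is MIDR over the full simplex. Take $Z=\Delta(O)$ and $h(\vec{z})=\alpha H(\vec{z})$ with $H(\vec{z})=-\sum_{j\in[m]} z_j\ln z_j$. By the Gibbs variational principle (equivalently, a one-line Lagrangian argument, or nonnegativity of relative entropy), $\arg\max_{\vec{z}\in\Delta(O)}\bigl(\vec{w}\cdot\vec{z}+\alpha H(\vec{z})\bigr)$ is the Gibbs distribution $z^\star_j=e^{w_j/\alpha}/\sum_q e^{w_q/\alpha}$, which is exactly $\Expo^\alpha(\vec{w})$. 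Hence $\Expo^\alpha$ is MIDR, so Lemma~\ref{l:participation} gives participation, and together with robustness, Lemma~\ref{l:robustness+participation} gives truthfulness.

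\emph{Additive error and verification.} For the additive error I would apply the variational inequality above at the point mass $\vec{e}_{j^\star}$ on a welfare-maximizing outcome $j^\star\in\arg\max_j w_j$, where $H(\vec{e}_{j^\star})=0$: this yields $\vec{w}\cdot\vec{z}^\star+\alpha H(\vec{z}^\star)\geq \norm{\vec{w}}_\infty$, and since $H(\vec{z}^\star)\leq\ln m$ we get $\vec{w}\cdot\vec{z}^\star\geq\norm{\vec{w}}_\infty-\alpha\ln m$, i.e.\ the expected social welfare is at least $\norm{\vec{w}}_\infty-\alpha\ln m$. For verification, note that on a truthful profile no recursion occurs and the number of agents verified is at most the length $\ell$ of the sampled tuple. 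Conditioned on the chosen outcome $j$, the expansion of $e^{w_j/\alpha}$ shows $\ell$ is distributed as $\mathrm{Poisson}(w_j/\alpha)$, so $\Exp[|V(\matr{x})|]\leq\Exp[\ell]=\sum_j z^\star_j\, w_j/\alpha=\vec{w}\cdot\vec{z}^\star/\alpha\leq\norm{\vec{w}}_\infty/\alpha$. The only step that genuinely requires care is the obliviousness argument: one must make the interchange of the infinite sum over $\ell$ and $\vec{t}$ with the conditioning on $V\cut L=\emptyset$ rigorous and check that the normalizing constants collapse to $e^{(w_T)_o/\alpha}$ exactly; the MIDR/entropy-regularization identity and the Poisson structure of the tuple length are then routine.
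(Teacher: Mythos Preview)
Your proposal is correct and follows essentially the same approach as the paper: obliviousness (via the term-expansion argument) plus Lemma~\ref{l:oblivious} for robustness, the MIDR characterization with $Z=\Delta(O)$ and $h(\vec{z})=\alpha H(\vec{z})$ for participation and hence truthfulness via Lemma~\ref{l:robustness+participation}, the entropy-regularization inequality at the point mass for the $\alpha\ln m$ additive error, and the Poisson structure of the tuple length for the expected-verification bound. Your write-up is, if anything, slightly more detailed than the paper's sketch.
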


\begin{proof}[sketch]
Using an argument similar to that used for Power (see Section~\ref{s:app:power}), we can show that $\Expo^\alpha$ is oblivious (note that the allocation of $\Pow^\ell$ is obtained from the allocation of $\Expo^\alpha$ if we condition on a particular exponent $\ell$). Then, robustness follows from Lemma~\ref{l:oblivious}, because $\Expo^\alpha$ is a recursive mechanism. As for participation, the Exponential allocation is known to be MIDR with range $Z = \Delta(O)$ and function $h(\vec{z}) = - \alpha \sum_j z_j \ln z_j$, i.e., $\alpha$ times the entropy of the resulting allocation (see e.g., \cite{HK12}). Therefore, by Lemma~\ref{l:participation}, $\Expo^\alpha$ satisfies participation. Since it is also robust, Lemma~\ref{l:robustness+participation} implies that $\Expo^\alpha$ is truthful.

For the verification, (\ref{eq:exponential-verification}) implies that when all agents are truthful, the number of agents verified, given that the selected outcome is $j$, follows a Poisson distribution with parameter $w_j/\alpha \leq \norm{\vec{w}}_\infty/\alpha$. Therefore, the expected verification is at most $\norm{\vec{w}}_\infty/\alpha$.

As for the approximation guarantee, the optimal social welfare $\norm{\vec{w}}_\infty$ and the objective maximized by $\Expo^\alpha$ differ by $\alpha$ times the entropy of the allocation, which is at most $\alpha \ln m$ (see also Section~\ref{s:app:exponential-approximation}).
\qed\end{proof}

In many settings, we know (or can obtain in a truthful way, e.g., by random sampling) an estimation $E$ of $\norm{\vec{w}}_\infty$ with $E \geq \norm{\vec{w}}_\infty \geq \rho E$, for some $\rho \in (0, 1)$. Then, we can choose $\alpha = \eps \rho E / \ln m$ and obtain an approximation ratio of $1-\eps$ with expected verification $\ln m / (\rho \eps)$, for any $\eps > 0$.
E.g., if for all agents $i$, $|\vec{x}_i| = 1$, $n \geq \norm{\vec{w}}_\infty \geq n/m$. Then, using $\alpha = n \eps / \ln m$, we have an additive error of $\eps n$ with verification $\ln m/\eps$. Moreover, with $\alpha = n \eps / (m \ln m)$, we have approximation ratio $1-\eps$ with verification $m\ln m/\eps$.
Finally, note that, since the number of agents verified follows a Poisson distribution, by Chernoff bounds, the verification bounds hold with high probability in addition to holding in expectation.

\newpage
\bibliographystyle{plain}
\bibliography{mechdesign}

\newpage
\appendix
\section{Appendix}
\subsection{Other Related Previous Work}
\label{s:app:previous}

The extensive use of monetary transfers in mechanism design is principally because in absence of money, very little can be done to enforce truthfulness. However, there are settings where monetary transfers might be unacceptable, infeasible, or undesirable (see e.g., \cite{PT09} for examples). To circumvent the impossibility result of Gibbard-Satterthwaite in such settings, \cite{PT09} suggested to tradeoff social efficiency for truthfulness and introduced the framework of \emph{approximate mechanism design without money}. The idea is to consider truthful mechanisms without money in a particular domain and determine the best approximation ratio achievable for an appropriate social objective.

In principle, the notion of approximate mechanisms provides the designer with more flexibility. Nevertheless, there have been only few examples of truthful mechanisms with good approximation guarantees that are not based on additional assumptions. All of them concern some simple and restricted domains (see for e.g.,~\cite{AFPT09,FT13,LSWZ10,PT09} for placing $1$ or $2$ facilities in a metric space and \cite{Proc10} for voting with positional scoring rules). For less restricted domains, there are strong lower bounds on the best possible approximation ratio achievable by truthful mechanisms (see e.g., \cite{FT12} for deterministic facility location mechanisms). Therefore, for nontrivial approximation guarantees, we need either some assumptions on the direction or the extent of agent misreports, i.e., to use \emph{verification}, or a way to implicitly penalize misreports, a.k.a. \emph{imposition}.

Probably the most natural and practically applicable notion of verification is \emph{symmetric partial verification} (or \emph{$\eps$-verification}), which explicitly forbids any false declaration at distance larger than $\eps$ to the true type. Interestingly, \cite{AK08,CESY12,FZ13} prove that symmetric partial verification it
does not help in the design of truthful mechanisms (with or without money)! Hence, in order to make some difference in approximate mechanism design without money, verification should be exact, in the sense that it forbids even negligible deviations from the truth, at least for some types of misreports.
Many interesting positive results in approximate mechanism design without money use either ``one-sided'' verification or imposition (see e.g., \cite{FKV13,FT10,Kouts11,NST10,PS14}). However, the use of imposition depends very much on the particular application (see e.g., \cite{FT10,Kouts11,PS14}), while ``one-sided'' verification explicitly forbids a particular type of false declarations for all agents with positive utility (see e.g., \cite{FKV13}). So, through theoretically interesting, ``one-sided'' verification is difficult to apply in practice. Thus, starting from \cite{CESY12}, recent research has focused on the power of exact verification schemes that use either limited or costly verification and mild (or at least bounded) penalties for the liars.

Working in this direction, we seek a better and more delicate understanding of the power of verification in approximate mechanism design without money. Significantly departing from most of the previous work, we develop a general approach to the use of verification in mechanism design without money that is applicable to essentially any domain and does not resort to any explicit (e.g., monetary) penalties that decrease the utility of misreporting agents.

\subsection{Conclusions and Discussion}
\label{s:app:concl}

In this work, we introduce a general approach to approximate mechanism design without money and with selective verification and apply it to the general domain of utilitarian voting (and to Combinatorial Public Project and to $k$-Facility Location). We focus on strongly anonymous randomized mechanisms and characterize such mechanisms that are truthful in expectation, scale invariant, achieve full allocation and have reasonable verification. By relaxing truthfulness, full allocation and scale invariance, we obtain three mechanisms, namely Power, Partial Power, and Exponential, that are truthful (or almost truthful, for Power), achieve an approximation ratio of $1-\eps$ for the social welfare, and verify $O(\ln m/\eps)$ agents, or $O(\ln m/\eps^2)$ agents for Partial Power, where $m$ is the number of outcomes. Hence, we obtained a smoothed tradeoff between the number of agents verified and the quality of approximation. From a technical viewpoint, our mechanisms are based on smooth proportional-like randomized allocation rules. Truthfulness is a consequence of participation, which is closely related to maximal-in-distributional-range, and robustness, which is closely related to obliviousness to the misreporting agents not included in the verification set.

The property of robustness, i.e., namely that the probability distribution of the mechanism does not depend on misreporting agents, seems quite remarkable. To the best of our knowledge, this is the first time that robustness (or a similar) property is considered in mechanism design. Actually, with the possible exception of constant mechanisms, whose probability distribution over outcomes is independent of the agent declarations, a mechanism can be robust only if it uses exact verification.

To see that robustness is a strong property, recall that truthfulness means that a misreporting agent cannot change the allocation \emph{in her favor}, while robustness means that a misreporting agent cannot change the allocation \emph{whatsoever}. Hence, the definition (and the proof) of truthfulness assumes a utility function that each agent maximizes by truthful reporting. Robustness, on the other hand, does not refer to the utility function of the agents. Any misreport that can be caught by the verification oracle does not affect the probability distribution of a robust mechanism, no matter the incentives or the utility function of misreporting agents.

We believe that robustness can be very useful when the agent valuations are not declared explicitly to the mechanism, but they are deduced from their declarations on some observable types. E.g., this happens in the Facility Location domain, where the agents declare their locations to the mechanism, and the definition of truthfulness assumes that each agent wants a facility as close as possible to her declared location and that her disutility increases linearly with the distance (see also \cite{FT13}). On the other hand, robustness only depends on whether each agent declares her true location (e.g., her true home address) to the mechanism, not on whether she wants a facility close, not so close, or far away from her declared location.

\subsection{An Application to Combinatorial Public Project}
\label{s:app:applications}

The \emph{Combinatorial Public Project Problem} (CPPP) was introduced in \cite{SS08,PSS08} and is a well-studied problem in algorithmic mechanism design. An instance of CPPP consists of a set $R$ with $r$ resources, a parameter $k$, $1 \leq k \leq r$, and $n$ strategic agents, where each agent $i$ has a function $\vec{x}_i : 2^R \to \reals_{\geq 0}$ that assigns a non-negative valuation $\vec{x}_i(S)$ to each resource subset $S \subseteq R$. The objective is to find a set $C$ of $k$ resources that maximizes $\sum_{i} \vec{x}_i(C)$, i.e., the social welfare of the agents from $C$. We assume that all valuations $\vec{x}_i$ are \emph{normalized}, i.e., $\vec{x}_i(\emptyset) = 0$, and \emph{monotone}, i.e., $\vec{x}_i(S_1) \leq \vec{x}_i(S_2)$ for all $S_1 \subseteq S_2$.

The valuation functions $\vec{x}_i$ are implicitly represented through a value oracle, which returns the valuation $\vec{x}_i(S)$ of any resource subset $S$ in $O(1)$ time. Then, CPPP is $\NP$-hard and practically inapproximable in polynomial time, under standard computational complexity assumptions (see \cite{SS08} for the details). If the valuation functions $\vec{x}_i$ are \emph{submodular}, i.e., each $\vec{x}_i$ satisfies $\vec{x}_i(S_1 \union S_2) + \vec{x}_i(S_1 \cut S_2) \leq \vec{x}_i(S_1)+\vec{x}_i(S_2)$, for all $S_1, S_2 \subseteq R$, CPPP can be approximated in polynomial time within a factor of $1-1/e$. If the valuation functions $\vec{x}_i$ are \emph{subadditive}, i.e., each $\vec{x}_i$ satisfies $\vec{x}_i(S_1 \union S_2)\leq \vec{x}_i(S_1)+\vec{x}_i(S_2)$, for all $S_1, S_2 \subseteq R$, CPPP can be approximated in polynomial time within a factor of $r^{-1/2}$, while approximating it within any factor better than $r^{-1/4+\eps}$, for any constant $\eps > 0$, requires exponential communication \cite{SS08}.

In \cite{PSS08}, it was shown that for submodular valuations, CPPP cannot be approximated in polynomial time (or with polynomial communication) by deterministic truthful mechanisms (with money) within any factor better than $r^{-1/2+\eps}$, for any constant $\eps > 0$. A similar communication complexity lower bound was shown in \cite{Dob11} for randomized truthful in expectation mechanisms with money. So, the polynomial-time approximability of CPPP with submodular valuations is dramatically  better than its approximability by polynomial-time truthful mechanisms with money. Although the approximability of CPPP by polynomial time truthful mechanisms with money has received considerable attention, to the best of our knowledge, this is the first time that the approximability of CPPP by truthful mechanisms without money is considered. 

CPPP, with general valuation functions, can be naturally cast in our framework of utilitarian voting, with the outcome set $O$ consisting of all resource subsets $S$ with $|S| = k$ (hence, we have $m \leq r^k$). Then, our mechanisms imply the following results on the approximability of CPPP with general valuation functions by mechanisms without money and with selective verification:

\begin{description}
\item[Power.] For any $\eps > 0$, the Power mechanism always allocates a set of $k$ resources, is robust, $\eps$-truthful, achieves an approximation ratio of $1 - \eps$ and verifies at most $k \log r / \eps$ agents.
    
\item[Partial Power.] For any $\eps > 0$, the Partial Power mechanism allocates a set of $k$ resources with probability $1-O(\eps)$, is robust, truthful, achieves an approximation ratio of $1 - \eps$ and verifies $O(k \log r / \eps^2)$ agents. Note that the empty set can naturally play the role of the null outcome for Partial Power. 
    
\item[Exponential.] Since Exponential is not scale invariant, we need to assume that $\max_{S \subseteq R, |S| \leq k} \vec{x}_i(S) \leq 1$, for every agent $i$. Then, for any $\eps > 0$, the Exponential mechanism always allocates a set of $k$ resources, is robust, truthful, and achieves an additive error of $\eps n$ with verification of $O(k \log r / \eps)$ agents, or achieves an approximation ratio of $1 - \eps$ with verification of $O(k r^k \log r / \eps)$ agents (where the verification bounds hold with high probability).
\end{description}

These guarantees are very strong and rather surprising, especially if the number $n$ of agents is significantly larger than $k \log r$, which is the case in many practical settings. We almost reach the optimal social welfare of the famous VCG mechanism, which achieves truthfulness through (potentially very large) payments, using truthful mechanisms without money that verify a small number of agents independent of $n$. It becomes even more interesting if we recall that the penalty for a misreporting agent, through which we enforce truthfulness, is just the exclusion of the agent's preferences from the decision making process.

The mechanisms above run in time polynomial in the total number of outcomes $r^k$ and in the number of agents $n$. So, if the valuation functions are implicitly represented by value oracles, they are not computationally efficient. However, we still need to resort to approximate solutions, because, in absence of money, the optimal solution is not truthful. We underline that computational inefficiency is unavoidable, since our approximation ratio of $1-\eps$, for any constant $\eps > 0$, is dramatically better than known impossibility results on the polynomial time approximability of CPPP.

If we insist on computationally efficient mechanisms without money for CPPP, we can combine our mechanisms with existing maximal-in-range mechanisms so that everything runs in polynomial time. E.g., for CPPP with subadditive valuation functions, we can use the maximal-in-range mechanism of \cite[Sec.~3.2]{SS08} and obtain randomized polynomial-time truthful mechanisms without money that achieve an approximation ratio of $O(\min\{k, \sqrt{r}\})$ for the social welfare with selective verification of $O(k \log r)$ agents.

\subsection{The Proof of Lemma~\ref{l:participation}}
\label{s:app:participation}

Let $i$ be any agent. Since the allocation rule $f$ is MIDR, we obtain the following inequalities:
\begin{align*}
 \sum_{j=1}^n \vec{x}_j \cdot f(\matr{x}) + h(f(\matr{x})) & \geq
 \sum_{j=1}^n \vec{x}_j \cdot f(\matr{x}_{-i}) + h(f(\matr{x}_{-i})) \\
 \sum_{j \neq i} \vec{x}_j \cdot f(\matr{x}_{-i}) + h(f(\matr{x}_{-i})) & \geq
 \sum_{j \neq i} \vec{x}_j \cdot f(\matr{x}) + h(f(\matr{x}))
\end{align*}
We apply the MIDR condition to $\matr{x}$, for the first inequality, and to $\matr{x}_{-i}$, for the second inequality. Summing up the two inequalities, we obtain that $\matr{x}_i \cdot f(\matr{x}) \geq \matr{x}_i \cdot f(\matr{x}_{-i})$, i.e., the participation condition. \qed

\subsection{Continuous Allocation Rules with Participation: The Proof of Lemma~\ref{l:chara_weak_parti}}
\label{s:app:chara_weak_parti}

Recall that for any valuation profile $\vec{x}$, the probability distribution of a strongly anonymous allocation rule $f$ depends only on the weight vector $\vec{w}(\vec{x})$ of the outcomes. Hence, we fix a pair of arbitrary weight vectors $\vec{w}, \vec{v} \in \reals_{\geq 0}^m$. For any $a \in (0, 1)$, since $f$ satisfies participation, we have that
\[ (1 - a) \vec{w} \cdot f(a \vec{v} + (1 - a) \vec{w}) \geq
   (1 - a) \vec{w} \cdot f(a \vec{v}) \]
Using that $a \neq 1$ and scale invariance, i.e., that $f(a \vec{v}) = f(\vec{v})$, we obtain that
\[ \vec{w} \cdot f(a \vec{v} + (1 - a) \vec{w}) \geq \vec{w} \cdot f(\vec{v}) \]
This holds for every $a \in (0, 1)$. Thus, taking the limit as $a$ goes to $0$ and using the hypothesis that $f$ is continuous, we get that for all weight vectors $\vec{w}, \vec{v} \in \reals^m_{\geq 0}$,
\[ \vec{w} \cdot f(\vec{w}) \geq \vec{w} \cdot f(\vec{v}) \]

We can now define $Z$ as the image set of $f$, i.e.,
$Z = \{ \vec{z} ~|~ \exists \vec{v}\mbox{ such that }\vec{z} = f(\vec{v}) \}$.
Using the previous inequality, we obtain that for all weight vectors $\vec{w}$ and all $\vec{z} \in Z$,
$\vec{w} \cdot f(\vec{w}) \geq \vec{w} \cdot \vec{z}$.
Since $f(\vec{w}) \in Z$, a necessary condition for $f$ to satisfy participation is that $f(\vec{w}) = \arg \max_{z \in Z} \vec{w} \cdot \vec{z}$.
\qed

\subsection{The Proof of Lemma~\ref{l:robustness+participation}}
\label{s:app:robustness+participation}

Since $F$ is robust, for any agent $i$, for any valuation pair $\vec{x}_i$ and $\vec{y}_i$ and for all reported valuations $\matr{y}_{-i}$ and verification vectors $\vec{s}_{-i}$,
\begin{align*}
 F((\matr{y}_{-i}, \vec{x}_i), (\vec{s}_{-i}, 1)) & =
 F(((\matr{y}_T)_{-i}, \vec{x}_i), (1, \ldots ,1)) \mbox{\ \ \ and}\\
 F((\matr{y}_{-i}, \vec{y}_i), (\vec{s}_{-i}, 0)) & = F((\matr{y}_T)_{-i}, (1, \ldots ,1))
\end{align*}
Here, we assume that $\vec{x}_i$ is $i$'s true type and $\vec{y}_i \neq \vec{x}_i$ is a misreport.
Moreover, using that $f$ (i.e., the allocation rule induced by $F$ on truthful reports) satisfies $\e$-participation, we have that:
\[ \vec{x}_i \cdot F(((\matr{y}_T)_{-i}, \vec{x}_i), (1, \ldots, 1)) \geq
   \e\,\vec{x}_i \cdot F((\matr{y}_T)_{-i}, (1, \ldots, 1)) \]
Combining the three equations above, we obtain that
\[ \vec{x}_i \cdot F((\matr{y}_{-i}, \vec{x}_i), (\vec{s}_{-i}, 1)) \geq
   \e\,\vec{x}_i \cdot F((\matr{y}_{-i}, \vec{y}_i), (\vec{s}_{-i}, 0)\,, \]
i.e., that the mechanism with verification $F$ is $\e$-truthful.
\qed

\subsection{$k$-Facility Location: Minimizing the Maximum Cost}
\label{s:app:k-center}

We first introduce some pieces of notation and terminology that we need for this section and for Section~\ref{s:app:k-facility-location}.
For $k$-Facility Location problems, we consider an underlying metric space $(M, d)$, where $M$ is a finite set of points and $d : M \times M \mapsto \reals_{\geq 0}$ is a distance function, which is non-negative, symmetric, and satisfies the triangle inequality.
For each $t \in M$ and $M' \subseteq M$, we let $d(t, M') = \min\{ d(t, t') : t' \in M' \}$.

The outcomes are all subsets of $k$ locations in $M$, i.e., all $C \subseteq M$ with $|C| = k$. Each agent $i$ has a preferred location $t_i \in M$ and her ``valuation'' for outcome $C$ is $\vec{x}_i(C) = -d(t_i, C)$, i.e., minus the distance of her preferred location to the nearest facility in $C$. The minus sign is due to the cost minimization nature of the problem. So, each agent $i$ aims at minimizing $d(t_i, C)$.

A (possibly randomized) mechanism $F$ takes as input a profile $\vec{z} = (z_1, \ldots, z_n)$ of reported locations%
\footnote{Note that in Facility Location, each agent declares an ``observable'' type, namely, her preferred location, to the mechanism and not her entire valuation / cost
function, as she does in the utilitarian voting domain. The mechanism deduces her cost function based on her reported ``observable'' type.}.
Using oracle access to a verification vector $\vec{s}$, the mechanism $F$ maps $\vec{z}$ to a set $C \subseteq M$ of $k$ facilities. The cost of each agent $i$ is the expected distance of her true location $t_i$ to the nearest facility in $C$. For clarity, we denote $i$'s expected cost as
\( \cost[t_i, F(\vec{z}, \vec{s})] = \Exp_{C \sim F(\vec{z}, \vec{s})}[d(t_i, C)] \), %
which agent $i$ seeks to minimize.

The main properties (e.g., truthfulness, robustness, approximation ratio) of a Facility Location mechanism with verification are defined as in Sections~\ref{s:prelim}~and~\ref{s:model}, with the only difference that the agents now seek to minimize their costs.
E.g., a mechanism with verification $F$ is truthful if for any agent $i$, for any location pair $t_i$ and $z_i$, and for all reported locations $\vec{z}_{-i}$ and verification vectors $\vec{s}_{-i}$,
\[  \cost[t_i, F((\vec{z}_{-i}, t_i), (\vec{s}_{-i}, 1))] \leq
    \cost[t_i, F((\vec{z}_{-i}, z_i), (\vec{s}_{-i}, 0))] \]
%
%

In this section, we focus on minimizing $\max_{i \in N} \{ d(t_i, F(\vec{t}, \vec{1})) \}$, i.e., the maximum distance of any agent to the nearest facility. The Greedy mechanism (or $G$, for brevity, see  Mechanism~\ref{alg:k-center}) is a truthful and robust extension of the $2$-approximate Greedy algorithm for $k$-Center (see e.g., \cite[Section~2.2]{WS10}). In the description of Greedy, we write $i \in \vec{z}$ to denote that the reported location of agent $i$ participates in the location profile $\vec{z}$. Also, we assume that ties are broken in some fixed deterministic way. We next prove Theorem~\ref{th:k-center}, stated in Section~\ref{s:facility}.

\begin{algorithm}[t]
\caption{\label{alg:k-center}The Greedy Mechanism $G(\vec{z}, \vec{s})$ for $k$-Facility Location}
\begin{algorithmic}\normalsize
    \State $L \leftarrow \emptyset$\,;
    \ \ \ Pick the first agent $i \in \vec{z}$ and let $C \leftarrow \{ z_i \}$
    \While{$|C| < k$}
        \State $i \leftarrow \arg\max_{i \in \vec{z}} d(z_i, C)$
        \State $C \leftarrow C \union \{ z_i \}$
    \EndWhile
    \For{all $z_i \in C$}
        \If{$\ver(i) \neq 1$} $L \leftarrow L \union \{ i \}$\EndIf
    \EndFor
    \If{$L \neq \emptyset$} \Return $G(\vec{z}_{-L}, \vec{s}_{-L})$
    \Else\ \Return $C$
    \EndIf
\end{algorithmic}\end{algorithm}

\begin{proof}[of Theorem~\ref{th:k-center}]
Clearly, $\Ver(G) = k$, since if all agents are truthful, Greedy verifies only the agents allocated a facility. Moreover, the approximation ratio of Greedy is $2$ (see e.g., \cite[Theorem~2.3]{WS10}).

By Lemma~\ref{l:robustness+participation}, to show that Greedy is truthful, it suffices to show that Greedy is robust and satisfies participation. As for the latter, we fix an agent $i$ with true location $t_i$ and a location profile $\vec{z}_{-i}$. If $t_i$ is allocated a facility in $(\vec{z}_{-i}, t_i)$, $i$'s cost is $0$. Otherwise, excluding $t_i$ from $(\vec{z}_{-i}, t_i)$ does not affect the execution of Greedy and $i$'s cost is the same in $(\vec{z}_{-i}, t_i)$ and in $\vec{z}_{-i}$.

By Lemma~\ref{l:oblivious}, robustness follows from the obliviousness of Greedy, since Greedy is a recursive mechanism with verification. As for obliviousness, let us assume that all agents in $C$ are truthful. Then, for any set $L$, with $C \cut L = \emptyset$, the outcome of Greedy on both $\vec{z}$ and $\vec{z}_{-L}$ is $C$. In words, if all agents in $C$ are truthful, the agents not in $C$ do not affect the outcome of Greedy. Therefore, Greedy is oblivious and robust.
\qed\end{proof}

\subsection{$k$-Facility Location: Minimizing the Social Cost}
\label{s:app:k-facility-location}

In this section, we focus on minimizing $\sum_{i=1}^n d(t_i, F(\vec{t}, \vec{1}))$, i.e, the total distance of the agents to the nearest facility of the mechanism. One may regard this version of $k$-Facility Location as a welfare maximization problem but with negative valuations.

Mechanism~\ref{alg:k-facility-location} (or $P$, for brevity) is a truthful and robust extension of the Proportional mechanism \cite{LSWZ10}, which achieves an approximation ratio of $\Theta(\ln k)$ for the objective of social cost \cite{AV07}. Mechanism~\ref{alg:k-facility-location} is essentially a randomized version of Mechanism~\ref{alg:k-center}. We next prove Theorem~\ref{th:k-facility-location}, stated in Section~\ref{s:facility}.

\begin{algorithm}[t]
\caption{\label{alg:k-facility-location}The Proportional Mechanism $P(\vec{z}, \vec{s})$ for $k$-Facility Location}
\begin{algorithmic}\normalsize
\State $L \leftarrow \emptyset$\,;
    \ \ \ Pick an agent $i \in \vec{z}$ uniformly at random
    and let $C \leftarrow \{ z_i \}$
    \While{$|C| < k$}
        \State pick an agent $i \in \vec{z}$ with probability
               $d(z_i, C)/\sum_{j \in \vec{z}} d(z_j, C)$
        \State $C \leftarrow C \union \{ z_i \}$
    \EndWhile
    \For{all $z_i \in C$}
        \If{$\ver(i) \neq 1$} $L \leftarrow L \union \{ i \}$\EndIf
    \EndFor
    \If{$L \neq \emptyset$} \Return $P(\vec{z}_{-L}, \vec{s}_{-L})$
    \Else\ \Return $C$
    \EndIf
\end{algorithmic}\end{algorithm}

\begin{proof}[of Theorem~\ref{th:k-facility-location}]
The approximation ratio of Mechanism~\ref{alg:k-facility-location} is shown in \cite[Theorem~5.1]{AV07}. Moreover, $\Ver(P) = k$, since if all agents report truthfully, Proportional verifies only the agents allocated a facility. By Lemma~\ref{l:robustness+participation}, it suffices to show that Proportional is robust and satisfies participation. The proof is a generalization of the proof of Theorem~\ref{th:k-center}.

By Lemma~\ref{l:oblivious}, robustness follows from the obliviousness of Proportional, since Proportional is a recursive mechanism with verification. The obliviousness of Proportional was first observed in \cite{FT10}. Since our setting is different, we include a proof here for completeness. So, we next show that for all location profiles $\vec{z}$ and verification vectors $\vec{s}$, with $L = N \setminus T(\vec{s})$, and all outcomes $C$,
\begin{equation}\label{eq:prop_oblivious}
  \Prob[ P(\vec{z}, \vec{s}) = C\,|\,C \cut L = \emptyset] =
  \Prob[ P(\vec{z}_{-L}, \vec{1}) = C]
\end{equation}
which implies that Proportional is oblivious (note also that any possible $C \not\subseteq T(\vec{s})$ has probability $0$).
To establish (\ref{eq:prop_oblivious}), we observe that conditional on the event that no misreporting agent is selected in $C$ by $P(\vec{z}, \vec{s})$, the probability distribution of $P(\vec{z}, \vec{s})$ is identical to the probability distribution of $P(\vec{z}_{-L}, \vec{1})$.
This claim is shown by induction on the number of selected agents and a simple coupling argument. To this end, let us fix a location profile $\vec{z}$ and the set $L$ of misreporting agents in $\vec{z}$. The first agent is selected uniformly at random from $N \setminus L$ both by $P(\vec{z}, \vec{s})$, conditional on the event that the selected agent is not in $L$, and by $P(\vec{z}_{-L}, \vec{1})$.
Assume inductively that both executions $P(\vec{z}, \vec{s})$, conditional on $C \cut L = \emptyset$, and $P(\vec{z}_{-L}, \vec{1})$ agree on the selected set $C$ up to some point. Then, the next agent in $C$ is selected by both executions from exactly the same probability distribution, since due to the conditioning, $P(\vec{z}, \vec{s})$ does not consider any agents in $L$. Finally, since the set $C$ of $k$ agents selected by both executions does not include any agents from $L$, both executions result in $C$ with identical probability.

We next show that Mechanism~\ref{alg:k-facility-location} satisfies participation. We fix an agent $i$ with location $t_i$ and a location profile $\vec{t}_{-i}$.
Since participation is a property of the allocation rule, we assume that all agents are truthful and completely ignore verification from this point on. So, we simply write $P(\vec{t})$, instead of $P(\vec{t}, \vec{s})$.

For each round $\ell = 0, 1, \ldots, k$, we let $\cost[t_i, P(\vec{t})| C_\ell]$ be the expected cost of $i$ at the end of the $P$, given that the facility set of $P(\vec{t})$ at the end of round $\ell$ is $C_\ell$.
Similarly, we let $\cost[t_i, P(\vec{t}_{-i})| C_\ell]$ be the expected cost of $i$ at the end of the $P(\vec{t}_{-i})$, given that $i$ does not participate in the mechanism and that the facility set of $P$ at the end of round $\ell$ is $C_\ell$.

For $\ell = k$, $\cost[t_i, P(\vec{t})| C_k] = \cost[t_i, P(\vec{t}_{-i})| C_k] = d(t_i, C_k)$. For each round $\ell = 1, \ldots, k-1$, if $i$ participates in the mechanism, with probability proportional to $d(t_i, C_\ell)$ the next facility is placed at $t_i$, in which case $i$'s cost is $0$, while for each agent $j \neq i$, with probability proportional to $d(t_j, C_\ell)$ the next facility is placed at $t_j$, in which case the expected cost of $i$ is $\cost[t_i, P(\vec{t})| C_\ell \union \{ t_j \}]$. Therefore, the expected cost of $i$ is:
\begin{equation}\label{eq:rec-cost}
 \cost[t_i, P(\vec{t}) | C_\ell] =
 \frac{\sum_{j \neq i} d(t_j, C_\ell)\,\cost[t_i, P(\vec{t})|C_\ell \union \{t_j\}]}
 { d(t_i, C_\ell) + \sum_{j \neq i} d(t_j, C_\ell)}
\end{equation}
For $\ell = 0$, the expected cost of agent $i$ is:
\begin{equation}\label{eq:rec-cost0}
 \cost[t_i, P(\vec{t})] = \frac{\sum_{j \neq i} \cost[t_i, P(\vec{t})|\{ t_j \}]}{n}
\end{equation}

If $i$ does not participate in the mechanism, her expected cost for $\ell = 1, \ldots, k$ is:
\begin{equation}\label{eq:rec-cost-nonp}
 \cost[t_i, P(\vec{t}_{-i}) | C_\ell] =
 \frac{\sum_{j \neq i} d(t_j, C_\ell)\,\cost[t_i, P(\vec{t}_{-i})|C_\ell \union \{t_j\}]}
 { \sum_{j \neq i} d(t_j, C_\ell)}
\end{equation}
If $i$ does not participate in the mechanism, her expected cost for $\ell = 0$ is:
\begin{equation}\label{eq:rec-cost0-nonp}
 \cost[t_i, P(\vec{t}_{-i})] = \frac{\sum_{j \neq i} \cost[t_i, P(\vec{t}_{-i})|\{ t_j \}]}{n-1}
\end{equation}

Using induction on $\ell$, we next show that for any round $\ell = 0, 1, \ldots, k$ and any set $C_\ell$,
\begin{equation}\label{eq:prop-ind}
 \cost[t_i, P(\vec{t})|C_\ell] \leq \cost[t_i, P(\vec{t}_{-i})|C_\ell]
\end{equation}
Clearly, (\ref{eq:prop-ind}) implies that $P$ (Mechanism~\ref{alg:k-facility-location}) satisfies participation.

For the basis of the induction, we observe that (\ref{eq:prop-ind}) holds trivially for $\ell = k$. We inductively assume that (\ref{eq:prop-ind}) holds for $\ell+1$ and any facility set $C_{\ell+1}$, and show that (\ref{eq:prop-ind}) holds for $\ell$ and any facility set $C_\ell$.
If $\ell \geq 1$, we use (\ref{eq:rec-cost-nonp}) and obtain that:
\begin{eqnarray*}
 \cost[t_i, P(\vec{t}_{-i})|C_\ell] & = &
  \frac{\sum_{j \neq i} d(t_j, C_\ell)\,\cost[t_i, P(\vec{t}_{-i})|C_\ell \union \{t_j\}]}
       {\sum_{j \neq i} d(t_j, C_\ell)} \\
 & \geq & \frac{\sum_{j \neq i} d(t_j, C_\ell)\,\cost[t_i, P(\vec{t})|C_\ell \union \{t_j\}]}
               {d(t_i, C_\ell) + \sum_{j \neq i} d(t_j, C_\ell)}
 = \cost[t_i, P(\vec{t})|C_\ell]
\end{eqnarray*}
The inequality follows from the induction hypothesis and that $d(t_i, C_\ell) \geq 0$. The second equality is (\ref{eq:rec-cost}).

If $\ell = 0$, we use (\ref{eq:rec-cost0-nonp}) and obtain that:
\begin{eqnarray*}
 \cost[t_i, P(\vec{t}_{-i})] & = &
 \frac{\sum_{j \neq i} \cost[t_i, P(\vec{t}_{-i})|\{ t_j \}]}{n-1} \\
 & \geq & \frac{\sum_{j \neq i} \cost[t_i, P(\vec{t})|\{ t_j \}]}{n}
 = \cost[t_i, P(\vec{t})]
\end{eqnarray*}
The inequality follows from the induction hypothesis and the fact that $n > n-1$. The second equality is (\ref{eq:rec-cost0}). Thus we have established (\ref{eq:prop-ind}) for any round $\ell = 0, 1, \ldots, k$, and any facility set $C_\ell$.

To conclude the proof of Theorem~\ref{th:k-facility-location}, we observe that since $P$ is robust and satisfies participation, it is also truthful, by Lemma~\ref{l:robustness+participation}.
\qed\end{proof}

\subsection{The Proof of Theorem~\ref{th:power}}
\label{s:app:power}

We establish the properties of $\Pow^\ell$ for any $\ell \geq 0$. To obtain Theorem~\ref{th:power}, we set $\ell = \ln m / \eps$.

If all agents are truthful, $\Pow^\ell$ picks an outcome $j$ and a tuple $\vec{t} \in N^{\ell}$ and returns $j$ after verifying all agents in $\vec{t}$. Since there are at most $\ell$ different agent indices in $\vec{t}$, the verification of $\Pow^\ell$ is at most $\ell$.

Due to its proportional nature, $\Pow^\ell$ is oblivious to the declarations of misreporting agents not verified (see also the proof of obliviousness for Proportional, in Section~\ref{s:app:k-facility-location}). Specifically, we show that for all valuation profiles $\matr{x}$ and verification vectors $\vec{s}$, with $L = N \setminus T(\vec{s})$, and all outcomes $j$,
\begin{equation}\label{eq:power_oblivious}
\Prob[\Pow^\ell(\matr{x}, \vec{s}) = j\,|\, V(\matr{x}) \cut L = \emptyset ] = \Prob[\Pow^\ell(\matr{x}_{-L}, \vec{1}) = j]
\end{equation}
Therefore, $\Pow^\ell$ is oblivious. Since it is also recursive, Lemma~\ref{l:oblivious} implies that $\Pow^\ell$ is robust.

To prove (\ref{eq:power_oblivious}), we observe that for any outcome $j$, the condition $V(\matr{x}) \cut L = \emptyset$ implies that $\Pow^\ell(\matr{x}, \vec{s})$ selects only terms $x_{t_1}(j) x_{t_2}(j) \cdots x_{t_\ell}(j)$ with truthful agents in $T(\vec{s})$. Every term with some valuation $x_{t'}(j)$ of a misreporting agent $t' \in L$ is excluded, since $V(\matr{x}) \cut L = \emptyset$ implies that $\vec{t} \in T(\vec{s})^\ell$. Therefore, for any outcome $j$, $\Pow^\ell(\matr{x}, \vec{s})$, conditional on $V(\matr{x}) \cut L = \emptyset$, and $\Pow^\ell(\matr{x}_{-L}, \vec{1})$ have exactly the same set of ``allowable'' terms from which they select $x_{t_1}(j) x_{t_2}(j) \cdots x_{t_\ell}(j)$ and $\vec{t}$. In both, each such term is selected with probability proportional to its value. 
Taking all outcomes into account, we obtain that the distribution of $\Pow^\ell(\matr{x}, \vec{s})$, conditional on $V(\matr{x}) \cut L = \emptyset$, and the distribution of $\Pow^\ell(\matr{x}_{-L}, \vec{1})$ are identical.

We next establish the approximation ratio of $\Pow^\ell$ for the objective of maximizing the social welfare. The intuition is that as $\ell$ increases from $0$ to $\infty$, the probability distribution of $\Pow^\ell$ sharpens from the uniform allocation, where each outcome is selected with probability $1/m$, to the optimal allocation. The rate of this transition determines the approximation ratio and is quantified by the following.

\begin{lemma}\label{l:power-approx}
For any $\ell \geq 0$, $\Pow^\ell$ is $m^{-1/(\ell+1)}$-approximate for the social welfare.
\end{lemma}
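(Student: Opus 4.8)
The plan is to evaluate the expected social welfare of $\Pow^\ell$ in closed form and bound it from below by $m^{-1/(\ell+1)}\OPT$ using H\"older's inequality. Fix a valuation profile $\matr{x}$ with weight vector $\vec{w} = \sum_{i} \vec{x}_i$, and first dispose of the trivial cases: if $\vec{w} = \vec{0}$ both sides are $0$, and $\ell = 0$ is the uniform allocation, whose welfare is $\frac{1}{m}\sum_j w_j \geq \frac{1}{m}\max_j w_j = m^{-1}\OPT$, so from now on assume $\ell > 0$ and $\vec{w}\neq\vec{0}$. Since $\Pow^\ell$ selects outcome $j$ with probability $w_j^{\ell}/\sum_{q} w_q^{\ell}$, its expected social welfare on $\matr{x}$ is
\[
 \vec{w}\cdot \Pow^\ell(\matr{x}) = \sum_{j=1}^m w_j\cdot\frac{w_j^{\ell}}{\sum_{q} w_q^{\ell}} = \frac{\sum_j w_j^{\ell+1}}{\sum_j w_j^{\ell}} = \frac{\norm{\vec{w}}_{\ell+1}^{\ell+1}}{\norm{\vec{w}}_{\ell}^{\ell}},
\]
while the optimal social welfare is $\norm{\vec{w}}_\infty = \max_j w_j$.

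The key step is to upper-bound the denominator $\sum_j w_j^{\ell}$. Applying H\"older's inequality to $\sum_j w_j^{\ell}\cdot 1$ with conjugate exponents $p = \frac{\ell+1}{\ell}$ and $q = \ell+1$ (so $\tfrac1p+\tfrac1q = 1$) yields
\[
 \sum_j w_j^{\ell} \leq \Big(\sum_j w_j^{\ell+1}\Big)^{\ell/(\ell+1)}\cdot m^{1/(\ell+1)}.
\]
Dividing $\sum_j w_j^{\ell+1}$ by this bound gives
\[
 \frac{\sum_j w_j^{\ell+1}}{\sum_j w_j^{\ell}} \geq m^{-1/(\ell+1)}\Big(\sum_j w_j^{\ell+1}\Big)^{1/(\ell+1)} = m^{-1/(\ell+1)}\,\norm{\vec{w}}_{\ell+1} \geq m^{-1/(\ell+1)}\,\norm{\vec{w}}_\infty,
\]
where the last inequality is the standard $\norm{\cdot}_{\ell+1}\geq\norm{\cdot}_\infty$ (equivalently, $\sum_j w_j^{\ell+1}\geq(\max_j w_j)^{\ell+1}$). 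Combining the two displays shows $\vec{w}\cdot\Pow^\ell(\matr{x}) \geq m^{-1/(\ell+1)}\norm{\vec{w}}_\infty$, which is the claimed approximation ratio; it remains to recall that for strongly anonymous rules the approximation guarantee may be checked on the weight vector $\vec{w}$.

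I do not expect a genuine obstacle here: the only points needing care are picking the H\"older exponents so that the trivial factors $1^q$ contribute exactly $m^{1/(\ell+1)}$, and handling the boundary value $\ell = 0$ (the uniform allocation) and zero-weight outcomes separately rather than inside the H\"older estimate. Everything else is routine algebra, which I would not spell out further.
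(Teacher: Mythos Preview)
Your proof is correct and essentially identical to the paper's. The paper expresses the expected welfare as $|\vec{w}^{\ell+1}|/|\vec{w}^{\ell}|$, rewrites the approximation ratio as $\left(\|\vec{w}\|_{\ell+1}/\|\vec{w}\|_\ell\right)^{\ell}\cdot\|\vec{w}\|_{\ell+1}/\|\vec{w}\|_\infty$, and then invokes the standard norm comparison $\|\vec{w}\|_\ell \leq m^{1/\ell - 1/(\ell+1)}\|\vec{w}\|_{\ell+1}$ together with $\|\vec{w}\|_\infty \leq \|\vec{w}\|_{\ell+1}$; your H\"older step with exponents $p=(\ell+1)/\ell$, $q=\ell+1$ is exactly that norm comparison in a different guise, and the remaining algebra is the same.
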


\begin{proof}
Let us fix any valuation profile $\matr{x}$ and let $\vec{w} \equiv  \vec{w}(\matr{x})$ be the outcome weights in $\matr{x}$. For the approximation ratio, we can assume that all agents are truthful. So, we let $\Pow^\ell(\vec{w}) \equiv \Pow^\ell(\matr{x}, \vec{1})$, for convenience.

The optimal social welfare is $\|\vec{w}\|_{\infty}$. The expected social welfare of the mechanism is
$\vec{w} \cdot \Pow^\ell(\vec{w}) = |\vec{w}^{\ell + 1}|/|\vec{w}^{\ell}|$.
So the approximation ratio of $\Pow^\ell$ is equal to:
\[ \frac{|\vec{w}^{\ell + 1}|}{|\vec{w}^{\ell }|\,\|\vec{w}\|_{\infty}} =
   \frac{(\|\vec{w}\|_{\ell+1})^{\ell+1}}
        {(\|\vec{w}\|_{\ell})^{\ell}\,\|\vec{w}\|_{\infty}} =
   \left( \frac{\|\vec{w}\|_{\ell+1}}{\|\vec{w}\|_{\ell}} \right)^{\!\!\ell}
   \frac{\|\vec{w}\|_{\ell+1}}{\|\vec{w}\|_{\infty}}
\]
Using that $\|\vec{w}\|_{\infty} \leq \|\vec{w}\|_{\ell+1}$ and that
$\|\vec{w}\|_{\ell} \leq m^{\left( \frac{1}{\ell} - \frac{1}{\ell+1} \right)} \|\vec{w}\|_{\ell+1} = m^{1/\ell(\ell+1)} \|\vec{w}\|_{\ell+1}$,
we obtain that the approximation ratio of $\Pow^\ell$ is at least $m^{-1/(\ell+1)}$.
\qed\end{proof}

Unfortunately, Power does not satisfy participation. For a simple example, let $m = 2$ and $n = 2$ and let $\vec{x}_1 = (1, 0)$ and $\vec{x}_2 = (3/4, 1/4)$. Then, agent $2$ prefers outcome $1$, but her participation decreases its probability from $1$, when agent $1$ is alone, to something less than $1$, when both agents participate.

However, Power satisfies participation approximately. To prove this, we use the fact that the Partial Power allocation (see Section~\ref{s:partial}) is MIDR, by definition, and essentially a slightly ``curved'' version of Power. Using that the probabilities that each outcome is selected in Partial Power and in Power are close to each other and the fact that Partial Power satisfies participation, we obtain the following. Since $\Pow^\ell$ is robust and satisfies $m^{-1/(\ell+1)}$-participation, Lemma~\ref{l:robustness+participation} implies that $\Pow^\ell$  is $m^{-1/(\ell+1)}$-truthful.

\begin{lemma}\label{l:power-participation}
For any $\ell \geq 0$, $\Pow^\ell$ satisfies $m^{-1/(\ell+1)}$-participation.
\end{lemma}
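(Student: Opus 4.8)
The plan is to bootstrap $\e$-participation of $\Pow^\ell$, with $\e = m^{-1/(\ell+1)}$, off the \emph{exact} participation of the Partial Power allocation $f^{(\ell,r)}$ of Section~\ref{s:partial}: that allocation is MIDR by construction, hence satisfies participation by Lemma~\ref{l:participation}, and—crucially—on every weight vector it is merely a scalar multiple of $\Pow^\ell$, the scalar being squeezed between $m^{-1/(\ell+1)}$ and $1$ (up to a common factor). For $\ell = 0$ nothing is needed, since $\Pow^0$ is the constant uniform rule, which satisfies participation exactly; so assume $\ell > 0$ below.

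First I would write both rules in closed form on a weight vector $\vec{w} \equiv \vec{w}(\matr{x})$ (legitimate, as both are strongly anonymous): $\Pow^\ell_j(\vec{w}) = w_j^\ell / \|\vec{w}^\ell\|_1$, and, from Section~\ref{s:partial}, $f^{(\ell,r)}_j(\vec{w}) = (1-1/r)\,w_j^\ell / \bigl(m^{1/(\ell+1)}\|\vec{w}^\ell\|_{1+1/\ell}\bigr)$. Dividing these, the quotient $f^{(\ell,r)}_j(\vec{w})/\Pow^\ell_j(\vec{w})$ is seen to be independent of the outcome $j$; calling it $c(\vec{w})$, we get $f^{(\ell,r)}(\vec{w}) = c(\vec{w})\,\Pow^\ell(\vec{w})$ with
\[ c(\vec{w}) \;=\; (1-1/r)\,m^{-1/(\ell+1)}\bigl(\|\vec{w}\|_\ell/\|\vec{w}\|_{\ell+1}\bigr)^{\ell}. \]
Next I would pin $c(\vec{w})$ down on both sides. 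Since $\|\cdot\|_p$ is non-increasing in $p$ we have $\|\vec{w}\|_{\ell+1} \le \|\vec{w}\|_\ell$, and the H\"older bound $\|\vec{w}\|_\ell \le m^{1/(\ell(\ell+1))}\|\vec{w}\|_{\ell+1}$—exactly the two inequalities already used in the proof of Lemma~\ref{l:power-approx}—give $1 \le \bigl(\|\vec{w}\|_\ell/\|\vec{w}\|_{\ell+1}\bigr)^\ell \le m^{1/(\ell+1)}$, hence $(1-1/r)\,m^{-1/(\ell+1)} \le c(\vec{w}) \le 1-1/r$ for every $\vec{w} \ne \vec{0}$.

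Finally, for a fixed agent $i$ and profile $\matr{x}$, set $\vec{w} = \vec{w}(\matr{x})$ and $\vec{w}' = \vec{w}(\matr{x}_{-i})$. Participation of $f^{(\ell,r)}$ reads $c(\vec{w})\,\vec{x}_i\cdot\Pow^\ell(\vec{w}) \ge c(\vec{w}')\,\vec{x}_i\cdot\Pow^\ell(\vec{w}')$; dividing by $c(\vec{w}) > 0$, using $\vec{x}_i\cdot\Pow^\ell(\vec{w}') \ge 0$, and invoking $c(\vec{w}')/c(\vec{w}) \ge m^{-1/(\ell+1)}$ (the two $(1-1/r)$'s cancel) yields $\vec{x}_i\cdot\Pow^\ell(\vec{w}) \ge m^{-1/(\ell+1)}\,\vec{x}_i\cdot\Pow^\ell(\vec{w}')$, which is $m^{-1/(\ell+1)}$-participation. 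The conclusion is $r$-independent, so any $r \ge 2$ works (equivalently, one may take $r \to \infty$, i.e.\ maximize welfare over $Z_{\ell,\infty} = \{\vec{z} \ge \vec{0}: \|\vec{z}\|_{1+1/\ell} \le m^{-1/(\ell+1)}\}$, which is still MIDR).

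I do not expect a serious obstacle: the whole argument rests on the one-line observation that $f^{(\ell,r)}$ and $\Pow^\ell$ are outcome-wise proportional, plus the norm inequalities already established for Lemma~\ref{l:power-approx}. The only point needing a word of care is the degenerate case $\vec{w}' = \vec{0}$ (agent $i$ alone) or $\vec{w} = \vec{0}$ (all valuations zero), where both rules are defined by the same convention; there the inequality $\vec{x}_i\cdot\Pow^\ell(\vec{w}) \ge m^{-1/(\ell+1)}\,\vec{x}_i\cdot\Pow^\ell(\vec{w}')$ follows directly from Lemma~\ref{l:power-approx} together with $\|\vec{x}_i\|_\infty \ge |\vec{x}_i|/m$, so the claim is immediate.
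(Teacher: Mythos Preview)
Your proposal is correct and follows essentially the same route as the paper: both use that Partial Power (in the limit $r\to\infty$, or equivalently for any fixed $r$ with the $(1-1/r)$ factors cancelling) is MIDR and hence satisfies exact participation, and then sandwich the Power allocation against it via the norm inequalities $|\vec{y}| \ge \|\vec{y}\|_{1+1/\ell} \ge m^{-1/(\ell+1)}|\vec{y}|$. Your packaging via the scalar $c(\vec{w})$ is just a rephrasing of the paper's direct norm chain $m^{1/(\ell+1)}\frac{\vec{x}_i\cdot\vec{w}^\ell}{|\vec{w}^\ell|} \ge \frac{\vec{x}_i\cdot\vec{w}^\ell}{\|\vec{w}^\ell\|_{1+1/\ell}} \ge \frac{\vec{x}_i\cdot\vec{w}_{-i}^\ell}{\|\vec{w}_{-i}^\ell\|_{1+1/\ell}} \ge \frac{\vec{x}_i\cdot\vec{w}_{-i}^\ell}{|\vec{w}_{-i}^\ell|}$, and your handling of the degenerate cases is a small extra care the paper omits.
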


\begin{proof}
Let us fix any valuation profile $\matr{x}$ and let $\vec{w}\equiv \vec{w}(\matr{x})$ be the outcome weights in $\matr{x}$. Since participation is a property of the allocation rule, we can assume wlog. that all agents are truthful in $\matr{x}$. So, we let $\Pow^\ell(\vec{w}) \equiv \Pow^\ell(\matr{x}, \vec{1})$, for convenience.

We need to show that for any agent $i$, her expected utility wrt. the probability distribution of $\Pow^{\ell}(\vec{w})$ is no less than $m^{-1/(\ell+1)}$ times her expected utility wrt. the probability distribution of $\Pow^{\ell}(\vec{w}_{-i})$. Since $\Pow^{\ell}(\vec{w})$ selects each outcome $j$ with probability $w^\ell_j / |\vec{w}^{\ell }|$, we need to show that for any agent $i$,
\[
   m^{\frac{1}{\ell+1}}\, \frac{\vec{x}_i \cdot \vec{w}^{\ell }}{|\vec{w}^{\ell }|}
 \geq \frac{\vec{x}_i \cdot \vec{w}^{\ell }_{-i}}{|\vec{w}^{\ell }_{-i}|}
\]

In Section~\ref{s:app:partial}, Lemma~\ref{l:optimization}, we consider the Partial Power allocation with parameter $\ell$ (and with $1-1/r = 1$) and show that it is MIDR with range $Z_\ell = \left\{ \vec{z} \in \reals^m_{\geq 0} : \| \vec{z} \|_{1+1/\ell} \leq m^{-1/(\ell+1)} \right\}$ and function $h(\vec{z}) = 0$. Therefore, by  Lemma~\ref{l:participation}, Partial Power satisfies participation. To establish that $\Pow^\ell$ satisfies participation, we observe that Partial Power with parameter $\ell$ is essentially a slightly ``curved'' version of $\Pow^\ell$, where each outcome $j$ is selected with probability $\frac{w^\ell_j}{m^{1/(1+\ell)} \|\vec{w}^{\ell }\|_{1+1/\ell}}$, instead of $w^\ell_j / |\vec{w}^{\ell }|$ in $\Pow^\ell$. Using the fact that Partial Power satisfies participation, we obtain that for any agent $i$,
\[ \frac{\vec{x}_i \cdot \vec{w}^{\ell }}{\|\vec{w}^{\ell }\|_{1+\frac{1}{\ell }}} \geq  \frac{\vec{x}_i \cdot \vec{w}^{\ell}_{-i}}{\|\vec{w}_{-i}^{\ell}\|_{1+\frac{1}{\ell}}} \]

Since for any $m$-dimensional vector $\vec y$, $|\vec{y}| \ge \|\vec{y}\|_{1+1/\ell} \geq \frac{1}{m^{1 / (\ell+1)}}\,|\vec{y}|$, we have that
\[  m^{\frac{1}{\ell+1}}\, \frac{\vec{x}_i \cdot \vec{w}^{\ell }}{|\vec{w}^{\ell }|} \ge \frac{\vec{x}_i \cdot \vec{w}^{\ell }}{\|\vec{w}^{\ell }\|_{1+\frac{1}{\ell }}}
\ge \frac{\vec{x}_i \cdot \vec{w}^{\ell }_{-i}}{\|\vec{w}_{-i}^{\ell}\|_{1+\frac{1}{\ell}}} \ge   \frac{\vec{x}_i \cdot \vec{w}^{\ell }_{-i}}{|\vec{w}^{\ell }_{-i}|}  \]
Thus, $\Pow^\ell$ satisfies $m^{-1/(\ell+1)}$-participation.
\qed\end{proof}

\subsection{A Detailed Proof of Theorem~\ref{th:lower_bound}}
\label{s:app:lower_bound}

In this section, we give a detailed proof of Theorem~\ref{th:lower_bound} from scratch, for completeness and clarity. Recall that the proof describes a random family of instances where any strongly anonymous truthful mechanism needs to verify a logarithmic number of agents in expectation.

To define such instances, we consider $m$ outcomes and $m$ disjoint groups of agents. Each group has a large number $\nu$ of agents. An agent in group $j$ has valuation $0$ for any outcome other than $j$ and valuation either $1$ or $\delta$ for outcome $j$, where $\delta > 0$ is extremely small (e.g., $\delta = 1/\nu^{10}$). In each group $j$, the probability that exactly $k$, $0 \leq k \leq \nu$, agents have valuation $1$ for outcome $j$ is $2^{-(k+1)}$.

The expected total valuation of each outcome is essentially $1$, since $\nu$ is very large and $\nu\delta$ is negligible. The probability that a group $j$ has at least $\log m$ agents with valuation $1$ is $2^{-\log m} = 1/m$. Hence, the probability that some group has at least $\log m$ agents with valuation $1$ is $1-(1-1/m)^m \geq 1-\exp(-1)$. A similar analysis shows that the probability that some group has more than $2\log m$ agents with valuation $1$ is at most $1/m$. Therefore, the expected maximum social welfare of such instances is $\Theta(\log m)$. In fact, one can show that a random instance has maximum social welfare $\Theta(\log m)$ with high probability.

We next consider any truthful mechanism $F$ with expected social welfare of $\Omega(\log m)$ for such instances and show that $F$ needs to verify a logarithmic number of agents. To this end, we focus on a group $j$ of agents and fix $\vec{x}_{-j}$, i.e., the declarations of the agents in all other groups.  Now we can assume wlog. that the probability of outcome $j$ depends only on the number of agents in group $j$ that declare valuation $1$ for outcome $j$. To see that this is indeed wlog., we first observe that the number of agents that declare $1$ for $j$ fully determines the number of agents that declare $\delta$ for $j$. Moreover, if the probability of outcome $j$ depends on the identities of the agents that declare $1$, then we could randomly permute the agents in group $j$ (and in all other groups) before running the mechanism $F$. This would lead to an anonymous mechanism with the same approximation and verification guarantees. Thus wlog. for any fixed $\vec{x}_{-j}$, $F$ should induce a sequence of probabilities $p_0, p_1, \ldots, p_k, \cdots$, where $p_k$ is the probability of outcome $j$, given that the number of agents that declare valuation $1$ for outcome $j$ is $k$.
Since the mechanism $F$ is truthful, if $k$ agents declare valuation $1$ for outcome $j$, we need to verify each of them with probability at least $p_k - p_{k-1}$. To see this, observe that the expected utility of an agent with valuation $\delta$ for outcome $j$ is $p_{k-1}\delta$, if she is truthful, and at least $p_{k}\delta$ times the probability that she is not verified, if she is not truthful and declares $1$. Since the former is no less than the later (and since $p_k \in [0, 1]$), $F$ needs to verify any agent that declares $1$ for $j$ with probability at least $p_k - p_{k-1}$. In words, verifying each agent that declares $1$ for outcome $j$ with probability at least $p_k - p_{k-1}$ is a sufficient condition for truthfulness.

Therefore, for any fixed $\vec{x}_{-j}$, when $k$ agents declare $1$ for outcome $j$, we need an expected verification of at least $k (p_k - p_{k-1})$ only for agents in group $j$. Assuming truthful reporting and taking the expectation over the number of agents in group $j$ with valuation $1$, we obtain that an expected verification of:
\begin{align*}
 \sum_{k = 0}^{\nu} 2^{-(k+1)} k (p_k - p_{k-1}) & \geq
 \sum_{k = 0}^{\nu} 2^{-(k+1)} p_k (k - (k+1)/2) \\ &=
 \frac{1}{2} \sum_{k = 0}^{\nu} 2^{-(k+1)} p_k k -
 \frac{1}{2} \sum_{k = 0}^{\nu} 2^{-(k+1)} p_k \\ &=
 \Exp[\mbox{welfare}_j\,|\, \vec{x}_{-j}]/2 -
 \Prob[\mbox{outcome $j$}\,|\, \vec{x}_{-j}]/2
\end{align*}
Note that the last equality ignores a negligible term of at most $\nu\delta$. Hence, the expected verification for agents in group $j$ is at least half the expected social welfare of the mechanism from group $j$, conditional on $\vec{x}_{-j}$, minus half the probability of outcome $j$ in $F$, conditional on $\vec{x}_{-j}$.

Removing the conditioning on $\vec{x}_{-j}$ and summing up over all groups $j$, we obtain that the expected verification is at least $\Exp[\mbox{welfare}]/2 - 1/2$ (with expectations taken over the random choices of the mechanism and over the random instances). Since such instances have a maximum social welfare of $\Theta(\log m)$ with high probability and since the mechanism achieves a constant approximation ratio, the mechanism $F$ on a random instance requires an expected verification of $\Omega(\log m)$ (the expectation is taken over the random choices of the mechanism) with high probability (the high probability refers to the selection of the instance).
\qed

\begin{remark}
In the proof of Theorem~\ref{th:lower_bound}, we assume for simplicity that $\nu$ is large and $\delta$ is tiny, so that $\nu\delta$ is negligible. In fact, it suffices to use $\nu = \log m/\eps$, for some small enough constant $\eps > 0$, and $\delta = 1 / \log m$. The proof is essentially identical, only the final calculations change. However, since now the total number of agents is $n = m\log m / \eps$, and $\log n = \Theta(\log m)$, we obtain that even if verification is quantified wrt. the number $n$ of agents, we still need logarithmic verification. 
\end{remark}

\subsection{Lower Bound on Truthful Mechanisms without Verification}
\label{s:app:gen_lower_bound}

  In this section we see that any truthful mechanism without verification achieves the same social welfare as the constant uniform allocation rule. This demonstrates the importance and the quality of our results in a fully general setting such as utilitarian voting. The following theorem is true even for partial allocation rules which means that even this opportunity without verification is useless.

\begin{theorem}
  Any randomized truthful mechanism without money and without verification (full or partial allocation) has approximation ratio to the optimal social welfare at most $m^{-1}$ if we have large enough number of agents.
\end{theorem}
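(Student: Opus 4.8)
The plan is to prove that every truthful $f$ (full or partial allocation) has approximation ratio at most $1/m$ for $n = m\nu$ agents (any $\nu \ge 1$); the constant uniform rule shows this is tight, so nothing can be gained from dispensing with verification.

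\medskip\noindent\emph{Step 1: reduce to a fully symmetric mechanism.} Given a truthful $f$, I average it over all $m!$ relabelings $\pi$ of the outcomes and all $n!$ permutations $\sigma$ of the agents, obtaining $\tilde f$. The truthfulness inequality $\vec{x}_i \cdot f((\matr{y}_{-i},\vec{x}_i)) \geq \vec{x}_i \cdot f((\matr{y}_{-i},\vec{y}_i))$ is linear in $f$ and is preserved both by relabeling outcomes (it becomes the same inequality for the agent's relabeled valuation) and by permuting agents, so $\tilde f$ is again truthful. Since social welfare and the optimal welfare are invariant under relabeling outcomes and permuting agents, the welfare of $\tilde f$ on any profile equals the average of the welfare of $f$ over the relabeled profiles, each of which is at least $\rho\cdot\OPT$; hence the approximation ratio of $\tilde f$ is at least that of $f$. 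It therefore suffices to bound the ratio of $\tilde f$, which is anonymous and invariant under outcome relabelings.

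\medskip\noindent\emph{Step 2: scale invariance of single‑minded reports.} The key fact I extract from truthfulness is that for every agent $i$, every direction $\vec{v}\in\reals_{\geq0}^m$, every $\alpha>0$ and all reports $\matr{y}_{-i}$,
\[
 \vec{v}\cdot \tilde f((\matr{y}_{-i},\alpha\vec{v})) = \vec{v}\cdot \tilde f((\matr{y}_{-i},\vec{v})).
\]
Indeed, an agent of true type $\vec{v}$ cannot gain by reporting $\alpha\vec{v}$, which gives $\vec{v}\cdot \tilde f((\matr{y}_{-i},\vec{v})) \geq \vec{v}\cdot \tilde f((\matr{y}_{-i},\alpha\vec{v}))$; an agent of true type $\alpha\vec{v}$ cannot gain by reporting $\vec{v}$, which after dividing by $\alpha>0$ gives the reverse inequality. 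Taking $\vec{v}=\vec{e}_j$, the unit valuation for outcome $j$, this says that rescaling the report of an agent who cares only about outcome $j$ leaves $\Prob[\text{outcome }j]$ unchanged (for the other $n-1$ reports fixed).

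\medskip\noindent\emph{Step 3: the bad instance.} Fix a small $\delta\in(0,1)$ and let $\matr{x}$ consist of $\nu$ agents with valuation $\vec{e}_1$ together with, for every outcome $j\ge 2$, $\nu$ agents with valuation $\delta\vec{e}_j$. Then $\OPT(\matr{x})=\nu$, attained uniquely by outcome $1$. Rescaling the $\nu$ agents of the first group one at a time from $\vec{e}_1$ down to $\delta\vec{e}_1$ and applying the identity of Step~2 with $\vec{v}=\vec{e}_1$ to each of them in turn (the other $n-1$ reports being held fixed at each step), $\Prob[\text{outcome }1]$ does not change, so $\tilde f_1(\matr{x})$ equals the probability of outcome $1$ on the profile $\delta B$, where $B$ has $\nu$ agents with valuation $\vec{e}_j$ for each outcome $j$. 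But every outcome relabeling $\pi$ maps $\delta B$ to an agent‑permutation of itself, so by anonymity and outcome‑symmetry the output distribution of $\tilde f$ on $\delta B$ is invariant under $\pi$; being invariant under all of $S_m$, it is uniform on its support, whence $\tilde f_1(\delta B)\le 1/m$ and thus $\tilde f_1(\matr{x})\le 1/m$. Consequently the welfare of $\tilde f$ on $\matr{x}$ is at most $\nu\,\tilde f_1(\matr{x})+\nu\delta\sum_{j\ge2}\tilde f_j(\matr{x})\le \nu/m+\nu\delta$, so the approximation ratio on $\matr{x}$ is at most $1/m+\delta$. Since $\delta>0$ is arbitrary and $\matr{x}$ has $n=m\nu$ agents, the ratio of $\tilde f$, hence of $f$, is at most $1/m$.

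\medskip\noindent\emph{Main obstacle.} The delicate point is Step~2 together with its iterated use in Step~3: one must check that the two truthfulness inequalities combine to an \emph{exact} equality of the single agent's utility, and that the group‑$1$ agents can be rescaled one by one, each rescaling being a legitimate instance of the identity with the other reports frozen. A secondary point to handle carefully is the symmetrization in Step~1 — that averaging over outcome relabelings and agent permutations preserves truthfulness and does not decrease the approximation ratio, and that $\delta B$ is genuinely stabilized by the combined symmetry so that its image must be uniform.
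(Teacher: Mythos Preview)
Your argument is correct and rests on the same key observation as the paper: for a single-minded agent (valuation along $\vec{e}_j$), two-sided truthfulness forces the probability of outcome $j$ to be invariant under rescaling her report. The paper, however, packages this more directly. It uses only $m$ agents (one per outcome, each with value $1$), picks the outcome $j$ of \emph{minimum} probability---which gives $p_j\le 1/m$ by pigeonhole, avoiding your symmetrization step entirely---and then scales that single agent's valuation \emph{up} to infinity (rather than scaling a group down to $\delta$), so that the ratio tends to $1/m$. Your symmetrization and the $\nu$-fold replication per outcome are valid but unnecessary; the paper's pigeonhole shortcut and single-agent instance yield the same bound with less machinery. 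What your route does buy is a slightly cleaner derivation of the scale-invariance identity (the paper instead invokes monotonicity from \cite{AK08} for one direction and truthfulness for the other), and your framing makes the tightness via the uniform rule more explicit.
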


\begin{proof}
  Let $f$ be a truthful allocation rule. Consider the instance where we have $m$ different outcomes and $m$ single-minded agents. Each agent $i$ has valuation one for the outcome $i$ (i.e. $x_i(i) = 1$) and zero for the
others (i.e. $x_i(j) = 0$ for $j \neq i$). Let $p_i$ be the probability of the outcome $i$ in this instance according to $f$ (i.e. $p_i = f_i(\vec{x})$) and let $j = \arg \min_{i} p_i$. Since $\sum_i p_i \le 1$ we have that
$p_j \le m^{-1}$. Lets assume now that the valuation of agent $j$ for the outcome $j$ increases and all the other valuations remain the same. We define
$p_j(x_j(j)) = f_j(\vec{x}_{-j}, \vec{x}_j)$, where $\vec{x}_j = (0, \dots, 0, x_j(j), 0, \dots, 0)$. Since $f$ is truthful we have that $p_j$ should be non decreasing function of $x_j(j)$ otherwise $f$ does not satisfy
monotonicity and therefore is not truthfully implementable even with monetary transfers \cite{AK08}. We prove that $p_j(x_j(j))$ is a constant function as $x_j(j)$ increases. Lets assume the opposite, then there exists an
$L \ge 1$ such that $p_j(L) > p_j$, because $p_j$ is non decreasing. Then agent $j$ has more expected utility if he reports $L$ instead of his real valuation which is $1$ and so $f$ is not truthful. Therefore
$p_j(x_j(j)) = p_j$ for all $x_j(j)$. Now consider the same instance as before with the difference that $x_j(j)$ is much larger. The optimal social welfare is $x_j(j)$ whereas $f$ gets welfare at most
$1 + x_j(j) \cdot m^{-1}$ where the first term comes from the total valuation of all the agents but $j$ and the second term from the fact that $p_j \le m^{-1}$. Finally the approximation ratio of $f$ as $x_j(j)$ goes to
infinity is
\[\lim_{x_j(j) \rightarrow \infty} \frac{1 + x_j(j) \cdot m^{-1}}{x_j(j)} = m^{-1}\]
\qed\end{proof}

  We notice that an approximation ratio of $m^{-1}$ is achievable by the mechanism that select an outcome uniformly at random without taking into account the valuations of the agent. So the uniform mechanism is worst-case
optimal for truthfully maximizing welfare without money in the utilitarian voting setting.

\subsection{Truthfulness and Low Verification Imply Continuity: The Proof of Lemma~\ref{l:discotruth}}
\label{s:app:discotruth}

Since all our mechanisms and allocation rules are strongly anonymous, throughout this section, we refer to strongly anonymous and scale invariant mechanisms / rules just as scale invariant mechanisms / rules, for simplicity and brevity. Moreover, since we focus on strongly anonymous mechanisms / allocation rules, we always consider the weight vector $\vec{w}(\matr{x}) \equiv \matr{x}$ of the outcomes induced by a valuation profile $\matr{x}$, instead of the valuation profile $\matr{x}$. 

Before we start with the proof, we observe that, by scale invariance, any allocation rule $f$ cannot be continuous at $\vec{0}$, since this would imply that $f$ is a constant allocation that does not depend on the input (to see this, start from any weight vector $\vec{w}$ and choose an scaling factor $\alpha$ that tends to $0$). For this and for similar technical reasons, in this and in the following sections, we restrict our attention to weight vectors $\vec{w}$ with strictly positive value in each coordinate. Thus, we let $\reals_+ = \{ x \in \reals : x > 0\}$ and focus on mechanisms / allocation rules restricted to the domain $\reals_+^m$.
Moreover, we prove some of our results for the general case that the image set of the mechanism / allocation rule is not restricted to the simplex $\Delta(O)$, but it is the more general $\reals_+^m$.

Before the proof of Lemma~\ref{l:discotruth}, we recall the definition of continuity of a real multivariable function. A function $f : \reals_+^m \rightarrow \reals_+^m$ is \emph{continuous} if for every $\vec{w} \in \reals_+^m$
\[ \lim\limits_{\vec{v} \rightarrow \vec{w}} f(\vec{v}) = f(\vec{w}) \]
Moreover, given a vector $\vec{w} \in \reals_+^m$, a multiset $p = \{ \vec{v}_1, \ldots, \vec{v}_n\}$, with each $\vec{v}_i \in \reals_+^m$, is called a \emph{partition} of $\vec{w}$ if $\vec{w} = \sum_{\vec{v}_i \in p} \vec{v}_i$. We let $P(\vec{w})$ denote the set of all possible partitions of a vector $\vec{w}$.

\begin{proof}[of Lemma~\ref{l:discotruth}]
We first show that if $f$ has a discontinuity, there are $\Omega(n)$ agents that have a very small valuation $\delta > 0$ and can change the allocation by a constant factor, independent of $n$ and $\delta$.

To this end, we let $\vec{w}$ be the point of discontinuity of $f$. So $\lim\limits_{\vec{v} \rightarrow \vec{w}} f(\vec{v})$ either does not exist or it is different from $f(\vec{w})$. In both cases, by the definition of continuity,
there exists an $\epsilon > 0$ such that for every $\delta > 0$ there is a point $\vec{v}_\delta$ such that $|\vec{v}_{\delta} - \vec{w}| \le \delta$ and $|f(\vec{v}_{\delta}) - f(\vec{w})| \ge \epsilon$.

\begin{claim}
For every $\delta > 0$, there is a $\vec{v}_\delta$ with the above properties and $\vec{v}_\delta \le \vec{w}$.
\end{claim}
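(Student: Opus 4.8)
The plan is to obtain $\vec{v}_\delta$ by taking the point supplied by the discontinuity for a much smaller auxiliary tolerance $\delta' > 0$ and then \emph{scaling it down} until it drops coordinate-wise below $\vec{w}$; scale invariance guarantees that the allocation, and hence the gap $|f(\cdot) - f(\vec{w})| \ge \epsilon$, is untouched. Fix $w_{\min} := \min_i w_i$, which is strictly positive because we work over $\reals_+^m$. For $\delta' < w_{\min}$, any point $\vec{v}_{\delta'}$ with $|\vec{v}_{\delta'} - \vec{w}| \le \delta'$ has $(v_{\delta'})_i \ge w_i - \delta' > 0$ for every $i$, so it lies in $\reals_+^m$ and, by choice of $\delta'$, satisfies $|f(\vec{v}_{\delta'}) - f(\vec{w})| \ge \epsilon$.

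I would then set $\alpha := \min\{\,1,\ \min_i w_i/(v_{\delta'})_i\,\} \in (0,1]$ and $\vec{v}_\delta := \alpha\,\vec{v}_{\delta'}$. By construction $\alpha\,(v_{\delta'})_j \le w_j$ for every $j$, so $\vec{v}_\delta \le \vec{w}$; and since $f$ is strongly anonymous (a function of the weight vector) and scale invariant, $f(\vec{v}_\delta) = f(\alpha\,\vec{v}_{\delta'}) = f(\vec{v}_{\delta'})$, whence $|f(\vec{v}_\delta) - f(\vec{w})| \ge \epsilon$ is preserved. It then remains only to control the distance. From $(v_{\delta'})_i \le w_i + \delta'$ we get $w_i/(v_{\delta'})_i \ge 1 - \delta'/w_i \ge 1 - \delta'/w_{\min}$, hence $1 - \alpha \le \delta'/w_{\min}$; combined with $|\vec{v}_{\delta'}| \le |\vec{w}| + \delta'$ and the triangle inequality, $|\vec{v}_\delta - \vec{w}| \le (1-\alpha)\,|\vec{v}_{\delta'}| + |\vec{v}_{\delta'} - \vec{w}| \le \frac{\delta'}{w_{\min}}(|\vec{w}| + \delta') + \delta'$. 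Choosing $\delta' \le \min\{1, w_{\min}\}$ small enough that $\frac{\delta'}{w_{\min}}(|\vec{w}| + 1) + \delta' \le \delta$ then finishes the argument.

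The argument has no genuinely hard step; the one point to be careful about is that scaling necessarily displaces the point, so one must start from a tolerance $\delta'$ strictly smaller than the target $\delta$, and the only reason this is affordable is the quantitative bound $1 - \alpha \le \delta'/w_{\min}$, which in turn relies on $w_{\min} > 0$ --- i.e., on the restriction of the domain to strictly positive weight vectors $\reals_+^m$ made at the start of this section. If $\vec{w}$ were allowed to have a zero coordinate, scaling down toward a below-$\vec{w}$ point could be impossible without moving a fixed distance, which is precisely why that restriction was imposed.
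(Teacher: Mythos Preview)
Your argument is correct and follows essentially the same approach as the paper: pick a discontinuity witness $\vec{v}_{\delta'}$ for a smaller tolerance $\delta'$, scale it down so that it falls coordinate-wise below $\vec{w}$, invoke scale invariance to preserve the gap $|f(\cdot)-f(\vec{w})|\ge\epsilon$, and then bound the displacement. The only cosmetic difference is the choice of scaling factor: the paper uses the fixed factor $1/(1+\delta'/C)$ with $C=\min_j w_j$, while you use the adaptive $\alpha=\min\{1,\min_i w_i/(v_{\delta'})_i\}$; the resulting distance estimates and the reliance on $\min_j w_j>0$ are the same.
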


\begin{proof}[of the claim]
Let $C = \min_j w_j$ and $D = \max_j w_j$. Then we take a $\vec{v}_{\delta '}$ with
$\delta' \le D \delta / ( m (C + D) )$. Now we define $\vec{z} = \vec{v}_{\delta'}/(1 + \frac{\delta'}{C})$.
If $(v_{\delta '})_j \le w_j$, then $z_j \le w_j$. On the other hand, if $(v_{\delta '})_j \ge w_j$, then, since $\vec{w} \in \reals_+^m$ and $w_j \neq 0$,
\[ (v_{\delta '})_j - w_j \le \delta' \Rightarrow \frac{(v_{\delta '})_j}{1 + \delta'/w_j} \le w_j \text{\,,\ \ but\ \  } z_j = \frac{(v_{\delta '})_j}{1 + \delta'/C} \le \frac{(v_{\delta '})_j}{1 + \delta'/w_j} \Rightarrow z_j \le w_j \]

Now we bound $|\vec{z} - \vec{w}|$. If $(y_{\delta '})_j \ge w_j$ then
$w_j - z_j \le w_j - w_j / \left( 1 + \frac{\delta'}{C} \right) = w_j \left( \frac{\delta'}{C + \delta'} \right)$.
Otherwise, by the definition of $(v_{\delta '})_j$ we have that $(v_{\delta '})_j \ge w_j - \delta'$ and therefore
$w_j - z_j \le w_j - (w_j - \delta') / \left( 1 + \frac{\delta'}{C} \right)$.
Hence, in any case, we have that
\[ w_j - z_j \le w_j \left( \frac{\delta'}{C + \delta'} \right) + \delta' \left( \frac{C}{C + \delta'} \right) \le D \left( \frac{\delta'}{C + \delta'} \right) + \delta' \left( \frac{C}{C + \delta'} \right) \]
After summing up for all $j$, we get
\[ |\vec{w} - \vec{z}| \le m \delta' \left( \frac{D + C}{C + \delta'} \right) \le m \left( \frac{D + C}{C} \right) \delta' \le \delta\,,\]
where the last inequality follows from the definition of $\delta'$. Now, since $f$ is scale invariant, we know that
$f(\vec{v}_{\delta '}) = f(\vec{z})$. Therefore, $|f(\vec{z}) - f(\vec{w})| \ge \epsilon$ and
$|\vec{z} - \vec{w}| \le \delta$ and $\vec{z} \le \vec{w}$. Since we can do so for every $\delta > 0$, we have found the required $\vec{v}_\delta = \vec{z}$.
\qed\end{proof}

Now since $|f(\vec{v}_{\delta}) - f(\vec{w})| \ge \epsilon$ there exists $j \in [m]$ such that for
$\eps = \epsilon / m$ we have $|f_j(\vec{w}) - f_j(\vec{v}_\delta)| \ge \eps$. This means that either
$f_j(\vec{w}) - f_j(\vec{v}_\delta) \ge \eps$ or $f_j(\vec{v}_\delta) - f_j(\vec{w}) \ge \eps$.

Assume $f_j(\vec{w}) - f_j(\vec{v}_\delta) \ge \eps$. Then, $f_j(\vec{w}) \neq 0$. Let $\vec{\beta}_\delta = \vec{w} - \vec{v}_\delta$, we have that $|\vec{\beta}_\delta| \le \delta$ and $\vec{\beta}_\delta \ge 0$ because of the claim and therefore $\vec{\beta}_\delta$ can be the valuation vector of an
agent. Also let $B = \max_j (\vec{\beta}_\delta)_j$ and $\vec{z}_{\delta} = \vec{\beta}_\delta + \delta \vec{e}_j$
where $\vec{e}_j$ is the unit vector to the direction $j$.

Now let $\alpha = 1 - \frac{\eps}{f_j(\vec{w})}$. Then, because of the discontinuity, we have that $f_j(\vec{w}) - f_j(\vec{v}_\delta) \ge \eps$. Therefore.
\begin{equation}
\label{eq:frac_of_outcomes_1}
  \frac{f_j(\vec{v}_\delta)}{f_j(\vec{w})} \le \alpha
\end{equation}
where $\alpha$ is independent of $\delta$. We let now 
\[ k = \min_{j : (\vec{z}_{\delta})_j > 0} \left\{ \floor{\frac{w_j}{(\vec{z}_{\delta})_j}} \right\} 
\mbox{\ \ \ and\ \ \ } \gamma_j = w_j - k\cdot (\vec{z}_{\delta})_j \] 
We consider the following partition of $\vec{w}$:
$p = \{\underbrace{\vec{z}_{\delta}, \dots, \vec{z}_{\delta}}_{k \text{ times}}, \vec{\gamma}\}$.
So, we have $k$ agents with valuation $\vec{z}_{\delta}$. 

Next, we observe that any of these $k$ agents can significantly (and profitably) change the probability distribution of $f$ by a slight deviation from her true valuation. Therefore, any truthful extension $F$ of $f$ must verify each of these $k$ agents with a probability at least $\zeta$, due to truthfulness, where $\zeta$ is a constant that does not depend on $k$ and $\delta$. Therefore, the expected verification of $F$ is at least linear in the number of agents.

Specifically, let us assume that one of these $k$ agents $i$ has true valuation $\delta \vec{e}_j$, where $\vec{e}_j$ is the unit vector with a single $1$ in its $j$-th coordinate.
Then, she prefers $f(\vec{w})$ to $f(\vec{v}_\delta)$. So, if in this instance, the mechanism $F$ does not verify her, she will misreport $\vec{z}_\delta$ instead of $\delta \vec{e}_j$ to get $f(\vec{w})$.
So, for each agent $i$, we let
\[ vr_i = \Prob[\mbox{$F$ verifies $i$ when reported valuations are as in $p$}\,]\]
Then, if the true valuation of the agent $i$ is $\delta \vec{e}_j$ and she reports $\vec{z}_\delta$, while all other agents report truthfully, the utility of $i$ is at least $\delta \cdot (1 - vr_i) \cdot f_j(\vec{w})$. For this lower bound on $i$'s utility, we assume that if $F$ verifies $i$, agent $i$ gets utility $0$.
On the other hand, if agent $i$ reports her true valuation $\delta \vec{e}_j$, she gets utility $\delta f_j(\vec{v}_\delta)$. But since $F$ is truthful, we have that
\[  \delta \cdot (1 - vr_i) \cdot f_j(\vec{w}) \le \delta \cdot f_j(\vec{v}_\delta) \Rightarrow (1 - vr_i) \le \frac{f_j(\vec{v}_\delta)}{f_j(\vec{w})} \overset{(\ref{eq:frac_of_outcomes_1})}{\Rightarrow} 
(1 - vr_i) \le \alpha \Rightarrow 1 - \alpha \le vr_i\]
Since $\alpha < 1$, by definition, we let $\zeta = 1 - \alpha$. Hence, $vr_i \geq \zeta$ and $\zeta$ is a positive constant that does not depend on $\delta$.
Therefore the expected verification of $F$ is
\[ \ExpVer(F) = \sum_i vr_i \ge k \cdot \zeta = \Omega(n) = 
\Omega(1 / \delta) \]
We note that this holds for any $\delta > 0$ and that the number of agents $n$ can be arbitrary large.

Next, we consider the case where $f_j(\vec{v}_\delta) - f_j(\vec{w}) \ge \eps$.  We define $\vec{\beta}_\delta$, $B$ as before. Also, we let $L = \norm{f(\vec{w})}_{\infty}$ and
$\vec{z}_{\delta} = (2 / \eps) m L B \vec{e}_j$. The intuition and the basic steps of the proof are very similar to those of the proof above. However, the technical details are different, we present the detailed argument below.

We assume, for simplicity, that $f_j(\vec{w}) > 0$ and let $\alpha = 1 + \frac{\eps}{f_j(\vec{w})}$ (we also derive below a similar lower bound on the verification probability for the case where $f_j(\vec{w}) = 0$). Then, because of the discontinuity of $f$, we have that $f_j(\vec{v}_\delta) - f_j(\vec{w}) \ge \eps$. Therefore,
\begin{equation}
\label{eq:frac_of_outcomes_2}
  \frac{f_j(\vec{v}_\delta)}{f_j(\vec{w})} \ge \alpha\,,
\end{equation}
where $\alpha$ is independent of $\delta$. Now, we let
\[ k = \floor{\frac{(\vec{v}_\delta)_j}{(\vec{z}_{\delta})_j}} \ge \floor{\frac{w_j - \delta}{(2/\eps) m L B}} = \Omega(1/\delta) \]
For the last equality, we use that $\eps, m, L, B$ are independent of $\delta$. We also let $\gamma_j = (\vec{v}_\delta)_j - k (\vec{z}_{\delta})_j$.
Then, we consider the following partition of $\vec{v}_\delta$:
$p = \{\underbrace{\vec{z}_{\delta}, \dots, \vec{z}_{\delta}}_{k \text{ times}}, \vec{\gamma}\}$.
So, we have $k$ agents with valuation $\vec{z}_{\delta}$. 

Let us assume that one of these agents $i$ has true valuation
$\vec{\beta}_\delta + (2 / \eps) m L B \vec{e}_j$.
Then, $i$ prefers $f(\vec{w})$ 
to $f(\vec{v}_\delta)$ because the $j$-th coordinate of her valuation dominates the others. So, if in this instance, the mechanism $F$ does not verify her, agent $i$ would report $\vec{z}_\delta$, instead of $\vec{\beta}_\delta + (2 / \eps) m L B \vec{e}_j$, in order to get $f(\vec{w})$. So, using our definition of the verification probability $vr_i$, if the true valuation of the agent $i$ is $\vec{\beta}_\delta + (2 / \eps) m L B \vec{e}_j$ and all other agents report truthfully, the utility of agent $i$ is at least
  \(  (2 / \eps) m L B \cdot (1 - vr_i) \cdot f_j(\vec{v}_\delta) \).
For this lower bound on $i$'s utility, we assume that if $F$ verifies $i$, agent $i$ gets utility $0$.
If the agent $i$ reports truthfully, she gets utility
\[ \sum_{l} (\vec{\beta}_\delta)_l f_l(\vec{w}) + (2 / \eps) m L B f_j(\vec{w}) \le m L B + (2 / \eps) m L B f_j(\vec{w})\,, \]
where the inequality follows from the definition of $B$ and $L$. But since $F$ is truthful, we have that
\[ (2 / \eps) m L B \cdot (1 - vr_i) \cdot f_j(\vec{v}_\delta) \le m L B + (2 / \eps) m L B f_j(\vec{w}) \Rightarrow \]
\[ \Rightarrow \frac{2}{\eps}\cdot (1 - vr_i) \cdot f_j(\vec{v}_\delta) \le 1 + \frac{2}{\eps} \cdot f_j(\vec{w}) \overset{(\ref{eq:frac_of_outcomes_2})}{\Rightarrow} \frac{2}{\eps} \cdot (1 - vr_i)\,\alpha \le \frac{1}{f_j(\vec{w})} + \frac{2}{\eps} \Rightarrow\]
\[ \Rightarrow \frac{2}{\eps} \cdot (\alpha - 1) - \frac{1}{f_j(\vec{w})} \le \alpha \,vr_i \Rightarrow \frac{2}{\eps}\cdot \frac{\eps}{f_j(\vec{w})} - \frac{1}{f_j(\vec{w})} \le \alpha\,vr_i \Rightarrow \frac{1}{f_j(\vec{w}) + \eps} \le vr_i \]

The last implications follow from calculations that use the definition of $\alpha$. We let $\zeta = 1/(f_j(\vec{w}) + \eps)$. Thus, $vr_i \ge \zeta$ and $\zeta$ is a positive constant that does not depend on $\delta$. 

In case where $f_j(\vec{w}) = 0$, the truthfulness condition becomes
\[ (1 - vr_i) (2\eps) m L B f_j(\vec{v}_\delta) \le m L B \Rightarrow (1 - vr_i) (2/\eps) f_j(\vec{v}_\delta) \le 1 \Rightarrow vr_i \geq 1/2 \]
The last implication above follows from the fact that $f_j(\vec{v}_\delta) - f_j(\vec{w}) \ge \eps$ $\Rightarrow$ $f_j(\vec{v}_\delta) \ge \eps$. 
So, in this case, we let $\zeta = 1/2$. As before, the expected verification of $F$ is
\[ \ExpVer(F) = \sum_i vr_i \ge k \cdot \zeta = \Omega(n) = 
\Omega(1 / \delta) \]
Note that this holds for any $\delta$ and that the number of agents $n$ can be arbitrary large.
\qed\end{proof}

\subsection{Continuity Implies Participation}
\label{s:app:cont-parti}

In this section, we show that participation is a necessary condition for any continuous scale invariant and strongly anonymous full allocation rule $f$ that can be extended to a truthful mechanism $F$ with selective verification. By Lemma~\ref{l:discotruth}, this also holds for any scale invariant and strongly anonymous full allocation rule $f$ that can be extended to a truthful mechanism $F$ with selective verification of $o(n)$ agents. 

\begin{lemma}\label{l:cont-parti}
Let $f$ be any continuous scale invariant and strongly anonymous full allocation rule. If $f$ has a truthful extension $F$, then $f$ satisfies participation.
\end{lemma}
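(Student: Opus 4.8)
The plan is to exploit the truthful extension $F$ to transfer the participation inequality at the level of the mechanism to the level of the continuous allocation rule $f$, using scale invariance and continuity to pass to the limit. Fix an agent $i$ with true valuation $\vec{x}_i$ and a valuation profile $\matr{x}_{-i}$; since $f$ is strongly anonymous, I will work with the weight vectors $\vec{w} = \vec{w}(\matr{x}_{-i}) + \vec{x}_i$ and $\vec{w}_{-i} = \vec{w}(\matr{x}_{-i})$. The goal is to show $\vec{x}_i \cdot f(\vec{w}) \geq \vec{x}_i \cdot f(\vec{w}_{-i})$. The first step is to consider the ``scaled-down'' deviation where agent $i$, whose true type is some tiny $a\,\vec{x}_i$ for $a \in (0,1)$, could instead misreport $\vec{x}_i$. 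Since $F$ is truthful and (by Lemma~\ref{l:discotruth} plus the hypothesis, or directly by continuity) we may assume $f$ satisfies the pointwise participation-type inequality obtainable from truthfulness: an agent of true type $a\,\vec{x}_i$ facing $\matr{x}_{-i}$ gets no less by reporting truthfully, i.e. $a\,\vec{x}_i \cdot F((\matr{x}_{-i}, a\vec{x}_i), \vec{1}) \geq a\,\vec{x}_i \cdot F((\matr{x}_{-i}, \vec{x}_i), (\vec{1}_{-i}, 0))$. Here I need to be careful: $F$ being truthful compares truthful reporting against \emph{any} misreport, but the misreport $\vec{x}_i$ (instead of true $a\vec{x}_i$) changes the weight vector from $a\vec{x}_i + \vec{w}_{-i}$ to $\vec{x}_i + \vec{w}_{-i}$, and if $F$ does not verify the (misreporting) agent, the induced distribution is $f(\vec{x}_i + \vec{w}_{-i}) = f(\vec{w})$.

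The key manipulation, therefore, is to choose the misreport so that its effect on the weight vector is exactly $\vec{w}$. Concretely, fix $a \in (0,1)$ and let the agent's true type be $a\,\vec{x}_i$. By scale invariance, $F$'s induced rule on the truthful profile $(\matr{x}_{-i}, a\vec{x}_i)$ gives $f(a\vec{x}_i + \vec{w}_{-i})$, while the deviation to report $\vec{x}_i$ yields, in the event of no verification, $f(\vec{x}_i + \vec{w}_{-i}) = f(\vec{w})$. Since $F$ is truthful, the agent's true utility from reporting truthfully dominates her utility from this deviation even when verification never catches her, giving $a\,\vec{x}_i \cdot f(a\vec{x}_i + \vec{w}_{-i}) \geq a\,\vec{x}_i \cdot f(\vec{w})$; dividing by $a$ yields $\vec{x}_i \cdot f(a\vec{x}_i + \vec{w}_{-i}) \geq \vec{x}_i \cdot f(\vec{w})$. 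This holds for every $a \in (0,1)$, so letting $a \to 0$ and using continuity of $f$ gives $\vec{x}_i \cdot f(\vec{w}_{-i}) \geq \vec{x}_i \cdot f(\vec{w})$ — which is the \emph{reverse} of what I want, so this direction alone gives one half. For the other half I would run the symmetric argument: take an agent with true type $\vec{x}_i$ who could misreport $a\,\vec{x}_i$ (a ``shrinking'' deviation), which in the no-verification event produces $f(a\vec{x}_i + \vec{w}_{-i})$; truthfulness gives $\vec{x}_i \cdot f(\vec{w}) \geq \vec{x}_i \cdot f(a\vec{x}_i + \vec{w}_{-i})$, and letting $a \to 0$ with continuity yields $\vec{x}_i \cdot f(\vec{w}) \geq \vec{x}_i \cdot f(\vec{w}_{-i})$. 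Combining the two limits forces equality $\vec{x}_i \cdot f(\vec{w}) = \vec{x}_i \cdot f(\vec{w}_{-i})$, which in particular gives participation.

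The main obstacle I anticipate is making rigorous the step ``in the event that $F$ does not verify the misreporting agent, the induced distribution is $f$ of the reported weight vector.'' For a truthful mechanism this is not automatic — truthfulness only bounds utilities, not distributions — so I should not invoke robustness (which is not assumed here). The right move is to use the definition of $\e$-truthfulness directly: $\vec{x}_i \cdot F((\matr{y}_{-i}, \vec{x}_i), (\vec{s}_{-i}, 1)) \geq \vec{x}_i \cdot F((\matr{y}_{-i}, \vec{y}_i), (\vec{s}_{-i}, 0))$ for all $\matr{y}_{-i}, \vec{s}_{-i}$, and then observe that we get to \emph{choose} $\vec{s}_{-i}$ and $\matr{y}_{-i}$; picking $\matr{y}_{-i}$ with weight $\vec{w}_{-i}$, all truthful ($\vec{s}_{-i} = \vec{1}$), and invoking strong anonymity, the right-hand side becomes $\vec{x}_i \cdot F((\matr{y}_{-i}, \vec{y}_i), (\vec{1}_{-i},0))$, where now the agent with reported type $\vec{y}_i$ is flagged as a liar but $F$ need not act on this unless it verifies her — and the truthfulness inequality holds for this fixed instance regardless. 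The cleanest route is: since the inequality must hold for \emph{all} $\matr{y}_{-i}$ and the quantities on both sides depend on $F$'s actual behavior, I bound the right-hand side from below by the utility the agent would obtain were $F$ to ignore the flag, which requires showing that a truthful $F$ can be assumed wlog to behave ``monotonically'' in verification — or, more simply, I avoid this by noting that the characterization we ultimately want (Lemma~\ref{l:chara_weak_parti}, MIDR) only needs the \emph{weak} participation inequality, and that the above limiting argument, combined with Lemma~\ref{l:discotruth}'s guarantee of continuity and a careful accounting of which agent is verified with what probability (exactly as in the proof of Lemma~\ref{l:discotruth}, where $vr_i$ is shown bounded), is enough. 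I expect the write-up to mirror the structure of the Lemma~\ref{l:discotruth} proof, with the shrinking-deviation argument replacing the discontinuity-based one and continuity (rather than its failure) driving the conclusion.
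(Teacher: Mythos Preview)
Your proposal has a genuine gap that you yourself flag but do not close. The truthfulness inequality
\[
\vec{x}_i \cdot F((\matr{x}_{-i}, \vec{x}_i), (\vec{1}_{-i}, 1)) \geq \vec{x}_i \cdot F((\matr{x}_{-i}, a\vec{x}_i), (\vec{1}_{-i}, 0))
\]
has $f(\vec{w})$ on the left, but the right-hand side is \emph{not} $\vec{x}_i \cdot f(a\vec{x}_i + \vec{w}_{-i})$: nothing in the hypotheses tells you what $F$ does when an agent's verification bit is $0$. You are not assuming robustness, and a generic truthful extension $F$ is free to react to a detected lie in any way it likes. Your suggested fixes---bounding by the ``ignore-the-flag'' utility, or importing the $vr_i$ bookkeeping from Lemma~\ref{l:discotruth}---do not supply this identification; the $vr_i$ argument only shows that verification probability must be large at a discontinuity, which is useless here since $f$ is assumed continuous. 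A further warning sign: your two symmetric limits, if both worked, would force $\vec{x}_i \cdot f(\vec{w}) = \vec{x}_i \cdot f(\vec{w}_{-i})$ for every $i$ and $\matr{x}$, which is strictly stronger than participation and is false for, e.g., the Exponential allocation.

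The paper's proof supplies exactly the missing bridge, but only for the specific misreport $\vec{0}$. It first shows
\[
F((\matr{x}_{-i}, \vec{0}), (\vec{1}_{-i}, 1)) = F((\matr{x}_{-i}, \vec{0}), (\vec{1}_{-i}, 0)),
\]
arguing by contradiction: if these differ then (by full allocation) some coordinate $j$ has strictly larger probability on the right; by continuity of $f$, an agent with true type $\delta\vec{e}_j$ (for $\delta$ small) still sees $F_j((\matr{x}_{-i}, \delta\vec{e}_j),(\vec{1}_{-i},1))$ close to the left-hand $j$-coordinate, hence strictly below the right-hand one, so she profits by misreporting $\vec{0}$, contradicting truthfulness. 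Once this equality is in hand, truthfulness for an agent with true type $\vec{x}_i$ against the misreport $\vec{0}$ gives
\[
\vec{x}_i \cdot f(\vec{w}) = \vec{x}_i \cdot F((\matr{x}_{-i}, \vec{x}_i), \vec{1}) \geq \vec{x}_i \cdot F((\matr{x}_{-i}, \vec{0}), (\vec{1}_{-i}, 0)) = \vec{x}_i \cdot f(\vec{w}_{-i}),
\]
which is participation. Note how full allocation is used essentially (to find a coordinate where the lying distribution is \emph{larger}), and how the limiting argument is applied not to $a\vec{x}_i$ but to a single-minded type $\delta\vec{e}_j$, so that the truthfulness inequality isolates exactly the coordinate in question. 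Your shrinking-deviation idea is in the right spirit, but without first pinning down $F$'s behaviour on the lie side it cannot be completed.
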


\begin{proof}
We first show that if any agent $i$ declares valuation $\vec{0}$ for all outcomes, the probability distribution of the truthful extension $F$ of $f$ is not affected by whether $\vec{0}$ is the true valuation of $i$ or a misreport. To this end, we fix the valuation profile $\matr{x}_{-i}$ of all other agents and let $\vec{w}_{-i}$ be the outcome weight vector induced by $\matr{x}_{-i}$. We next show that for any agent $i$,
\begin{equation}\label{eq:equality_parti}
  F((\vec{x}_{-i}, \vec{0}), (\vec{1}_{-i}, 1)) = F((\vec{x}_{-i}, \vec{0}), (\vec{1}_{-i}, 0))
\end{equation}

Let us assume that $F((\vec{x}_{-i}, \vec{0}), (\vec{1}_{-i}, 1)) \neq F((\vec{x}_{-i}, \vec{0}), (\vec{1}_{-i}, 0))$.
Then, since $f$ achieves full allocation, 
$|F((\vec{x}_{-i}, \vec{0}), (\vec{1}_{-i}, 1))| = 1$ and
$|F((\vec{x}_{-i}, \vec{0}), (\vec{1}_{-i}, 0))| = 1$. Therefore, there is an outcome $j$ such that
$F_j((\vec{x}_{-i}, \vec{0}), (\vec{1}_{-i}, 0)) > F_j((\vec{x}_{-i}, \vec{0}), (\vec{1}_{-i}, 1))$.
Let
$\gamma \equiv F_j((\vec{x}_{-i}, \vec{0}), (\vec{1}_{-i}, 0)) - F_j((\vec{x}_{-i}, \vec{0}), (\vec{1}_{-i}, 1))$ be the difference in the probability of outcome $j$ in $F$ when $\vec{0}$ is a misreport of agent $i$ and when agent $i$ truthfully declares $\vec{0}$.

Let $\vec{e}_j$ be the unit vector with a single $1$ in its $j$-th coordinate. Since $F$ is an extension of $f$ and since $f$ is strongly anonymous, (i) $F((\vec{x}_{-i}, \vec{0}), (\vec{1}_{-i}, 1)) = f(\vec{w}_{-i})$\,; and (ii) for any $h > 0$,
$F((\vec{x}_{-i}, h \vec{e}_j), (\vec{1}_{-i}, 1)) = f( \vec{w}_{-i} + h \vec{e}_j )$. Since $f$ is continuous,
  \[ \lim\limits_{h \rightarrow 0} f(\vec{w}_{-i} + h \vec{e}_j) = f(\vec{w}_{-i}) = F((\vec{x}_{-i}, \vec{0}), (\vec{1}_{-i}, 1)) \]

Intuitively, the continuity of $f$ implies that $F_j((\vec{x}_{-i}, \delta \vec{e}_j), (\vec{1}_{-i}, 1)) \approx F_j((\vec{x}_{-i}, \vec{0}), (\vec{1}_{-i}, 1))$, if $\delta$ is small enough. Moreover, truthfulness implies that $F_j((\vec{x}_{-i}, \delta \vec{e}_j), (\vec{1}_{-i}, 1)) \geq F_j((\vec{x}_{-i}, \vec{0}), (\vec{1}_{-i}, 0))$. But these contradict our assumption that $F_j((\vec{x}_{-i}, \vec{0}), (\vec{1}_{-i}, 0)) > F_j((\vec{x}_{-i}, \vec{0}), (\vec{1}_{-i}, 1))$. Thus, we obtain (\ref{eq:equality_parti}).

Let us now formalize the intuition above. By continuity, there are an $\e \in (0, \gamma)$ and a $\delta > 0$ such that
\[ |f(\vec{w}_{-i} + \delta \vec{e}_j) - f(\vec{w}_{-i})| \le \e \Rightarrow
|F((\vec{x}_{-i}, \delta \vec{e}_j), (\vec{1}_{-i}, 1)) - F((\vec{x}_{-i}, \vec{0}), (\vec{1}_{-i}, 1))| \le \e \]
This implies that
$|F_j((\vec{x}_{-i}, \delta \vec{e}_j), (\vec{1}_{-i}, 1)) - F_j((\vec{x}_{-i}, \vec{0}), (\vec{1}_{-i}, 1))| \le \e < \gamma$.
Therefore, by the definition of $\gamma$, we obtain that
\begin{equation}
\label{eq:contr_continuity}
F_j((\vec{x}_{-i}, \vec{0}), (\vec{1}_{-i}, 0)) > F_j((\vec{x}_{-i}, \delta \vec{e}_j), (\vec{1}_{-i}, 1))
\end{equation}

Applying truthfulness when agent $i$ has true valuation $\delta \vec{e}_j$, we obtain that
  \[ \delta \vec{e}_j \cdot F((\vec{x}_{-i}, \delta \vec{e}_j), (\vec{1}_{-i}, 1)) \ge \delta \vec{e}_j \cdot F((\vec{x}_{-i}, \vec{0}), (\vec{1}_{-i}, 0)) \]
Since $\delta > 0$, we get that
$F_j((\vec{x}_{-i}, \delta \vec{e}_j), (\vec{1}_{-i}, 1)) \ge F_j((\vec{x}_{-i}, \vec{0}), (\vec{1}_{-i}, 0))$ which
contradicts (\ref{eq:contr_continuity}). So, we obtain that (\ref{eq:equality_parti}) holds for any agent $i$.

Since $F$ is an extension of $f$ and since $f$ is strongly anonymous, 
$F((\vec{x}_{-i}, \vec{0}), (\vec{1}_{-i}, 1)) = f(\vec{w}_{-i})$ and
$F((\vec{x}_{-i}, \vec{x}_i), (\vec{1}_{-i}, 1)) = f(\vec{w}_{-i} + \vec{x}_i)$, for the valuation $\vec{x}_i$ of agent $i$.
Using  (\ref{eq:equality_parti}), we obtain that
$f(\vec{w}_{-i}) = F((\vec{x}_{-i}, \vec{0}), (\vec{1}_{-i}, 0))$. Now, using truthfulness, we conclude that:
\[    \vec{x}_i \cdot F((\vec{x}_{-i}, \vec{x}_i), (\vec{1}_{-i}, 1)) \geq \vec{x}_i \cdot F((\vec{x}_{-i}, \vec{0}), (\vec{1}_{-i}, 0)) \Rightarrow
\vec{x}_i \cdot f( \vec{w}_{-i} + \vec{x}_i ) \geq \vec{x}_i \cdot f( \vec{w}_{-i} ) \]
Therefore, $f$ satisfies participation (and thus, $F$ also satisfies participation, since $F$ is an extension of $f$).
\qed \end{proof}

\subsection{Participation implies Discontinuity}
\label{s:app:discont}

Next, we use Lemma~\ref{l:chara_weak_parti} and prove that if a full allocation rule $f$ is strongly anonymous and scale invariant and satisfies participation, then $f$ is either constant (i.e., its allocation is independent of the reported valuations) or has a discontinuity at $\vec{1}$. The intuition is that by Lemma~\ref{l:chara_weak_parti}, any continuous allocation rule $f$ with these properties must satisfy $f(\vec{w}) = \arg \max_{\vec{z} \in \Delta(O)} \vec{w} \cdot \vec{z}$. But, welfare maximizers over $\Delta(O)$, i.e., with full allocation, cannot be continuous (e.g., let $m = 2$ and consider welfare maximization over the unit simplex for weights vectors $(1, 1+\e)$ and $(1, 1-\e)$). The proof of the following formalizes this intuition.

\begin{lemma}\label{l:discont}
Let $f$ be any strongly anonymous and scale invariant full allocation rule that satisfies participation. Then, the probability distribution of $f$ is either constant or discontinuous.
\end{lemma}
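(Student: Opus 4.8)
The plan is to dispose of the dichotomy directly. If the probability distribution of $f$ is discontinuous there is nothing to prove, so assume $f$ is continuous. Since $f$ is strongly anonymous and scale invariant, it can be regarded as a one-argument rule on weight vectors (restricted, as in Section~\ref{s:app:discotruth}, to the strictly positive domain), and continuity lets us invoke Lemma~\ref{l:chara_weak_parti}: there is a range $Z$ — which, inspecting the proof of that lemma, is exactly the image of $f$ — with $\vec{w}\cdot f(\vec{w})\ge\vec{w}\cdot\vec{z}$ for every weight vector $\vec{w}$ and every $\vec{z}\in Z$. Because $f$ achieves full allocation, every value $f(\vec{w})$ lies in $\Delta(O)$, hence $Z\subseteq\Delta(O)$ and in particular $\vec{1}\cdot\vec{z}=|\vec{z}|=1$ for all $\vec{z}\in Z$. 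The goal is to conclude that $Z$ is a single point, which immediately gives that $f$ is constant.

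Suppose, for contradiction, that $Z$ contains two distinct allocations $\vec{z}^{(1)}\neq\vec{z}^{(2)}$, and set $\vec{d}=\vec{z}^{(1)}-\vec{z}^{(2)}\neq\vec{0}$. Since both lie in $\Delta(O)$ we have $\vec{1}\cdot\vec{d}=0$, so $\vec{1}\pm\eta\vec{d}\in\reals_{>0}^m$ for all sufficiently small $\eta>0$, while $\vec{d}\cdot\vec{d}=\norm{\vec{d}}_2^2>0$. The key observation is that $\vec{1}$ is ``neutral'' on the simplex: for any $\vec{z}\in Z$ we have $(\vec{1}+\eta\vec{d})\cdot\vec{z}=1+\eta\,\vec{d}\cdot\vec{z}$, so maximizing $(\vec{1}+\eta\vec{d})\cdot\vec{z}$ over $Z$ is the same as maximizing $\vec{d}\cdot\vec{z}$, while maximizing $(\vec{1}-\eta\vec{d})\cdot\vec{z}$ is the same as minimizing $\vec{d}\cdot\vec{z}$. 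Applying the inequality of Lemma~\ref{l:chara_weak_parti} at $\vec{w}=\vec{1}+\eta\vec{d}$ and at $\vec{w}=\vec{1}-\eta\vec{d}$, and using $\vec{z}^{(1)},\vec{z}^{(2)}\in Z$, we get $\vec{d}\cdot f(\vec{1}+\eta\vec{d})\ge\vec{d}\cdot\vec{z}^{(1)}$ and $\vec{d}\cdot f(\vec{1}-\eta\vec{d})\le\vec{d}\cdot\vec{z}^{(2)}$, hence
\[
 \vec{d}\cdot\bigl(f(\vec{1}+\eta\vec{d})-f(\vec{1}-\eta\vec{d})\bigr)\ \ge\ \vec{d}\cdot(\vec{z}^{(1)}-\vec{z}^{(2)})\ =\ \norm{\vec{d}}_2^2\ >\ 0
\]
for every sufficiently small $\eta>0$.

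Now let $\eta\to 0^{+}$. Both $\vec{1}+\eta\vec{d}$ and $\vec{1}-\eta\vec{d}$ converge to $\vec{1}$, so by continuity of $f$ both $f(\vec{1}+\eta\vec{d})$ and $f(\vec{1}-\eta\vec{d})$ converge to $f(\vec{1})$, and the left-hand side above tends to $\vec{d}\cdot\bigl(f(\vec{1})-f(\vec{1})\bigr)=0$, contradicting the strict lower bound $\norm{\vec{d}}_2^2>0$. Hence $Z$ consists of a single allocation $\vec{z}_0$ and $f(\vec{w})=\vec{z}_0$ for all $\vec{w}$, i.e.\ $f$ is constant. (Read contrapositively, the displayed gap shows that a non-constant such $f$ takes different limiting values as one approaches $\vec{1}$ along the rays $\vec{1}\pm\eta\vec{d}$, so the discontinuity sits precisely at $\vec{1}$, matching the statement.)

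The step I expect to be the crux is the continuity argument in the middle paragraph: $Z$ may be an arbitrary subset of the simplex, so a generic perturbation of the weights need not single out $\vec{z}^{(1)}$ or $\vec{z}^{(2)}$ among all maximizers, and one cannot directly claim that the argmax ``jumps''. The device that forces the conclusion is to perturb in the specific direction $\vec{d}=\vec{z}^{(1)}-\vec{z}^{(2)}$ starting from the weight vector $\vec{1}$, which is tied on all of $\Delta(O)$: the two opposite perturbations are then compelled to pick allocations on opposite sides of the hyperplane $\vec{d}^{\perp}$, producing a fixed gap $\norm{\vec{d}}_2^2$ that continuity cannot absorb. The only routine checks are that $\vec{1}\pm\eta\vec{d}$ stays in the strictly positive domain where Lemma~\ref{l:chara_weak_parti} applies (immediate, since $\vec 1$ has all coordinates equal to $1$) and that rescaling by $\eta>0$ does not change the relevant argmax.
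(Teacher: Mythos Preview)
Your proof is correct and follows essentially the same approach as the paper: assume continuity, invoke Lemma~\ref{l:chara_weak_parti} to get the MIDR characterization, use full allocation to place the image $Z$ inside the simplex (so that $\vec{1}$ is neutral on $Z$), and then exhibit a discontinuity of $f$ at $\vec{1}$ by approaching it from two different directions. The only difference is in the execution of the last step: the paper decomposes weight vectors into components parallel and perpendicular to $\vec{1}$, observes that $f$ depends only on the perpendicular direction, and builds two explicit sequences $\vec{w}_\epsilon,\vec{v}_\delta\to\vec{1}$ with constant but distinct $f$-values; you instead perturb $\vec{1}$ along the specific direction $\vec{d}=\vec{z}^{(1)}-\vec{z}^{(2)}$ and extract a fixed quantitative gap $\vec{d}\cdot\bigl(f(\vec{1}+\eta\vec{d})-f(\vec{1}-\eta\vec{d})\bigr)\ge\norm{\vec{d}}_2^2$ directly from the argmax property. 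Your variant is a bit cleaner and avoids the projection bookkeeping, but both arguments encode the same geometric fact.
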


\begin{proof}
Let $f$ be a strongly anonymous and scale invariant full allocation rule that satisfies participation and is not constant. To reach a contradiction, we assume that $f$ is also continuous.
Then, by Lemma~\ref{l:chara_weak_parti}, the allocation of $f$ can be regarded as the solution to an optimization problem where the feasible region is given by a set $Z$ and the direction of the objective is given by the input $\vec{w}$ to $f$. Thus $f$ depends only on the direction of $\vec{w}$, which is expected, due to scale invariance. Since the allocation of $f$ satisfies $\sum_j f_j(\vec{w}) = 1$, $f$ is a solution to an optimization problem on the hyperplane
$p_1 : \sum_j f_j = 1$ with the feasible region given by $Z$ and the direction of
the objective is given by the projection of $\vec{w}$ to the hyperplane $p_1$. We observe that any vector $\vec{w}$ can be written as $\vec{w} = \vec{w}_{p_1} + \vec{w}_{n_1}$, where $\vec{w}_{p_1}$ is the projection of $\vec{w}$ to $p_1$ and $\vec{w}_{n_1}$ is the projection of $\vec{w}$ to a direction perpendicular to $p_1$, which is the direction of $n_1 = (1, 1, \ldots, 1)$. Therefore, since the
optimization problem which determines the value of $f$ is an optimization problem on the $p_1$, the allocation of $f$ on input $\vec{w}$ depends only on the direction of $w_{p_1}$. Hence, for any $a, b > 0$ if $\vec{w}' = a \vec{w}_{p_1} + b \vec{w}_{n_1}$, then $f(\vec{w}) = f(\vec{w}')$
\footnote{Formally, this follows from the description of $f$ as an MIDR allocation rule that satisfies $f(\vec{w}) = \arg \max_{\vec{z} \in Z} \vec{w} \cdot \vec{z}$, by Lemma~\ref{l:chara_weak_parti}, and from the scale invariance of $f$.}.

Now since $f$ is not constant, there exist two points $\vec{w}, \vec{v} \in \reals_+^m$ such that $f(\vec{w}) \neq f(\vec{v})$. From the discussion above, we have that $\vec{w} = \vec{w}_{p_1} + \vec{w}_{n_1}$ and that
$\vec{v} = \vec{v}_{p_1} + \vec{v}_{n_1}$.
We next show that $f$ is left discontinuous at the point
$\vec{w}_{n_1} + \vec{v}_{n_1}$. We take the point $\vec{w}' = a \vec{w}_{p_1} + \vec{w}_{n_1}$, with $a > 0$ to be chosen so that $\vec{w}' < \vec{w}_{n_1} + \vec{v}_{n_1}$, which means $a \vec{w}_{p_1} < \vec{v}_{n_1}$. This is
always possible because $\vec{v}_{n_1}$ has the form $(r, r, \dots, r)$ and we can assume wlog. that $r > 0$, since $f(\vec{v})$ depends only on $\vec{v}_{p_1}$. Similarly, we define
$\vec{v}' = b \vec{v}_{p_1} + \vec{v}_{n_1}$, with $b > 0$ such that $\vec{v}' < \vec{w}_{n_1} + \vec{v}_{n_1}$.
From the discussion above, we have that $f(\vec{w}') = f(\vec{w})$ and that $f(\vec{v}') = f(\vec{v})$.
We now define
$\vec{w}_{\epsilon} = \epsilon \vec{w}' + (1 - \epsilon) (\vec{w}_{n_1} + \vec{v}_{n_1})$. It is not hard to see that
$\vec{w}_{\epsilon} < \vec{w}_{n_1} + \vec{v}_{n_1}$ and that
$\lim\limits_{\epsilon \rightarrow 0} \vec{w}_{\epsilon} = \vec{w}_{n_1} + \vec{v}_{n_1}$.
Moreover,
\[ \vec{w}_{\epsilon} = \epsilon \vec{w}_{p_1} + \vec{w}_{n_1} + (1 - \epsilon) \vec{v}_{n_1} = \epsilon \vec{w}_{p_1} + c \vec{w}_{n_1} \]
for some $c > 0$. Therefore, $f(\vec{w}_{\epsilon}) = f(\vec{w}') = f(\vec{w})$. In the same way, we define
$\vec{v}_{\delta} = \delta \vec{v}' + (1 - \delta) (\vec{w}_{n_1} + \vec{v}_{n_1})$. Then, we have that
$\vec{v}_{\delta} < \vec{w}_{n_1} + \vec{v}_{n_1}$,
that $\lim\limits_{\delta \rightarrow 0} \vec{v}_{\delta} = \vec{w}_{n_1} + \vec{v}_{n_1}$, and that
$f(\vec{v}_{\delta}) = f(\vec{v}') = f(\vec{v})$. Therefore,
\[\lim\limits_{\epsilon \rightarrow 0} f(\vec{w}_{\epsilon}) = f(\vec{w}) \neq f(\vec{v}) = \lim\limits_{\delta \rightarrow 0} f(\vec{v}_{\delta})\]
Hence, the limit $\lim\limits_{\vec{y} \rightarrow (\vec{w}_{n_1} + \vec{v}_{n_1})^-} f(\vec{y})$ does not exist and $f$ is left discontinuous at $\vec{w}_{n_1} + \vec{v}_{n_1}$. In fact, since $f$ is scale invariant, it is discontinuous at
$(1, 1, \ldots, 1) = \vec{x}^* = (\vec{w}_{n_1} + \vec{v}_{n_1}) / \left|\vec{w}_{n_1} + \vec{v}_{n_1}\right|$.
\qed\end{proof}

\subsection{The Analysis of Partial Power}
\label{s:app:partial}

In this section, we prove the main properties of the Partial Power allocation rule. Since Partial Power is a strongly anonymous allocation rule, we restrict our attention to the weight vector $\vec{w} \equiv \vec{w}(\vec{x}) = \sum_{i=1}^n \vec{x}_i$ of the outcomes, instead of the valuation profile $\vec{x}$. For some fixed integers $\ell, r \geq 1$, we let $f(\vec w)$ denote the allocation rule:
$$f^{(\ell,r)}(\vec{w}) = \frac {(1 - 1/r)} {m^{1/(\ell+1)}} \cdot \frac {\vec{w}^{\ell}}  {\norm{\vec{w}^{\ell}}_{1+1/\ell}}$$
Interestingly, the Partial Power allocation rule bears a resemblance to proper scoring rules in \cite{GR07}. For simplicity of notation, whenever we use $\norm{\vec{v}}$, without an index denoting the order of the norm, we refer to the $(\ell+1)/\ell$-th norm $\norm{\vec{v}}_{1+1/\ell}$ of vector $\vec{v}$.

\begin{lemma}\label{l:optimization}
For all integers $\ell, r \geq 1$, the allocation rule $f^{(\ell,r)}(\vec{w})$ is the solution to the optimization problem $\max_{\vec{z} \in Z_{\ell, r}} \vec{w} \cdot \vec{z}$, where
$$Z_{\ell, r} = \left\{\vec{z} \in \reals^m_{\geq 0}\,:\,\norm{\vec{z}}_{1+1/\ell} \le (1 - 1/r) m^{-1/(\ell+1)} \right\}$$
\end{lemma}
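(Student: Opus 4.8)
This is a dual-norm (Hölder) extremal problem. Write $p = 1+1/\ell = (\ell+1)/\ell$ for the order of the norm appearing in the constraint, and let $q = \ell+1$ be its Hölder conjugate, so that $1/p+1/q = 1$. The plan is to show that $\vec w \cdot \vec z$ over the nonnegative ``ball'' $Z_{\ell,r} = \{\vec z \in \reals^m_{\geq 0} : \norm{\vec z}_p \le (1-1/r)m^{-1/(\ell+1)}\}$ is maximized exactly at the vector pointing in the Hölder-dual direction of $\vec w$, and that this vector is $f^{(\ell,r)}(\vec w)$.

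First I would establish the upper bound. For any $\vec z \in Z_{\ell,r}$, since $\vec w, \vec z \ge 0$, Hölder's inequality gives $\vec w \cdot \vec z = \sum_{j} w_j z_j \le \norm{\vec w}_{q}\,\norm{\vec z}_{p}$, and then the constraint $\norm{\vec z}_p \le (1-1/r)m^{-1/(\ell+1)}$ yields $\vec w \cdot \vec z \le (1-1/r)\,m^{-1/(\ell+1)}\,\norm{\vec w}_{\ell+1}$. Next I would check that $f^{(\ell,r)}(\vec w)$ attains this bound. It is coordinatewise nonnegative, and its $p$-norm equals $(1-1/r)m^{-1/(\ell+1)}\,\norm{\vec w^\ell}_{1+1/\ell}/\norm{\vec w^\ell}_{1+1/\ell} = (1-1/r)m^{-1/(\ell+1)}$, so $f^{(\ell,r)}(\vec w) \in Z_{\ell,r}$ (on the boundary). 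Plugging it into the objective and using $\sum_j w_j \cdot w_j^\ell = \sum_j w_j^{\ell+1} = \norm{\vec w}_{\ell+1}^{\ell+1}$ and $\norm{\vec w^\ell}_{1+1/\ell} = \norm{\vec w}_{\ell+1}^\ell$, one gets $\vec w \cdot f^{(\ell,r)}(\vec w) = (1-1/r)m^{-1/(\ell+1)}\norm{\vec w}_{\ell+1}$, matching the upper bound. Hence $f^{(\ell,r)}(\vec w)$ is optimal. (Equivalently, one can note that the Hölder equality condition ``$z_j^p$ proportional to $w_j^q$'' is satisfied, since $z_j \propto w_j^\ell$ forces $z_j^p \propto w_j^{\ell p} = w_j^{\ell+1} = w_j^q$.)

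For uniqueness of the maximizer (which is what the phrasing ``is the solution'' / ``$\arg\max$'' wants, and what later lemmas implicitly use), I would invoke strict convexity of the $p$-norm ball for $p>1$ together with the equality case of Hölder, or simply argue directly: equality in Hölder for $p,q \in (1,\infty)$ forces $z_j = c\,w_j^{q/p} = c\,w_j^\ell$ for a single constant $c \ge 0$, and then maximality of $\vec w\cdot\vec z$ forces the constraint to be tight, pinning down $c$.

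The only genuinely delicate point, and the one I would handle explicitly, is the degenerate case $\vec w = \vec 0$ (and, if the domain is taken as $\reals^m_{\geq 0}$ rather than $\reals^m_+$, coordinates with $w_j = 0$): there every $\vec z$ gives objective $0$, the formula for $f^{(\ell,r)}$ is a $0/0$ expression, and one should either restrict attention to $\vec w$ with $\norm{\vec w}_{\ell+1}>0$ (consistent with the convention, stated elsewhere in the paper, of working on $\reals_+^m$) or adopt the natural convention that the $\vec 0$-weight allocation is arbitrary / null. For coordinates with $w_j=0$ the argument above already gives the optimal $z_j=0$, matching $f^{(\ell,r)}_j(\vec w)=0$, so no separate treatment is needed there. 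Apart from this bookkeeping, the proof is a one-line application of Hölder plus its equality condition.
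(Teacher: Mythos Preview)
Your proof is correct and takes a slightly different route from the paper's. The paper argues via first-order optimality: since $Z_{\ell,r}$ is smooth and strictly convex, the maximizer is the boundary point whose outward normal is parallel to $\vec w$, i.e., $\nabla\norm{\vec p}_{1+1/\ell}=\lambda\vec w$; computing the gradient gives $\vec p\propto\vec w^{\ell}$, and normalizing to the boundary yields $f^{(\ell,r)}(\vec w)$. You instead invoke H\"older's inequality with conjugate exponents $p=(\ell+1)/\ell$ and $q=\ell+1$ to get the upper bound $(1-1/r)m^{-1/(\ell+1)}\norm{\vec w}_{\ell+1}$, then verify that $f^{(\ell,r)}(\vec w)$ is feasible and attains it, reading off uniqueness from the equality case of H\"older. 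The two arguments are essentially dual views of the same fact; your approach is a touch more elementary (no differentiation needed) and is cleaner about the nonnegativity constraint and the degenerate case $\vec w=\vec 0$, which the paper's gradient argument glosses over.
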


\begin{proof}
The range $Z_{\ell, r}$ is smooth and strictly convex. Therefore, the solution to the optimization problem $\max_{\vec{z} \in Z_{\ell, r}} \vec{w} \cdot \vec{z}$ is the extreme point $\vec p \in Z_{\ell, r}$ in the direction of $\vec{w}$. This point $\vec{p}$ satisfies  $\nabla \norm{\vec{p}} = \lambda \vec{w}$ for some $\lambda \in \reals$ which may depend on $\vec{w}$.
Moreover, we observe that $\nabla \norm{\vec{p}} = (1+1/\ell) \vec{p}^{1/\ell} / \norm{\vec{p}}$. Hence, since $\vec{p}$ is the extreme point of $Z_{\ell, r}$ in the direction of $\vec{w}$, we have that $\vec{p} = \lambda \vec{w}^{\ell}$. Also, since $\vec{p}$ is a point on the boundary of $Z_{\ell, r}$, $\norm{\vec{p}} = (1 - 1/r) m^{-1/(\ell+1)}$, which gives the desired closed form for $f^{({\ell, r})}(\vec{w})$.
\qed\end{proof}

Lemma~\ref{l:optimization} and Lemma~\ref{l:participation} imply that the Partial Power allocation rule is MIDR and satisfies participation. The following provides an upper bound on the approximation ratio of Partial Power. The intuition behind the approximation ratio is that due to the definition of the range $Z_{\ell,r}$, the worst case for Partial Power happens when $\vec{w}$ has a single $1$ and all other components $0$. Then, the maximum social welfare is $1$, while the expected social welfare of the partial allocation achieved by partial power is $\frac{1-1/r}{m^{1/(\ell+1)}}$.

\begin{lemma}\label{l:appro_ppower}
For all integers $\ell, r \geq 1$, the allocation rule $f^{(\ell,r)}(\vec{w})$
achieves an approximation ratio of $(1 - 1/r)m^{- 1 / (\ell + 1)}$ for the social welfare.
\end{lemma}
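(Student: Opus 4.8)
The plan is to reduce the claim to a short closed-form computation of the expected social welfare of $f^{(\ell,r)}$, followed by a standard $p$-norm comparison. Fix an arbitrary weight vector $\vec{w} \in \reals^m_{\geq 0}$ with $\vec{w} \neq \vec{0}$ (the case $\vec{w} = \vec{0}$ being trivial, both sides of the approximation inequality being $0$). The expected social welfare obtained by the partial allocation on input $\vec{w}$ is $\vec{w} \cdot f^{(\ell,r)}(\vec{w})$. Using the closed form $f^{(\ell,r)}(\vec{w}) = (1-1/r)\,\vec{w}^{\ell}/(m^{1/(\ell+1)}\,\norm{\vec{w}^{\ell}}_{1+1/\ell})$ and the identity $\vec{w} \cdot \vec{w}^{\ell} = \sum_j w_j^{\ell+1} = |\vec{w}^{\ell+1}|$, this equals $(1-1/r)\,|\vec{w}^{\ell+1}|/(m^{1/(\ell+1)}\,\norm{\vec{w}^{\ell}}_{1+1/\ell})$.

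First I would rewrite the numerator and the denominator in terms of the single quantity $\norm{\vec{w}}_{\ell+1}$. One has $|\vec{w}^{\ell+1}| = \sum_j w_j^{\ell+1} = \norm{\vec{w}}_{\ell+1}^{\ell+1}$, and, tracking the nested exponents, $\norm{\vec{w}^{\ell}}_{1+1/\ell} = \bigl(\sum_j (w_j^{\ell})^{(\ell+1)/\ell}\bigr)^{\ell/(\ell+1)} = \bigl(\sum_j w_j^{\ell+1}\bigr)^{\ell/(\ell+1)} = \norm{\vec{w}}_{\ell+1}^{\ell}$. Substituting, the expected social welfare simplifies to $(1-1/r)\,m^{-1/(\ell+1)}\,\norm{\vec{w}}_{\ell+1}$. (This is exactly the "curved" analogue of the computation for $\Pow^\ell$, where the denominator is $|\vec{w}^{\ell}| = \norm{\vec{w}}_\ell^{\ell}$ instead.)

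Next, since the optimal social welfare on the profile with weight vector $\vec{w}$ is $\norm{\vec{w}}_\infty$, the approximation ratio attained on this instance is $(1-1/r)\,m^{-1/(\ell+1)}\,\norm{\vec{w}}_{\ell+1}/\norm{\vec{w}}_\infty$. By monotonicity of $p$-norms, $\norm{\vec{w}}_{\ell+1} \ge \norm{\vec{w}}_\infty$, so this ratio is at least $(1-1/r)\,m^{-1/(\ell+1)}$; as $\vec{w}$ was arbitrary, $f^{(\ell,r)}$ satisfies $\vec{w}\cdot f^{(\ell,r)}(\vec{w}) \ge (1-1/r)\,m^{-1/(\ell+1)}\,\norm{\vec{w}}_\infty$, i.e.\ it has approximation ratio $(1-1/r)m^{-1/(\ell+1)}$ for the social welfare. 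For tightness, taking $\vec{w} = \vec{e}_1$ (a single coordinate equal to $1$, the rest $0$) gives $\norm{\vec{w}}_{\ell+1} = \norm{\vec{w}}_\infty = 1$, so the bound is met with equality, matching the intuition that the worst case occurs at a single-nonzero-coordinate weight vector.

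The argument is entirely routine; there is no real obstacle. The only places warranting a little care are the exponent bookkeeping that turns $\norm{\vec{w}^{\ell}}_{1+1/\ell}$ into $\norm{\vec{w}}_{\ell+1}^{\ell}$, and remembering the correct direction of the norm inequality $\norm{\cdot}_{\ell+1} \ge \norm{\cdot}_\infty$.
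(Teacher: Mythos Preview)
Your proof is correct and follows essentially the same route as the paper: compute $\vec{w}\cdot f^{(\ell,r)}(\vec{w})$, collapse the exponents to obtain $(1-1/r)\,m^{-1/(\ell+1)}\,\norm{\vec{w}}_{\ell+1}$, and finish with $\norm{\vec{w}}_{\ell+1}\ge\norm{\vec{w}}_\infty$. The only cosmetic difference is that the paper writes the intermediate expression as $(1-1/r)\,\norm{\vec{w}^\ell}_{1+1/\ell}^{1/\ell}/(m^{1/(\ell+1)}\norm{\vec{w}}_\infty)$ before identifying $\norm{\vec{w}^\ell}_{1+1/\ell}^{1/\ell}=\norm{\vec{w}}_{\ell+1}$, whereas you simplify both numerator and denominator separately; your added tightness remark at $\vec{w}=\vec{e}_1$ is a nice bonus not spelled out in the paper's proof.
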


\begin{proof}
First, we recall that we use $\norm{\vec{w}}$ to denote the $(\ell+1)/\ell$-th norm $\norm{\vec{w}}_{1+1/\ell}$ of vector $\vec{w}$. The optimal social welfare is $\max_{j \in O} w_j = \norm{\vec{w}}_{\infty}$. Thus, the approximation ratio of $f^{(\ell,r)}(\vec{w})$ is
\[ \frac{\vec{w} \cdot f^{(\ell,r)}(\vec{w})} {\norm{\vec{w}}_{\infty}} = \frac{(1 - 1/r) \,\vec{w} \cdot \vec{w}^{\ell}}{m^{1/(\ell + 1)}\,\norm{\vec{w}}_{\infty} \norm{\vec{w}^{\ell}}} \]

We observe that $\vec{w} \cdot \vec{w}^{\ell} = \sum_{j\in O} w_j^{\ell + 1} = \norm{\vec{w}^{\ell}}^{(\ell + 1)/\ell}$\,. Therefore,
\[ \frac{\vec{w} \cdot f^{(\ell,r)}(\vec{w})} {\norm{\vec{w}}_{\infty}} = \frac{(1 - 1/r)\,\norm{\vec{w}^{\ell}}^{1/\ell}}{m^{1/(\ell + 1)}\, \norm{\vec{w}}_{\infty}} \]
Now, we observe that for every $\ell \geq 1$,
\[ \norm{\vec{w}^{\ell}}^{1/\ell} =
\left(\sum_{j\in O} \left(w_j^{\ell}\right)^{\frac{\ell+1}{\ell}}\right)%
^{\frac{\ell}{\ell+1}\cdot\frac{1}{\ell}} =
\left(\sum_{j\in O} w_j^{\ell+1}\right)^{\ell+1} = \norm{\vec{w}}_{\ell + 1} \geq
\norm{\vec{w}}_{\infty} \]
Therefore the approximation ratio of Partial Power is $(1 - 1/r) m^{-1/(\ell + 1)}$\,.
\qed\end{proof}

\subsection{The Implementation of Partial Power by Mechanism~\ref{alg:partial_power}}
\label{s:app:ppower}

The analysis in Section~\ref{s:app:partial} focuses on the properties of the Partial Power allocation rule. In this section, we establish that the implementation of Partial Power by Mechanism~\ref{alg:partial_power} with selective verification is robust. Throughout this section, we fix a valuation profile $\vec{x}$ and a verification vector $\vec{s}$. We always write $\PPow^{\ell,r}$, instead of $\PPow^{\ell,r}(\matr{x}, \vec{s})$, for simplicity. Moreover, we let $\vec{w} \equiv \vec{w}(\vec{x}) = \sum_{i=1}^n \vec{x}_i$ denote the outcome weight vector induced by $\vec{x}$.

In the next three lemmas, we establish the feasibility of $\PPow^{\ell,r}$, for any integers $\ell, r \geq 1$, by showing that whenever $\PPow^{\ell,r}$ results in $\bot$, we can allocate a probability $p_j$ each to outcome $j$ so that $\PPow^{\ell,r}$ is robust. We recall that the probability $p_j$ of each outcome $j \in O$ is:
\begin{equation}\label{eq:ppower_pj}
 p_j = \frac{f^{(\ell,r)}_j(\vec{w}_T) - \Prob[\PPow^{\ell,r} = j\,|\,\PPow^{\ell,r} \neq \bot] \, \Prob[\PPow^{\ell,r} \neq \bot]}{\Prob[\PPow^{\ell,r} = \bot]} \,,
\end{equation}
where $T$ is the set of truthful agents in $\matr{x}$. Note that for each outcome $j$,
the probability $p_j$ is defined so that when the truthful players have weight $\vec{w}_T$, the unconditional probability of outcome $j$ in $\PPow^{\ell,r}$ is
\[
  \Prob[\PPow^{\ell,r} = j\,|\,\PPow^{\ell,r} \neq \bot] \,\Prob[\PPow^{\ell,r} \neq \bot] + p_j\,{\Prob[\PPow^{\ell,r} = \bot]} = {f^{(\ell,r)}_j(\vec w_T)} \]
Therefore, with these probabilities $p_j$ in the action $\bot$, $\PPow^{\ell,r}$ becomes robust.

To establish that these probabilities are indeed feasible, we show that for all outcomes $j$, $p_j \geq 0$, and that $\sum_{j \in O} p_j \leq 1$. The non-negativity of $p_j$'s follows rather easily from the fact that excluding some misreporting agents from $\PPow^{\ell,r}$ can only increase the probability that $\PPow^{\ell,r}$ results in outcome $j$ (see also Lemma~\ref{l:pj_positive}). The upper bound on $\sum_{j \in O} p_j$ is more difficult to establish. The intuition is that the additional verification in steps~1-4 of $\PPow^{\ell,r}$ makes the probability that some misreporting agent is caught large enough. Then, the required ``corrections'' in the unconditional probability distribution of $\PPow^{\ell,r}$, which are implemented by the probabilities $p_j$, should not be large. Hence, the sum of $p_j$'s can be upper bounded by $1$ (see also Lemma~\ref{l:sump}).

To simplify the notation, we let $\bot_1$ denote the case where $\PPow^{\ell,r}$ results in $\bot$ at step~4, and $\bot_2$ for the case where $\PPow^{\ell,r}$ results in $\bot$ at step~9. Using this notation, we note that the probability of not revealing any liars in steps~1-4 is
\[
 \Prob\left[ \PPow^{\ell,r} \neq \bot_1 \right] =
  \left( \frac{\left|\vec{w}_T^{\ell + 1}\right|}
  {\left|\vec{w}^{\ell + 1}\right|} \right)^{\!\!r}
\]
Moreover, for the steps~5-10 of $\PPow^{\ell,r}$, we have that the probability of outcome $j$ is:
$$\Prob[\PPow^{\ell,r} = j\,|\,\PPow^{\ell,r} \neq \bot_1] = \frac{1 - 1/r}{m^{1/(\ell + 1)}} \cdot \frac{(w_T(j))^\ell}{\norm{\vec{w}^{\ell}}_{1+1/\ell}}\,,$$
because outcome $j$ is selected if and only if the term selected in step~6 contains only truthful agents.

Using these properties, we can show that $p_j \geq 0$. This is an immediate consequence of the following.

\begin{lemma}\label{l:pj_positive}
For all integers $\ell, r \geq 1$ and all outcomes $j$,
\[
 f^{(\ell,r)}_j(\vec{w}_T) \geq
 \Prob[\PPow^{\ell,r} = j\,|\,\PPow^{\ell,r} \neq \bot] \,
 \Prob[\PPow^{\ell,r} \neq \bot]
\]
\end{lemma}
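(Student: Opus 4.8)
The plan is to unfold the right-hand side using the two probability identities established just above the lemma, and then reduce the inequality to an elementary monotonicity fact about the $\ell_p$-norms of $\vec w$ and $\vec w_T$.

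First I would rewrite the right-hand side as a single event probability: $\Prob[\PPow^{\ell,r} = j\,|\,\PPow^{\ell,r}\neq\bot]\,\Prob[\PPow^{\ell,r}\neq\bot]$ is the probability that $\PPow^{\ell,r}$ does \emph{not} trigger the special $\bot$-action and returns outcome $j$ along steps~5--10. On this event $\bot_1$ does not occur, and returning an outcome $j \in O$ at steps~5--10 precludes $\bot_2$; therefore this probability factors as $\Prob[\PPow^{\ell,r}\neq\bot_1]\cdot\Prob[\PPow^{\ell,r}=j\,|\,\PPow^{\ell,r}\neq\bot_1]$. Substituting the identities $\Prob[\PPow^{\ell,r}\neq\bot_1] = (|\vec w_T^{\ell+1}|/|\vec w^{\ell+1}|)^r$ and $\Prob[\PPow^{\ell,r}=j\,|\,\PPow^{\ell,r}\neq\bot_1] = \tfrac{1-1/r}{m^{1/(\ell+1)}}\cdot\tfrac{w_T(j)^\ell}{\norm{\vec w^\ell}_{1+1/\ell}}$, the right-hand side of the lemma becomes
\[
\left(\frac{|\vec w_T^{\ell+1}|}{|\vec w^{\ell+1}|}\right)^{\!r}\cdot\frac{1-1/r}{m^{1/(\ell+1)}}\cdot\frac{w_T(j)^\ell}{\norm{\vec w^\ell}_{1+1/\ell}},
\]
whereas its left-hand side is, by definition, $f^{(\ell,r)}_j(\vec w_T) = \tfrac{1-1/r}{m^{1/(\ell+1)}}\cdot\tfrac{w_T(j)^\ell}{\norm{\vec w_T^\ell}_{1+1/\ell}}$.

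Next I would dispose of the case $w_T(j) = 0$ (both sides vanish, since $\ell\ge 1$; when additionally $\vec w_T = \vec 0$ we read $f^{(\ell,r)}(\vec 0)$ as the all-null allocation). When $w_T(j) > 0$, dividing both sides by the common positive factor $\tfrac{1-1/r}{m^{1/(\ell+1)}}w_T(j)^\ell$ reduces the lemma to the purely geometric claim
\[
\frac{\norm{\vec w^\ell}_{1+1/\ell}}{\norm{\vec w_T^\ell}_{1+1/\ell}} \ \geq\ \left(\frac{|\vec w_T^{\ell+1}|}{|\vec w^{\ell+1}|}\right)^{\!r}.
\]
Here I would use that for any nonnegative vector $\vec v$, $\norm{\vec v^\ell}_{1+1/\ell} = \bigl(\sum_q (v_q^\ell)^{(\ell+1)/\ell}\bigr)^{\ell/(\ell+1)} = \bigl(\sum_q v_q^{\ell+1}\bigr)^{\ell/(\ell+1)} = |\vec v^{\ell+1}|^{\ell/(\ell+1)}$, so, writing $\rho = |\vec w^{\ell+1}|/|\vec w_T^{\ell+1}|$, the claim is equivalent to $\rho^{\ell/(\ell+1)} \geq \rho^{-r}$, i.e.\ $\rho^{\,\ell/(\ell+1)+r} \geq 1$.

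To finish, I would invoke the only structural input needed: since all valuations are nonnegative, $\vec w = \sum_i \vec x_i \geq \sum_{i\in T}\vec x_i = \vec w_T$ coordinatewise, so $|\vec w^{\ell+1}| = \sum_q w_q^{\ell+1} \geq \sum_q w_T(q)^{\ell+1} = |\vec w_T^{\ell+1}|$ and hence $\rho \geq 1$; as the exponent $\ell/(\ell+1)+r$ is positive, $\rho^{\,\ell/(\ell+1)+r}\geq 1$, completing the proof. I do not anticipate a genuine obstacle: the only points requiring care are the event-bookkeeping that writes the right-hand side as the ``normal-path'' success probability (using that outputting $j\in O$ at steps~5--10 rules out both $\bot_1$ and $\bot_2$) and the identity $\norm{\vec v^\ell}_{1+1/\ell} = |\vec v^{\ell+1}|^{\ell/(\ell+1)}$; everything else is routine.
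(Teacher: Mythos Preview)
Your proof is correct and follows essentially the same route as the paper: both rewrite the right-hand side as $\Prob[\PPow^{\ell,r}\neq\bot_1]\cdot\Prob[\PPow^{\ell,r}=j\mid\PPow^{\ell,r}\neq\bot_1]$ and then appeal to the coordinatewise inequality $\vec w\ge\vec w_T$. The paper is slightly more direct in that it immediately bounds $\Prob[\PPow^{\ell,r}\neq\bot_1]\le 1$ and drops that factor, so the comparison reduces at once to $\norm{\vec w^\ell}_{1+1/\ell}\ge\norm{\vec w_T^\ell}_{1+1/\ell}$, whereas you carry the factor $\rho^{-r}$ through and prove the (stronger but unneeded) inequality $\rho^{\ell/(\ell+1)+r}\ge 1$.
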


\begin{proof}
We observe that the following holds for the probability on the rhs:
\begin{align*}
\Prob[\PPow^{\ell,r} = j\,|\,\PPow^{\ell,r} \neq \bot] \, \Prob[\PPow^{\ell,r} \neq \bot]
&\le \Prob[\PPow^{\ell,r} = j\,|\,\PPow^{\ell,r} \neq \bot_1] \\
&= \frac{1 - 1/r}{m^{1/(\ell + 1)}} \cdot \frac{(w_T(j))^\ell}{\norm{\vec{w}^{\ell}}_{1+1/\ell}}\\
& \le \frac{1 - 1/r}{m^{1/(\ell + 1)}} \cdot \frac{(w_T(j))^\ell}{\norm{\vec{w}_T^{\ell}}_{1+1/\ell}} = f^{(\ell,r)}_j(\vec{w}_T)
\end{align*}
\qed\end{proof}

To establish that $\sum_j p_j \le 1$, we need the following technical lemma, which holds for all integers $\ell, r \geq 1$. In the following lemma and in its proof, we use $f(\vec{w})$, instead of $f^{(\ell,r)}(\vec{w})$, and $\norm{\vec{v}}$, instead of $\norm{\vec{v}}_{1+1/\ell}$, for simplicity.

\begin{lemma}\label{l:null_bound}
For all vectors $\vec{v}, \vec{w} \in \reals_{\geq 0}^m$
\[
 \abs {f(\vec{w})} - \abs { f(\vec{w} + \vec{v}) } \le
  1 - \frac{\left|\vec{w}^{\ell + 1}\right|}{\left|(\vec{w}+\vec{v})^{\ell + 1}\right|}
\]
\end{lemma}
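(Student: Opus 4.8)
The plan is to collapse the statement to an elementary one-variable inequality by using the closed form of $f$. By Lemma~\ref{l:optimization}, $f(\vec w) = f^{(\ell,r)}(\vec w) = c\,\vec w^{\ell}/\norm{\vec w^{\ell}}_{1+1/\ell}$ with $c = (1-1/r)m^{-1/(\ell+1)}$, so writing $S_k(\vec w) := |\vec w^{k}| = \sum_j w_j^k$ and using $\norm{\vec w^{\ell}}_{1+1/\ell} = S_{\ell+1}(\vec w)^{\ell/(\ell+1)}$, I would first record the identity $|f(\vec w)| = c\,S_\ell(\vec w)\,/\,S_{\ell+1}(\vec w)^{\ell/(\ell+1)}$. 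We may assume $\vec w \neq \vec 0$, since otherwise $f(\vec 0) \in Z_{\ell,r}$ forces $|f(\vec 0)| \le 1-1/r < 1$ and the claim is trivial (the right-hand side equals $1$). Set $u := \vec w + \vec v$, so that $\vec 0 \le \vec w \le u$ coordinatewise; put $t := S_{\ell+1}(\vec w)/S_{\ell+1}(u) \in [0,1]$, which is precisely the fraction subtracted on the right-hand side, and $a := S_\ell(\vec w)/S_{\ell+1}(\vec w)^{\ell/(\ell+1)}$, so that $|f(\vec w)| = c\,a$.

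The two ingredients needed are: (i) a lower bound $|f(u)| \ge c\,a\,t^{\ell/(\ell+1)}$, obtained because $u \ge \vec w$ and $x \mapsto x^\ell$ is nondecreasing on $\reals_{\ge 0}$ give $S_\ell(u) \ge S_\ell(\vec w)$, whence $|f(u)| = c\,S_\ell(u)/S_{\ell+1}(u)^{\ell/(\ell+1)} \ge c\,S_\ell(\vec w)/S_{\ell+1}(u)^{\ell/(\ell+1)} = c\,a\,\bigl(S_{\ell+1}(\vec w)/S_{\ell+1}(u)\bigr)^{\ell/(\ell+1)} = c\,a\,t^{\ell/(\ell+1)}$; and (ii) the bound $a \le m^{1/(\ell+1)}$, which is just $\norm{\vec w}_\ell \le m^{1/(\ell(\ell+1))}\norm{\vec w}_{\ell+1}$ (the same inequality used in the proof of Lemma~\ref{l:power-approx}) raised to the power $\ell$, so that $c\,a \le c\,m^{1/(\ell+1)} = 1-1/r \le 1$. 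Combining these, and using $1 - t^{\ell/(\ell+1)} \ge 0$,
\[ |f(\vec w)| - |f(u)| \;\le\; c\,a - c\,a\,t^{\ell/(\ell+1)} \;=\; c\,a\,\bigl(1 - t^{\ell/(\ell+1)}\bigr) \;\le\; (1-1/r)\bigl(1 - t^{\ell/(\ell+1)}\bigr) \;\le\; 1 - t^{\ell/(\ell+1)} \;\le\; 1 - t, \]
where the last step uses $t^{\ell/(\ell+1)} \ge t$ for $t \in [0,1]$ (as $\ell/(\ell+1) < 1$). Since $1 - t = 1 - |\vec w^{\ell+1}|/|(\vec w + \vec v)^{\ell+1}|$, this is exactly the claimed inequality.

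The only subtle point is ingredient (i): one must resist estimating $|f(\vec w)|$ and $|f(u)|$ in isolation, because the crude bounds $|f(\vec w)| \le 1-1/r$ and $|f(u)| \ge c$ leave a gap of essentially $1-1/r-c$, whereas the right-hand side can be arbitrarily close to $0$ (take $\vec v$ tiny). The fix is to keep $\vec w$ and $u$ coupled through the single scalar $t = S_{\ell+1}(\vec w)/S_{\ell+1}(u)$, so that $\vec v$ small forces $t$ near $1$ and hence $|f(u)|$ close to $|f(\vec w)|$; everything after that is the short chain of elementary inequalities above. I would also remark that the remaining boundary cases cause no trouble ($\vec w = \vec 0$ is handled above, and $\vec w + \vec v = \vec 0$ makes both sides degenerate).
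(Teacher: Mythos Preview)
Your proof is correct and follows essentially the same route as the paper's: both use the closed form of $f$, the norm comparison $|\vec{w}^{\ell}| \le m^{1/(\ell+1)}\norm{\vec{w}^{\ell}}_{1+1/\ell}$ (your ingredient~(ii)), the monotonicity $S_\ell(\vec{w}+\vec{v}) \ge S_\ell(\vec{w})$ (your ingredient~(i)), and the elementary inequality $t^{\ell/(\ell+1)} \ge t$ for $t \in [0,1]$. The only difference is packaging: the paper factors the difference $|f(\vec{w})|-|f(\vec{w}+\vec{v})|$ directly and bounds the two pieces, whereas you introduce the scalars $t,a,c$ and lower-bound $|f(\vec{w}+\vec{v})|$ separately before subtracting; this is cosmetic.
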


\begin{proof}
Using the definition of the Partial Power allocation $f(\vec{w})$, we obtain that:
\begin{align*}
\abs {f(\vec{w})} - \abs { f(\vec{w} + \vec{v}) } & = \frac{1 - 1/r}{m^{1/(\ell + 1)}} \left( \frac{\abs{\vec{w}^{\ell}}}{\norm{\vec{w}^{\ell}}} - \frac{\abs{(\vec{w} + \vec{v})^{\ell}}}{\norm{(\vec{w} + \vec{v})^{\ell}}} \right) \\ & =
\frac{1 - 1/r}{m^{1/(\ell + 1)}}  \cdot \frac{\abs{\vec{w}^{\ell}}}{ \norm{\vec{w}^{\ell}}} \left( 1 - \frac{\norm{\vec{w}^{ \ell}}}{\abs{\vec{w}^{ \ell}}} \frac{\abs{(\vec{w} + \vec{v})^{\ell}}}{\norm{(\vec{w} + \vec{v})^{\ell}}} \right)
\end{align*}

Since $\abs{\vec{w}^{\ell}} \leq m^{1/(\ell + 1)} \norm{\vec{w}^{\ell}}$ and $\abs{(\vec{w} + \vec{v})^{\ell}} \geq \abs{\vec{w}^{\ell}}$, we obtain that
\begin{align*}
\abs {f(\vec{w})} - \abs { f(\vec{w} + \vec{v}) } & \le
(1-1/r) \left( 1 - \frac{\norm{\vec{w}^{\ell}}}{\norm{\left(\vec{w} + \vec{v}\right)^{\ell}}} \right) \\
& \le (1-1/r) \left( 1 - \frac{\norm{\vec{w}^{\ell}}^{\frac{\ell + 1}{\ell}}}{\norm{\left(\vec{w} + \vec{v}\right)^{\ell}}^{\frac{\ell + 1}{\ell}}} \right)
 \leq 1 - \frac{\left|\vec{w}^{\ell + 1}\right|}{\left|(\vec{w} + \vec{v})^{\ell + 1}\right|}
\end{align*}
For the last inequality, we use that for the $(\ell+1)/\ell$-th norm, $\norm{\vec{w}^{\ell}}^{(\ell + 1)/\ell} = \left|\vec{w}^{\ell + 1}\right|$.
\qed \end{proof}

We next show that $\sum_{j \in O} p_j \leq 1$, which implies the feasibility of the implementation of Partial Power by Mechanism~\ref{alg:partial_power}. In fact, $\sum_{j \in O} p_j \leq 1$ is an immediate consequence of the following. In the statement of the lemma and in its proof, we use $f(\vec{w})$, instead of $f^{(\ell,r)}(\vec{w})$, and $\norm{\vec{v}}$, instead of $\norm{\vec{v}}_{1+1/\ell}$, for simplicity. Also, we recall that since we regard the probability distribution of the allocation rule $f$ as a vector over outcomes, $|f(\vec{w})| = \sum_{j \in O} f_j(\vec{w})$ is the probability that $f$ results in some outcome in $O$.

\begin{lemma}\label{l:sump}
For all integers $\ell, r \geq 1$, the following holds:
$$ \sum_{j \in O} \left( {f_j(\vec{w}_T) - \Prob[\PPow^{\ell,r} = j ]} \right)  \le \Prob[\PPow^{\ell,r} = \bot] $$
\end{lemma}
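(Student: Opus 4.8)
\emph{Proof plan.}\ The plan is to recast the inequality of Lemma~\ref{l:sump} in a transparent form, read off the probability that $\PPow^{\ell,r}$ outputs the null outcome directly from the mechanism, and then finish with a short case analysis built on Lemma~\ref{l:null_bound} and two elementary bounds. Throughout, $f$ abbreviates $f^{(\ell,r)}$ and $\PPow^{\ell,r}$ abbreviates $\PPow^{\ell,r}(\matr{x},\vec{s})$. Writing $\Prob[\PPow^{\ell,r}\in O]=\sum_{j\in O}\Prob[\PPow^{\ell,r}=j]$ and using that $\PPow^{\ell,r}$ returns either an outcome in $O$, the null outcome, or $\bot$, and nothing else, we have $\Prob[\PPow^{\ell,r}=\bot]=1-\Prob[\PPow^{\ell,r}\in O]-\Prob[\PPow^{\ell,r}=\mathrm{null}]$, so the claim of Lemma~\ref{l:sump} is equivalent to
\[
 \Prob[\PPow^{\ell,r}=\mathrm{null}]\ \le\ 1-\abs{f(\vec{w}_T)}\,.
\]
(This is exactly the statement that the residual probability $1-\sum_{j}p_j$ attached to the null outcome in the action $\bot$ is non-negative: the target distribution $f(\vec{w}_T)$ gives weight $1-\abs{f(\vec{w}_T)}$ to the null outcome, of which $\PPow^{\ell,r}$ already produces $\Prob[\PPow^{\ell,r}=\mathrm{null}]$ at step~5.) So it suffices to prove this inequality.

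Next I would evaluate $\Prob[\PPow^{\ell,r}=\mathrm{null}]$ from Mechanism~\ref{alg:partial_power}. The null outcome is produced only at step~5, which is reached exactly when no misreporting agent is caught in steps~1--4; as already recorded above, this happens with probability $\rho^r$, where $\rho\equiv\abs{\vec{w}_T^{\ell+1}}/\abs{\vec{w}^{\ell+1}}$ is the probability that a single step-1 tuple consists only of truthful agents. Conditioned on reaching step~5, the null outcome is returned with the independent probability $1-\abs{f(\vec{w})}$. Hence $\Prob[\PPow^{\ell,r}=\mathrm{null}]=\rho^r\bigl(1-\abs{f(\vec{w})}\bigr)$, and, writing $y\equiv\abs{f(\vec{w})}$, the goal reduces to $\rho^r(1-y)\le 1-\abs{f(\vec{w}_T)}$.

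I would finish with three facts and a case split. (a) From the closed form of $f$ and the bound $\abs{\vec{v}^{\ell}}\le m^{1/(\ell+1)}\norm{\vec{v}^{\ell}}_{1+1/\ell}$ (already used in the proof of Lemma~\ref{l:null_bound}) one gets $\abs{f(\vec{v})}\le 1-\tfrac1r$ for every $\vec{v}$; in particular $y\le 1-\tfrac1r$ and $1-\abs{f(\vec{w}_T)}\ge\tfrac1r$. (b) Lemma~\ref{l:null_bound}, applied with its ``$\vec{w}$'' equal to $\vec{w}_T$ and its ``$\vec{v}$'' equal to $\vec{w}-\vec{w}_T\ge\vec{0}$ (the total reported weight of the misreporting agents), yields $\abs{f(\vec{w}_T)}-y\le 1-\rho$, i.e.\ $\rho-y\le 1-\abs{f(\vec{w}_T)}$. (c) For $\rho\in[0,1]$, $r\rho^r\le 1+\rho+\cdots+\rho^{r-1}$, since $\rho^r\le\rho^k$ for each $0\le k\le r-1$; after clearing denominators this is precisely $1-\tfrac1{r\rho^r}\le\tfrac{\rho-\rho^r}{1-\rho^r}$ for $\rho\in(0,1)$. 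Now split cases. If $\rho^r(1-y)\le\tfrac1r$, then we are done at once by (a). Otherwise $1-y>\tfrac1{r\rho^r}$, so by (c) $y<\tfrac{\rho-\rho^r}{1-\rho^r}$, which multiplies out to $\rho^r(1-y)<\rho-y$, and this is at most $1-\abs{f(\vec{w}_T)}$ by (b). The degenerate case $\rho=1$ forces $\vec{w}_T=\vec{w}$ (since $\vec{w}_T\le\vec{w}$), so both sides of the goal coincide; and $\rho=0$ makes the left-hand side $0$.

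The crux, I expect, is seeing that the crude bound $\abs{f}\le 1-\tfrac1r$ alone does \emph{not} suffice, and that the factor $\rho^r$ --- rather than $\rho$ --- produced by the $r$ redundant tuples in steps~1--4 of Mechanism~\ref{alg:partial_power} is genuinely needed (for $r=1$ the allocation $f=f^{(\ell,1)}$ is identically $\vec{0}$ and the statement is degenerate). The case split above is precisely how the redundancy is exploited: either $\rho^r$ is already small enough to be dominated by $\tfrac1r$, or $\rho$ is so close to $1$ that $\abs{f(\vec{w})}$ cannot drop much below $1-\tfrac1r$ and Lemma~\ref{l:null_bound} takes over. Everything else is routine manipulation of the closed form of $f$ and of $(\ell+1)/\ell$-norms.
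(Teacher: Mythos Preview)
Your proposal is correct and uses exactly the same ingredients as the paper: the rewrite to $\rho^r(1-\abs{f(\vec w)})\le 1-\abs{f(\vec w_T)}$, Lemma~\ref{l:null_bound}, the bound $\abs{f}\le 1-\tfrac1r$, and the elementary inequality $r\rho^r\le\sum_{k=0}^{r-1}\rho^k$. The only difference is organizational: the paper avoids your case split by first substituting the Lemma~\ref{l:null_bound} bound $\abs{f(\vec w)}\ge\abs{f(\vec w_T)}-(1-\rho)$ into $\rho^r(1-\abs{f(\vec w)})$, which reduces the goal directly to $(1-\rho^r)(1-\abs{f(\vec w_T)})\ge\rho^r(1-\rho)$, and then applies $1-\abs{f(\vec w_T)}\ge\tfrac1r$ and the geometric-series bound in one line.
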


\begin{proof}
For brevity, we let $\rho = \frac{\left|\vec{w}_T^{\ell + 1}\right|}{\left|\vec{w}^{\ell + 1}\right|}$ throughout the proof. The desired inequality can be rewritten as:
\[ \Prob[\PPow^{\ell,r} = \bot] + \sum_{j \in O} \Prob[\PPow^{\ell,r} = j ] \ge \abs{f(\vec w_T)} \]

We observe that the lhs of the inequality above is $1 - \Prob[\PPow^{\ell,r} = \mathrm{null} ]$, which is equal to:
\[ 1 - \Prob[\PPow^{\ell,r} \neq \bot_1]\,\Prob\left[ \PPow^{\ell, r}(\matr{x}, \vec{s}) = \mathrm{null}\,|\,\PPow^{\ell,r} \neq \bot_1  \right] = 1 - \rho^r \cdot (1 - \abs{ f(\vec{w})} ) \]
Therefore, it suffices to show that $ 1 - \rho^r \cdot (1 - \abs{ f(\vec{w})} ) - \abs{ f(\vec{w}_T)} \ge 0$.

We observe that Lemma~\ref{l:null_bound} implies that $\abs{ f(\vec{w})} \ge \abs{ f(\vec{w}_T)} - (1-\rho)$. Hence, it suffices to show that:
\begin{align*}
1 - \rho^r \cdot (1 - \abs{ f(\vec{w}_T)} - (1-\rho)) - \abs{ f(\vec{w}_T)} & \ge 0
\Leftrightarrow \\
( 1 - \rho^r ) \cdot (1 - \abs{ f(\vec{w}_T)} ) - \rho^r (1-\rho) & \ge 0
\end{align*}
Since $\abs{ f(\vec{w}_T)} \leq 1 - 1/r$, we obtain that
\[
 \frac{1 - \rho^r}{r} - \rho^r (1-\rho) \ge 0 \Leftrightarrow \\
 \frac { 1 - \rho^r } {1-\rho}  - r \rho^r  \ge 0 \Leftrightarrow \\
 \sum_{k=0}^{r-1} \rho^k   -  r \rho^r  \ge 0
\]
The latter inequality is always true because $\rho^k \ge \rho^r$, for all $0 \le k \le r$, since $\rho \in [0,1]$.
\qed\end{proof}

We have shown that for all integers $\ell, r \geq 1$, $\PPow^{\ell, r}$ is robust. Since the Partial Power allocation satisfies the participation constraint, Lemma~\ref{l:robustness+participation} implies that $\PPow^{\ell, r}$ is truthful. To complete the proof of Theorem~\ref{th:partial-power}, we set $\ell = O(\ln m / \eps)$ and $r = O(1 / \eps)$ and get the desired approximation guarantee.

\begin{algorithm}[t]
\caption{\label{alg:exponential}The Exponential Mechanism $\Expo^\alpha(\vec{x}, \vec{s})$}
\begin{algorithmic}\normalsize
    \State Let $N$ be the set of the remaining agents and let $L \leftarrow \emptyset$
    \State pick an outcome $j \in O$, an integer $\ell \geq 0$
           and a tuple $\vec{t} \in N^{\ell}$ \\
\ \ \ \ \ \ with probability proportional to the value of the term
           $x_{t_1}(j) x_{t_2}(j) \cdots x_{t_\ell}(j) / (\alpha^\ell \ell!)$
    \For{\textbf{each} agent $i \in \vec{t}$}
        \If{$\ver(i) \neq 1$} $L \leftarrow L \union \{ i \}$ \EndIf
    \EndFor
    \If{$L \neq \emptyset$} \Return $\Expo^\alpha(\vec{x}_{-L}, \vec{s}_{-L})$
    \Else\ \Return outcome $j$
    \EndIf
\end{algorithmic}\end{algorithm}

\subsection{The Approximation Guarantee of Exponential}
\label{s:app:exponential-approximation}

As for the approximation guarantee of Exponential, we let $j$ be the outcome of maximum total weight and let $\OPT = w_j = \norm{\vec{w}}_\infty$ be $j$'s weight (and the optimal social welfare). We recall that the Exponential allocation rule with parameter $\alpha > 0$ is the solution to the following optimization problem
\[
 \vec{p}^\ast = \arg\max_{\vec{p} \in \Delta(O)}
 \sum_{i \in N} \vec{x}_i \cdot \vec{p} + \alpha H(\vec{p})\,, \]
where $H(\vec{p}) = - \sum_j p_j \ln p_j$ is the informational entropy of $\vec{p}$ (see e.g., \cite{HK12}). So, we let $\vec{p}^\ast$ be the probability distribution of $\Expo^\alpha$ and let $\vec{e}_j$ be the unit vector with a single $1$ in coordinate $j$. Then,
%
%
%
\[
 \OPT = \sum_{i \in N} \vec{x}_i \cdot \vec{e}_j + \alpha H(\vec{e}_j) \leq
 \sum_{i\in N} \vec{x}_i \cdot \vec{p}^\ast + \alpha H(\vec{p}^\ast)
 \leq \sum_{i\in N} \vec{x}_i \cdot \vec{p}^\ast + \alpha \ln m
\]
The first inequality follows from the definition of $\vec{p}^\ast$ and the second inequality holds because the maximum entropy of any distribution in $\Delta(O)$ is $\ln m$.
Therefore, the expected welfare of $\Expo^\alpha$ is $\sum_{i\in N} \vec{x}_i \cdot \vec{p}^\ast \geq \OPT - \alpha \ln m$. \qed

\end{document}